\documentclass[aps,pra,twocolumn,superscriptaddress,nofootinbib]{revtex4-2}
\usepackage[utf8]{inputenc}
\usepackage{amsmath,amssymb}
\usepackage{graphicx}
\usepackage{braket} %
\usepackage{physics} %
\usepackage{xcolor}
\usepackage{amsthm}
\usepackage{booktabs}

\usepackage[acronym,nolong]{glossaries}
\usepackage[colorlinks=true, linkcolor=blue, citecolor=blue, urlcolor=blue]{hyperref}

\usepackage{quantikz}
\usepackage{tikz}
\usepackage{subcaption}
\usepackage{float}
\usepackage{caption}
\usepackage{ragged2e}
\newacronym{qec}{QEC}{quantum error correction}
\newacronym{mlp}{MLP}{multilayer perceptron}
\newacronym{mld}{MLD}{maximum likelihood decoding}
\newacronym{mwpm}{MWPM}{minimum weight perfect matching}
\newacronym{rbim}{RBIM}{random-bond Ising model}
\newacronym{ci}{CI}{coherent information}
\newacronym{sgd}{SGD}{stoachstic gradient descent}
\newacronym{pde}{PDE}{probability density estimation}
\newacronym{mle}{MLE}{maximum likelihood estimation}
\newacronym{cnn}{CNN}{convolutional neural network}
\newacronym{vae}{VAE}{variational autoencoder}
\newacronym{elbo}{ELBO}{evidence lower bound}
\newacronym{kl}{KL}{Kullback-Leibler}
\newacronym{mse}{MSE}{mean squared error}
\newacronym{bce}{BCE}{binary cross-entropy}
\newacronym{rnn}{RNN}{recurrent neural network}
\newacronym{gnn}{GNN}{graph neural network}
\newacronym{ldpc}{LDPC}{low density parity-check codes}

\newtheorem{theorem}{Theorem}[section]
\newtheorem{definition}{Definition}[section]
\newtheorem{corollary}{Corollary}[section]
\newtheorem{remark}{Remark}[section]

\newlength{\circA}
\newlength{\circB}
\newlength{\circGap}

\definecolor{depolarizing}{rgb}{0.95,0.8,0.85} %
\definecolor{bitflip}{rgb}{0.7,0.9,0.7}        %
\definecolor{correlated}{rgb}{1.,0.5,0.5}

\date{June 2025}

\begin{document}

\title{Machine Learning Optimal Quantum Error Correction Thresholds}
\author{Dominik Seip}
\email{dominik.seip@rwth-aachen.de}
\affiliation{Institute for Theoretical Nanoelectronics (PGI-2), Forschungszentrum Jülich, 52428 Jülich, Germany}
\affiliation{Institute for Quantum Information, RWTH Aachen University, 52056 Aachen, Germany}
\affiliation{Tübingen AI Center, University of Tübingen, 72076 Tübingen, Germany}

\author{Luis Colmenarez}
\email{colmenarez@physik.rwth-aachen.de}
\affiliation{Institute for Theoretical Nanoelectronics (PGI-2), Forschungszentrum Jülich, 52428 Jülich, Germany}
\affiliation{Institute for Quantum Information, RWTH Aachen University, 52056 Aachen, Germany}

\author{Markus Schmitt}
\affiliation{Institute for Quantum Control (PGI-8), Forschungszentrum Jülich, 52428 Jülich, Germany}
\affiliation{Institute of Theoretical Physics, University of Regensburg, 93053 Regensburg, Germany}

\author{Markus Müller}
\affiliation{Institute for Theoretical Nanoelectronics (PGI-2), Forschungszentrum Jülich, 52428 Jülich, Germany}
\affiliation{Institute for Quantum Information, RWTH Aachen University, 52056 Aachen, Germany}

\date{\today}

\begin{abstract}
As quantum computers remain susceptible to noise, \gls{qec} is essential for preserving logical information during computations. However, the performance of \gls{qec} codes breaks down beyond certain noise thresholds, revealing fundamental limits on their ability to protect quantum information. These limitations can be characterized using information-theoretic measures such as the \gls{ci}, which quantifies the maximum rate at which logical information can be reliably transmitted through a noisy quantum channel. In this work, we establish a direct connection between the \gls{ci} and the \gls{bce} loss used when training neural network decoders. Specifically, we show that the \gls{ci} constitutes a sharp lower bound on the achievable loss for decoders that track logical operators across noisy channels. To this end, we develop a transformer-based neural network model based on \gls{mld}.
We train this network to estimate the \gls{ci} and evaluate its performance on the surface code under three distinct noise models: code capacity, phenomenological, and circuit-level noise. Our results demonstrate that the network accurately predicts \gls{ci} and yields threshold estimates that closely match known theoretical limits. When used as a decoder, the network significantly outperforms the \gls{mwpm} decoder in terms of logical error rate. Additionally, we introduce a novel soft post-selection scheme that independently treats uncertainty in both logical operators and relies on confidence-based filtering of the network's output. We prove that such post-selection strategies, based on the \gls{mld} cosets, are optimal, and demonstrate their scalability in terms of both logical error rate and abort probability.
These findings establish transformer-based architectures as powerful tools for \gls{qec} and provide the first numerical evidence supporting the optimality and scalability of \gls{mld}-based post-selection.

\end{abstract}

\maketitle

\section{Introduction}
Quantum computers promise to solve certain computationally hard problems efficiently, with the potential for exponential speedup in specific applications \cite{feynman_simulating_1982,shor_algorithms_1994,grover_fast_1996}. 
However, the reliability of quantum computations remains limited by noise, which continues to be one of the main obstacles to realizing large-scale quantum computing.
To mitigate errors in quantum processors, \gls{qec} protocols encode logical information across multiple physical qubits, thereby introducing redundancy \cite{terhal_quantum_2015,nielsen_quantum_2010}. This redundancy enables the detection and correction of errors by extracting a syndrome, a set of measurement outcomes that reveals information about the errors affecting the system. The particular way in which logical information is distributed among physical qubits is determined by a \gls{qec} code. Unlike in classical error correction \cite{shannon_mathematical_1948}, decoding quantum errors is computationally more challenging due to the degeneracy of many quantum codes: different error patterns can produce the same syndrome and decoding has to be based on determining the likelihood of such error equivalence classes instead of the plain errors \cite{iyer_hardness_2015}. 
The process of interpreting the syndrome and selecting an appropriate recovery operation is known as decoding and constitutes a central, non-trivial component of \gls{qec}~\cite{iolius_decoding_2024}.

Every \gls{qec} code can only tolerate a certain amount of noise beyond which their error correction ability breaks down \cite{shor_scheme_1995,knill_resilient_1998,kitaev_fault-tolerant_2003,shor_fault-tolerant_1997,shor_fault-tolerant_1997,aharonov_fault-tolerant_2008}. Below this threshold, the logical information can be recovered with fidelity approaching unity \cite{schumacher_quantum_1996}. Accurate threshold estimates are essential to develop and improve \gls{qec} codes. However, it is, in general,  computationally hard to determine optimal thresholds, as it involves sampling errors, decoding syndromes and estimating logical failures using an optimal and generally inefficient decoder \cite{iyer_hardness_2015}. In general, suboptimal decoding strategies are used, meaning that these approaches do not fully exploit the available information, resulting in a decoding performance that is lower than the theoretical maximum. As a consequence, they provide a lower bound on the optimal threshold. Another possibility to obtain optimal thresholds is by mapping the optimal decoding problem onto a phase transition in disordered classical spin models \cite{dennis_topological_2002}. This approach allows for estimating the threshold by solving partition functions with, for example, Monte Carlo methods.

More recently, the \gls{ci} of noisy mixed states has emerged as a powerful tool for accurately estimating optimal error thresholds~\cite{fan_diagnostics_2024,colmenarez_accurate_2024,colmenarez_2025_fundamental,shor_1995_scheme,steane_1954_multiple,gottesman_stabilizer_1997,knill_resilient_1998}. As an upper bound on the recoverable logical information, the \gls{ci} exhibits a sharp transition at the optimal threshold, thereby revealing fundamental limits of \gls{qec} codes and providing bounds on the performance of sub-optimal decoders. This insight is particularly important for realistic circuit-level noise models, where errors occur during gates, measurements, and idling, and where optimal thresholds remain relatively unexplored~\cite{fowler_surface_2012,rispler_random_2024,vodola_fundamental_2022}.
Originally introduced as an achievable rate for reliable quantum communication through noisy channels~\cite{schumacher_quantum_1996}, the \gls{ci} was later identified as the quantum code capacity and applied to threshold estimates of concatenated repetition codes~\cite{divincenzo_quantum-channel_1998}. Since then, it has been widely studied via mappings to statistical mechanical models, enabling analyses of toric and surface codes under various noise models~\cite{fan_diagnostics_2024,wang_intrinsic_2025,hauser_information_2024,colmenarez_2025_fundamental,behrends_surface_2025,tang_phases_2025}, as well as more general stabilizer and Floquet codes~\cite{lyons_understanding_2024,su_tapestry_2024}. Additional applications include characterizing teleportation fidelity~\cite{eckstein_robust_2024} and reformulating the problem in terms of disordered Majorana fermion models~\cite{huang_coherent_2024}. Notably, the \gls{ci} has been computed without the replica trick by relating it to the free energy of a random-bond Ising model~\cite{lee_quantum_2023}, and more recently by direct diagonalization of the output density matrix, yielding precise threshold estimates for surface and color codes in the code-capacity setting~\cite{colmenarez_accurate_2024}.

As the determination of optimal thresholds plays a critical role in guiding the design and evaluation of \gls{qec} codes, their successful experimental implementation has marked a significant milestone toward practical quantum computation and has been achieved across various quantum computing architectures, such as superconducting qubits \cite{acharya_quantum_2025,acharya_suppressing_2023,chen_exponential_2021,krinner_realizing_2022,zhao_realization_2022,andersen_repeated_2020,sivak_real-time_2023,gupta_encoding_2024,hetenyi_creating_2024,lacroix_scaling_2024}, trapped ions \cite{postler_demonstration_2024,ryan-anderson_high-fidelity_2024,egan_fault-tolerant_2021,paetznick_demonstration_2024,berthusen_experiments_2024,pogorelov_experimental_2025,postler_demonstration_2022,huang_comparing_2023}, and neutral atoms \cite{bluvstein_logical_2024,bedalov_fault-tolerant_2024,reichardt_logical_2024,Rodriguez2025Experimental}, already reaching and surpassing the break-even point \cite{acharya_suppressing_2023,acharya_quantum_2025,sivak_real-time_2023,gupta_encoding_2024,paetznick_demonstration_2024,Rodriguez2025Experimental}. Among the various \gls{qec} codes, surface codes, as a planar realization of toric codes, are particularly promising for near-term quantum devices, as they exhibit a high tolerance for errors \cite{fowler_surface_2012,bravyi_quantum_1998}. In the surface code, qubits are arranged on a two-dimensional grid, requiring only local connectivity between qubits, which makes them particularly well-suited for experimental realizations \cite{acharya_quantum_2025,acharya_suppressing_2023,chen_exponential_2021,krinner_realizing_2022,zhao_realization_2022,andersen_repeated_2020,gupta_encoding_2024,hetenyi_creating_2024,berthusen_experiments_2024,bluvstein_logical_2024}. \\

On the other hand, machine learning has become a powerful tool for quantum error correction, with neural network decoders extensively explored across a wide range of noise models, from idealized code-capacity settings \cite{torlai_neural_2017,krastanov_deep_2017,andreasson_quantum_2019,maskara_advantages_2019,wagner_symmetries_2020,fitzek_deep_2020,ni_neural_2020,meinerz_scalable_2022,overwater_neural-network_2022,matekole_decoding_2022,gicev_scalable_2023,egorov_end_2023,cao_qecgpt_2023} to more realistic scenarios \cite{sweke_reinforcement_2021,choukroun_deep_2024,wang_transformer-qec_2023,bordoni_convolutional_2023,baireuther_neural_2019, baireuther_machine-learning-assisted_2018,chamberland_deep_2018,varsamopoulos_decoding_2017,zhang_scalable_2023,chamberland_techniques_2023,hall_artificial_2024,varbanov_neural_2023,lange_data-driven_2023,bausch_learning_2024,blue_machine_2025,gu2026scalableneuraldecoderspractical}, and even in actual experiments~\cite{sivak_real-time_2023,hall_artificial_2024,varbanov_neural_2023,lange_data-driven_2023,bausch_learning_2024}. These approaches are attractive because they can learn error correlations, approximate unknown noise distributions, and operate without explicit knowledge of code geometry, though first methods to interpret the behavior of such NN decoders has been developed \cite{Boedeker_2025_Interpretability}. Several classes of network architectures have been used so far. For instance, recurrent architectures are commonly used for repeated syndrome measurements \cite{baireuther_neural_2019, baireuther_machine-learning-assisted_2018,varbanov_neural_2023}, alternative approaches include graph neural networks, Boltzmann machines, and reinforcement learning~\cite{torlai_neural_2017,krastanov_deep_2017,chamberland_deep_2018,baireuther_neural_2019,sweke_reinforcement_2021,fitzek_deep_2020,lange_data-driven_2023}. 
Inspired by advances in natural language processing and computer vision, transformer-based architectures have recently been applied to \gls{qec}, where their ability to capture long-range and global correlations has led to improved decoding performance~\cite{vaswani_attention_2023,cao_qecgpt_2023,wang_transformer-qec_2023,choukroun_deep_2024,bausch_learning_2024,blue_machine_2025}.  While most works focus on toric and surface codes as canonical topological codes, neural decoding has also been extended to color codes, heavy-hex codes, \gls{ldpc}, repetition codes, and GKP encodings~\cite{baireuther_neural_2019,maskara_advantages_2019,hall_artificial_2024,blue_machine_2025,sivak_real-time_2023}.

Furthermore, post-selection has been proposed as a means to enhance the performance of \gls{qec} by mitigating the complexity of the decoding process and improving output fidelity. Instead of correcting all detected errors deterministically, post-selection involves discarding experimental runs in which errors are likely to have occurred, thereby mitigating the computational overhead associated with decoding \cite{knill_quantum_2005}. This approach can improve the logical accuracy of the resulting encoded quantum states \cite{knill_quantum_2005}.
Recent experiments have demonstrated the practical relevance of post-selection in near-term quantum devices \cite{harper_fault-tolerant_2019,chen_calibrated_2022,postler_demonstration_2022,sundaresan_demonstrating_2023,ye_logical_2023,gupta_encoding_2024,paetznick_demonstration_2024,hetenyi_creating_2024,Bluvstein_2025}. 
However, this approach poses challenges regarding its scalability. The likelihood of obtaining an error-free run diminishes exponentially, rendering strict post-selection impractical for large-scale quantum computation.

To overcome this limitation, recent works have introduced soft post-selection techniques, which utilize the information contained in the syndrome measurements to decide whether to discard \cite{smith_mitigating_2024,english_thresholds_2024,Bluvstein_2025,bombin_fault-tolerant_2024}. The strategies include discarding syndromes with a high density of excited stabilizers~\cite{english_thresholds_2024},
or using modified \gls{mwpm} decoders that compare the likelihood gap between the two most probable corrections~\cite{smith_mitigating_2024,Bluvstein_2025}.
Both approaches can interpolate between the extremes of full error correction and strict error detection.
Such strategies are particularly relevant for fault-tolerant quantum computation, where discarding runs based on syndrome uncertainty can improve logical error rates without incurring the exponential penalty of strict post-selection. 
However, the theoretical understanding of these approaches remains limited, especially the influence of different types of decoding on the discard decision. Further, due to the complexity of optimal decoding, little is known about their impact on the performance of large-scale \gls{qec} codes such as the surface code, including their effect on thresholds and logical failure probabilities when compared to conventional decoding.\\

In this work we address several questions on neural network decoding and soft post-selection. First, we approximate \gls{mld} by means of a neural network decoder trained to extract the maximum amount of quantum information in the noisy state, thereby maximizing the \gls{ci}. To do so, we derive an analytical expression for the \gls{ci} in terms of the logical operators of the noisy \gls{qec} code.
Then, we train a transformer-based neural network on tracking the logical operators across the noise channel, allowing us to estimate the \gls{ci} of the corrupted quantum state. We find that training the network can be interpreted as maximizing the \gls{ci} from the noisy quantum state. Regarding the training process, by measuring the logical information alongside the error syndromes, we facilitate a supervised learning approach with automatically obtained labels. To mitigate difficulties during the training process and enhance convergence at high noise rates, we propose a curriculum learning approach that gradually increases the difficulty of the training examples.

Second, we test the performance of our neural network decoder relative to exact \gls{mld} under different noise models of increasing complexity. Our results demonstrate that our method approaches the exact \gls{ci}, therefore it serves as a good approximation to \gls{mld}. Furthermore, given the closeness to the exact \gls{ci}, our method stands as one approach for estimating the \gls{ci} for distances beyond the reach of conventional computational methods \cite{colmenarez_accurate_2024}. Furthermore, we benchmark the performance of our neural network decoder against the \gls{mwpm} decoder as standard decoding algorithm of the surface code. We find better performance of our neural network decoder over \gls{mwpm}, as expected for a \gls{mld} decoder. 

Third, we explore soft post-selection schemes based on approximate \gls{mld}. There we show how post-selection based on \gls{mld} is an optimal soft post-selection scheme and propose a new post-selection scheme, accounting separately for the uncertainty in the prediction of the $X$- and $Z$-type logical operator. To this end, we investigate the impact of post-selection on the network's predictions. Specifically, we discard syndromes for which the model lacks confidence in determining the appropriate correction. 
We find that both abort probability and logical error rate exhibit the same thresholds, a feature expected from \gls{mld} based post-selection \cite{english_thresholds_2024}.

Broadly speaking, our work differs from others \cite{torlai_neural_2017,krastanov_deep_2017,andreasson_quantum_2019,maskara_advantages_2019,wagner_symmetries_2020,fitzek_deep_2020,ni_neural_2020,meinerz_scalable_2022,overwater_neural-network_2022,matekole_decoding_2022,gicev_scalable_2023,egorov_end_2023,cao_qecgpt_2023,sweke_reinforcement_2021,choukroun_deep_2024,wang_transformer-qec_2023,bordoni_convolutional_2023,baireuther_neural_2019, baireuther_machine-learning-assisted_2018,chamberland_deep_2018,varsamopoulos_decoding_2017,zhang_scalable_2023,chamberland_techniques_2023,hall_artificial_2024,varbanov_neural_2023,lange_data-driven_2023,bausch_learning_2024,blue_machine_2025,gu2026scalableneuraldecoderspractical,sivak_real-time_2023,hall_artificial_2024,varbanov_neural_2023,lange_data-driven_2023,bausch_learning_2024} in that we aim to develop an understanding of the training process. This is reflected in our interpretation of the \gls{ci} as a cost function and the use of curriculum learning as an effective training strategy. Accordingly, we probe the limits of our approach within a specific network architecture; however, our goal is to demonstrate that the observed performance does not rely solely on architecture-specific features. In addition, we investigate thresholds and soft post-selection in the context of \gls{mld}, which is enabled by the improved scalability of the training procedure. In contrast, some works \cite{bluvstein_logical_2024,gu2026scalableneuraldecoderspractical,Bluvstein_2025} focus on the performance of confidence-based post-selection, without addressing threshold behavior.

This paper is organized as follows. In Sec.~\ref{sec:main-results}, we summarize the main results of this work. In Sec.~\ref{sec:qec}, we review the fundamental concepts relevant to this work. In Sec.~\ref{sec:ci-theo} we show the derivations of the \gls{ci} and the training interpretation in detail. In Sec.~\ref{sec:nn-model}, our neural network model and training strategy are presented in detail, alongside our results in estimating the \gls{ci} and comparing the decoding performance of our neural network decoder to the \gls{mwpm} algorithm. In Sec.~\ref{sec:post-selection} we derive theoretical results about soft post-selection and show our numerical experiments for post-selected network outputs. Finally, concluding remarks and outlook are presented in Sec.~\ref{sec:conclusion}.

\glsresetall
\section{Summary of main results} \label{sec:main-results}
In this section, the main results of the paper are briefly summarized. First, in Sec.~\ref{sec:ci-rewriting} we show how the \gls{ci} can be rephrased in terms of logical operators and, secondly how the \gls{bce} corresponds to the \gls{ci} for training neural network decoders. Then, in Sec.~\ref{sec:model-summary} we discuss the relevant architectural concepts of our model and show the results in estimating the \gls{ci} in Sec.~\ref{sec:numerical}. Finally, in Sec.~\ref{sec:post-selection-summary} we present how post-selection on \gls{mld} cosets is optimal, introduce a novel post-selection scheme and discuss that our network retains the optimal thresholds of logical error rate and abort probability under post-selection. 

\subsection{Coherent information in the training of neural network decoders} \label{sec:ci-rewriting}
The central result of this section is a theoretical connection between the loss function used to train neural network decoders and the \gls{ci} as fundamental information-theoretic quantity. Specifically, we show that the \gls{ci} is an upper bound on the \gls{bce} loss. This gives the training objective a clear physical interpretation: the NN is learning to recover as much of the logical information as possible after the noise channel. 

To make this connection concrete, we first express the \gls{ci} in terms of the logical operators after a noise channel. For a \gls{qec} code encoding a single logical qubit, the noise channel induces a logical operator $\lambda \in \{ \mathbb{I},X,Y,Z \} = \Lambda$. Measuring a syndrome $s$, the conditional probability $p(\lambda|s)$ describes the likelihood of each logical operator being induced by the syndrome-triggering error chain. With that, we rewrite the \gls{ci} as
\begin{equation}
\begin{aligned}
    I &= S(\rho_Q) - S(\rho_{RQ}) \\ &= \log{2} + \sum_s p(s) \sum_\lambda p(\lambda | s) \log p(\lambda | s).
\end{aligned}
\end{equation}
The coherent information is maximal ($I = \log 2$) if and only if we can determine the logical operator with certainty, i.e.~$\exists \lambda \in \Lambda \ \mathrm{s.t.} \ p(\lambda|s) = 1$ and $\forall \lambda^\prime \neq \lambda: \ p(\lambda^\prime | s) = 0$.\\

Factorizing the logical operator $\lambda$ into its logical $X$- ($\lambda_x)$ and $Z$-component ($\lambda_z$), $p(\lambda|s) = p(\lambda_z|\lambda_x , s) p(\lambda_x|s)$, the neural network is trained for binary classification of both logical operators. Therefore, the binary cross entropy (\gls{bce}) is employed as loss function. For a batch of samples ($s,\lambda$), predicting two binary output tokens, corresponding to a single encoded logical qubit, the loss can be expressed as 
\begin{equation}
\begin{aligned}
    &- \mathbb{E}_{s, \lambda \sim p(s, \lambda)} \mathcal{L}_\theta(s, \lambda) \\
    = &\sum_{s, \lambda \sim p(s, \lambda)} \sum_{j=x,z} \Bigl( \lambda_{j} \log \hat{\lambda}_{j} + (1 - \lambda_{j}) \log \bigl( 1 - \hat{\lambda}_{j} \bigr) \Bigr) \\
    = &\sum_{s} p(s) \sum_\lambda p(\lambda | s) \log q_\theta(\lambda | s) \\
    \leq &\sum_{s} p(s) \sum_\lambda p(\lambda | s) \log p(\lambda | s) = I - \log 2.
\end{aligned}
\end{equation}
Here, $\hat{\lambda}_x = q_\theta(\lambda_x | s)$ and $\hat{\lambda}_z = q_\theta(\lambda_z | \lambda_x, s)$ are the predicted probabilities of the logical operator component $j$ by the neural network $q_\theta$ with weights $\theta$, $\mathcal{L}_\theta$ is the BCE loss function depending on the weights of the NN $\theta$. The BCE loss (2nd line) can be reformulated using the joint writing $\lambda = (\lambda_z, \lambda_x)$, introducing the joint probability distribution $q_\theta (\lambda | s)$. As $q_\theta (\lambda | s)$ serves as a lower bound for the true probability distribution $p (\lambda | s)$, we show in Sec.~\ref{sec:ci-theo} that the CI serves as an upper bound on the negative BCE loss (up to a constant $\log 2$). Therefore, the loss defines how much of the information, i.e.~the logical operators, can be recovered after the noise channel, with the \gls{ci} as theoretical upper bound. Therefore we have shown that the loss has a natural interpretation in terms of the \gls{ci} and has a unique minimum, represented by the \gls{ci}. This ensures that a neural-network decoder trained with this loss function can, in principle, approach the theoretical limits imposed by \gls{qec}.

\subsection{Transformer for estimating logical operators} \label{sec:model-summary}
\begin{figure}
    \centering
    \includegraphics[width=\linewidth]{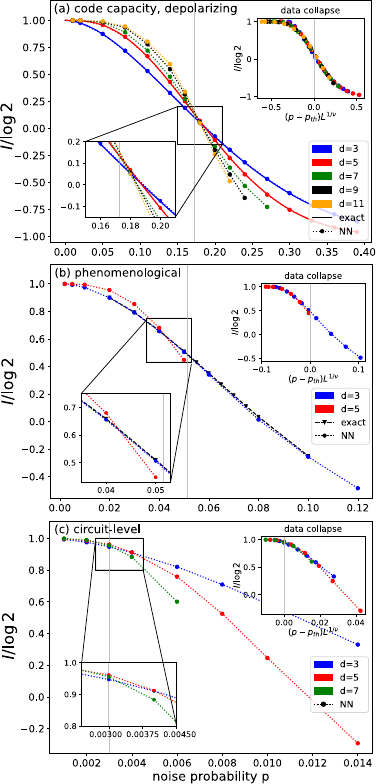}
    \caption[Estimates of the coherent information obtained using our transformer-based model.]{\justifying Estimates of the \gls{ci} obtained using our transformer-based model. We show (a) depolarizing noise in the code capacity setting, (b) phenomenological noise, and (c) circuit-level noise. Each dot corresponds to a single network trained at the particular noise probability. The threshold is marked by a grey vertical line and is either obtained by a finite-size scaling analysis or by the crossing of the CI. (a) The solid lines show the analytically obtained \gls{ci} by computing $p(\lambda|s)$ and $p(s)$ exactly across a densely sampled range of noise values, whereas the markers only indicate selected individual noise values, while (b) the black dots show the numerical value of the \gls{ci}, obtained following Ref.~\cite{colmenarez_accurate_2024}.}
    \label{fig:results_CI}
\end{figure}

Neural network decoders correct errors by applying a recovery operation $R=STL$. This operation is decomposed into a pure error $T$ correcting the excited syndrome, and a logical observable $L$ that accounts for logical flips induced by the error chain. The recovery operation $R$ only needs to account for the initial error $E$ up to some stabilizer $S$. Therefore, the core machine learning task of a neural network decoder reduces to estimating the induced logical observables, i.e.~estimating the conditional probability $p(\lambda|s)$, from which the recovery operation can be derived. We design and train a transformer-based architecture to approximate this density by parameterizing $\lambda = (\lambda_x, \lambda_z)$ as a two-bit binary number and predicting the logical operators autoregressively from the encoded stabilizer values, $p(\lambda|s) = p(\lambda_z|\lambda_x , s) p(\lambda_x|s)$, with two key contributions. %

First, for noise models requiring repeated syndrome measurements across multiple time steps, we introduce a divided space-time attention mechanism. Standard recurrent transformer architectures, such as in Ref.~\cite{bausch_learning_2024}, introduce a bias toward spatially local errors over timelike ones. Our approach removes this bias by attending separately to spatially and temporally aligned stabilizer tokens, while remaining computationally feasible compared to full attention over all stabilizers. The encoding is processed by a transformer decoder to output the conditional probabilities autoregressively \cite{vaswani_attention_2023,brown_language_2020} to account for the correlations in the logical operator, i.e. $p(Y) \neq p(X) p(Z)$.\\

Second, we observe a phenomenon we call conditional collapse. Hereby, the training gets stuck in a local minimum where the model ignores the input syndrome and outputs the average logical operator across the whole dataset, i.e.~$p(\lambda|s) \approx p(\lambda)$. This occurs when a model built upon a strong decoder, e.g.~a transformer decoder, finds it easier to memorize marginal statistics than to learn syndrome-conditioned predictions. To prevent this, we propose a curriculum learning approach. Hereby, models are trained at increasing noise rate therefore gradually increasing the complexity of the training samples. Those contributions allow us to advance the design of transformer architectures for decoding and \gls{qec}.

\subsection{Numerical experiments} \label{sec:numerical}

We validate our model across three noise settings of increasing complexity: the code capacity model, where only data qubits experience errors; the phenomenological noise model, which additionally introduces errors on ancilla qubits during syndrome measurement; and circuit-level noise, where every gate operation is assumed imperfect. In all three settings, our transformer produces accurate estimates of the CI across a range of code distances (see Fig.~\ref{fig:results_CI}). With varying code distances, we observe crossings of the CI, indicating the error-correction thresholds. Our estimates recover thresholds consistent with known optimal values from the literature and, importantly, demonstrate that the method can extract previously unknown thresholds where no analytical reference exists.

\subsection{Soft post-selection on MLD cosets} \label{sec:post-selection-summary}
Post-selection, discarding experimental shots that are unlikely to be corrected successfully, is a practically important tool for improving logical error rates at the cost of a higher abort probability. The central question we try to answer is: given a fixed abort probability budget, which samples should be discarded to minimally increase the logical error rate? For that, soft post-selection strategies \cite{smith_mitigating_2024,english_thresholds_2024,Bluvstein_2025} make discard decisions based on the decoder output, rather than strict post-selection, in which only samples with trivial syndromes are retained. The idea is that soft post-selection schemes will discard samples where a logical error is likely and retain those with a high probability of successful correction.\\

Formally, soft post-selection is formulated in terms of a selection function $f$ that maps a post-selection parameter $c$ onto a set of accepted syndromes $\{ s_1, ..., s_n\} \in \mathcal{P}(\mathcal{S})$. We prove that post-selection based on the \gls{mld} cosets is optimal, where samples are discarded if the maximum coset probability is below a threshold $c$, $\underset{\lambda \in \Lambda}{\mathrm{max}} \ p(\lambda|s) \leq c$. That is:
\begin{theorem} \label{thm:main}
Let $c, c^\prime \in [0, 1]$ be post-selection parameters, $f_\mathrm{MLD}$ the \gls{mld} post-selection function. Then any post-selection scheme $f: [0,1] \rightarrow \mathcal{P}(\mathcal{S})$ with $f(c) \subset f_\mathrm{MLD}(c)$ is optimal as for any other post-selection scheme $f^\prime: [0,1] \rightarrow \mathcal{P}(\mathcal{S})$ we have 
\[
p_\mathrm{abort}(c^\prime, f^\prime) \leq p_\mathrm{abort}(c, f) \Longrightarrow p_L(c^\prime, f^\prime) \geq p_L(c, f).
\]
\end{theorem}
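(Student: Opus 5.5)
The plan is to recast the statement as a fractional knapsack (Neyman--Pearson type) argument over the finite syndrome set $\mathcal{S}$. First we fix notation. For a selection function $f$ and parameter $c$, the accepted set is $A=f(c)\subseteq\mathcal{S}$, and the abort probability is
\[
p_\mathrm{abort}(c,f)=1-\sum_{s\in A}p(s).
\]
On the accepted set the decoder applies the MLD correction, so a syndrome $s$ contributes failure mass $p(s)\,\bigl(1-m(s)\bigr)$, where $m(s)=\max_{\lambda\in\Lambda}p(\lambda|s)$. We take $p_L$ to be the joint failure probability $p_L(c,f)=\sum_{s\in A}p(s)\bigl(1-m(s)\bigr)$, i.e.\ failures on retained shots. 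If the paper instead means the conditional rate, normalized by $1-p_\mathrm{abort}$, we handle that variant at the end. Using the MLD correction on every accepted syndrome is without loss of generality, because any other correction only increases $1-p(\lambda_\mathrm{rec}|s)$ pointwise. The MLD selection function is $f_\mathrm{MLD}(c)=\{s: m(s)>c\}$, a superlevel set of $m$.

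The key step is an exchange argument. Consider any competing set $A'=f'(c')$ with $\sum_{s\in A'}p(s)\ge\sum_{s\in A}p(s)$, which is exactly the abort hypothesis. We must show that the failure mass of $A'$ is at least that of $A$. To do this we split $A'=(A'\cap A)\cup(A'\setminus A)$ and $A=(A\cap A')\cup(A\setminus A')$. The common part cancels, so it suffices to compare the two exclusive pieces.

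Every $s\in A\setminus A'$ lies in $f_\mathrm{MLD}(c)$. For the simple superlevel case $A=f_\mathrm{MLD}(c)$, every $s\notin A$ has $m(s)\le c<m(s')$ for all $s'\in A$. The per-unit-mass failure $1-m$ on $A\setminus A'$ is therefore at most $1-c$, while on $A'\setminus A$ it is at least $1-c$. The pieces are related by $p(A'\setminus A)\ge p(A\setminus A')$, which follows from the mass inequality after cancelling the common part. Hence
\[
\sum_{A'\setminus A}p(s)\bigl(1-m(s)\bigr)\;\ge\;(1-c)\,p(A'\setminus A)\;\ge\;(1-c)\,p(A\setminus A')\;\ge\;\sum_{A\setminus A'}p(s)\bigl(1-m(s)\bigr),
\]
which gives $p_L(c',f')\ge p_L(c,f)$.

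The main obstacle is the hypothesis as written, $f(c)\subset f_\mathrm{MLD}(c)$. An arbitrary subset of the superlevel set need not itself be a superlevel set, and the exchange bound uses the threshold separation between $A$ and its complement. So the first thing to try is to interpret optimality as the statement that $f$ selects by thresholding $m$, i.e.\ $f(c)=f_\mathrm{MLD}(c')$ for some $c'$. If needed, we would strengthen the assumption to ``$f(c)$ is an upper set of $m$ contained in $f_\mathrm{MLD}(c)$'', with ties broken arbitrarily. Under that reading the separating value is $\min_{s\in A}m(s)$ in place of $c$, and the argument above goes through unchanged.

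For the conditional logical error rate the same ordering gives the result by a rearrangement/averaging argument. Removing the highest-failure syndromes first minimizes the average of $1-m$ over a retained set of at least a given mass. This is because the cumulative failure as a function of retained mass is convex when syndromes are added in decreasing order of $m$. Dividing by the retained mass then preserves the inequality, since $A'$ has at least as much mass and its average failure is no smaller.
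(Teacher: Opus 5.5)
Your argument is sound, but it takes a different route from the paper, and the difference matters.

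The paper's $p_L$ is the conditional rate: its $p_\mathrm{succ}=\sum_{s\in f(c)}\tilde p(s)\max_\lambda p(\lambda|s)$ is normalized over $f(c)$. So your ``variant at the end'' is the actual statement, not a side case. The paper rewrites both success rates over the common denominator $1-p'_\mathrm{abort}$ and then tries to show that the numerator $\sum_{f(c)}p\,m$ dominates $\sum_{f'(c')}p\,m$. In case (ii) this uses $c\,p(f(c)\setminus f'(c'))>c\,p(f'(c')\setminus f(c))$, which runs the wrong way when $p_\mathrm{abort}>p'_\mathrm{abort}$. The intermediate claim already fails for $f'(c')=\mathcal{S}$. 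Your averaging step is exactly what is needed instead.

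State that step directly rather than saying that dividing by the retained mass ``preserves the inequality''. From $F(A')\ge F(A)$ and $p(A')\ge p(A)$ alone you cannot conclude anything about the ratios. Write $F(A)=\sum_{s\in A}p(s)(1-m(s))$ and $t=\min_{s\in A}m(s)$. Your exchange argument then gives $F(A')\ge F(A)+(1-t)\bigl(p(A')-p(A)\bigr)$. Hence $F(A')/p(A')$ is at least a convex combination of $F(A)/p(A)$ and $1-t$, and $1-t\ge F(A)/p(A)$ because $1-m\le 1-t$ on $A$. That one line replaces the knapsack/convex-envelope sketch.

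Your worry about the hypothesis is justified, and the strengthening is necessary rather than a convenience. With an arbitrary $f(c)\subset f_\mathrm{MLD}(c)$ the statement is false. Take $c=0.8$ and syndromes with $m(s_1)=0.99$, $m(s_2)=0.81$, $p(s_1)\ge p(s_2)$, and set $f(c)=\{s_2\}$, $f'(c')=\{s_1\}$. Then $p_\mathrm{abort}(c',f')\le p_\mathrm{abort}(c,f)$, but $p_L(c',f')=0.01<0.19=p_L(c,f)$.

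The paper's proof silently makes your assumption as well. It bounds $m(s)\le c$ on $f'(c')\setminus f(c)$, which requires $f(c)=f_\mathrm{MLD}(c)$. Note also that your upper-set version does not cover the split scheme $f_\mathrm{NN}$, which in general is not a superlevel set of $m$. At $c=0.64$, the paper's own example discards a syndrome with $m=0.686$. Meanwhile a syndrome with $p(\lambda_x|s)=p(\lambda_z|\lambda_x,s)=0.81$, i.e.\ $m\approx 0.656$, is accepted. So optimality of split post-selection in this strong sense is not established by either argument.
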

To improve post-selection over the standard \gls{mld} post-selection, we propose a split post-selection scheme that considers the uncertainties of both logical operators separately. This is based on the factorization of the conditional probability $p(\lambda|s)=p(\lambda_x|s) p(\lambda_z|\lambda_x, s)$ and discards the sample if the probability of any individual logical operator lies within the interval $[\sqrt{c}, 1-\sqrt{c}]$.
\begin{definition} 
Split MLD post-selection: Let $c \in [0,1]$. Then the split MLD post-selection function is given by 
\begin{equation}
\begin{aligned}
&f_\mathrm{NN}: [0,1] \rightarrow \mathcal{P}(\mathcal{S}), \\ 
&f_\mathrm{NN}(c) = \{ s \in \mathcal{S}: \ (p(\lambda_x | s) > \sqrt{c} \ \lor \ p(\lambda_x | s) < 1 - \sqrt{c})  \\
&\quad  \land \ (p(\lambda_z | \lambda_x, s) > \sqrt{c} \ \lor \ p(\lambda_z | \lambda_x, s) < 1 - \sqrt{c}) \}.
    \end{aligned}
    \end{equation}
\end{definition}
This scheme is inspired by a confidence filtering on the output of the autoregressive decoding of the employed transformer architecture.\\ 

We prove and numerically observe that optimal thresholds are preserved under optimal post-selection, as the maximum coset probability converges to unity in probability:
\begin{theorem}
    Let $g_d: \mathcal{S} \rightarrow [0,1], \ g_d(s)=\underset{\lambda \in \Lambda}{\mathrm{max}} \ p(\lambda | s)$ be the mapping of a syndrome to its maximum coset probability for a distance-$d$ \gls{qec} code. Then, for $p<p_{th}$ we have
    \begin{equation}
        g_d \overset{P}{\rightarrow} 1 \ \mathrm{as} \ d \rightarrow \infty.
    \end{equation}
\end{theorem}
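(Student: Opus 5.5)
The plan is to reduce convergence in probability of $g_d$ to the vanishing of the optimal logical failure probability. The bridge is the \gls{mld} decoder itself, whose failure probability equals $\mathbb{E}_s[1-g_d(s)]$. Below threshold, this quantity tends to zero as $d\to\infty$. Markov's inequality then gives convergence in probability.

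\textbf{Step 1 (MLD failure rate as an expectation).} I would write down the maximum likelihood decoder for the distance-$d$ code. Given $s$, it outputs the coset $\lambda^*(s)=\arg\max_\lambda p(\lambda|s)$. Conditioned on $s$, it succeeds with probability exactly $g_d(s)$. Hence
\[
p_L^{\mathrm{MLD}}(d)=\sum_s p(s)\bigl(1-g_d(s)\bigr)=\mathbb{E}_s\bigl[1-g_d(s)\bigr].
\]

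\textbf{Step 2 (threshold input).} Next I would invoke the definition of the optimal threshold $p_{th}$. This is the noise strength below which optimal (MLD) decoding has logical failure probability vanishing as $d\to\infty$. Equivalently, one can use the statement from the \gls{ci} discussion that $I\to\log 2$ for $p<p_{th}$. If only the \gls{ci} formulation is available, I would argue as follows. The quantity $\log 2 - I = \mathbb{E}_s[H(p(\cdot|s))]$ is an expected entropy (in the paper's sign convention, $I-\log 2=\sum_s p(s)\sum_\lambda p(\lambda|s)\log p(\lambda|s)$). Since $H(p(\cdot|s))\ge -\log g_d(s)\ge 1-g_d(s)$, we get
\[
\mathbb{E}_s[1-g_d(s)]\le \log 2 - I_d \to 0 .
\]
Either route yields $\mathbb{E}_s[1-g_d]\to 0$.

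\textbf{Step 3 (Markov).} Since $1-g_d\ge 0$, for any $\epsilon>0$ Markov's inequality gives
\[
P\bigl(|g_d-1|>\epsilon\bigr)=P\bigl(1-g_d>\epsilon\bigr)\le \mathbb{E}[1-g_d]/\epsilon \to 0 .
\]
This is exactly $g_d\overset{P}{\rightarrow}1$.

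The main obstacle is Step 2: pinning down which characterization of $p_{th}$ the paper adopts and making it rigorous. One must check that the threshold is defined via MLD failure or \gls{ci} saturation, rather than merely a finite-size crossing, and that $p(s)$ is the syndrome distribution of the distance-$d$ family. I would state this as the assumed defining property of $p_{th}$ and then proceed with the elementary Steps 1 and 3.
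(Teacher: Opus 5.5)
Your proposal is correct and matches the paper's proof: the paper writes the MLD failure rate as $\sum_s p(s)\bigl(1-g_n(s)\bigr)$, uses the threshold theorem to send it to zero, and bounds $P(1-g_n>c)$ by a Markov-type inequality, exactly as in your Steps 1--3. Your alternative route in Step 2, via $H(p(\cdot|s))\ge -\log g_d(s)\ge 1-g_d(s)$ and $I\to\log 2$, is also valid, but the paper does not use it.
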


\glsresetall
\section{Background} \label{sec:qec}
In this section, we provide a brief introduction to the fundamental principles and theoretical background of \gls{qec}. Readers already familiar with these concepts may skip Sec.~\ref{sec:qec} and proceed directly to Sec.~\ref{sec:ci-theo}.
\subsection{Quantum Error Correction}
\subsubsection{Stabilizer codes}
A \gls{qec} code $\mathcal{C}$ defines an encoding of $k$ logical qubits into $n$ physical qubits. It specifies a codespace $C \subset \mathbb{C}^{2n}$, which is a subset of dimension $2^k$ of the full $n$-qubit Hilbert space. Denote by $\{ \psi_i: i=1,...,2^k \}$ a basis of $C$, i.e.~the codewords of the codespace, and by $\mathcal{E} \subset \mathcal{P}^{\otimes n}$ a set of Pauli errors acting on the encoded state. Then, $\mathcal{C}$ can correct for $\mathcal{E}$ iff
\begin{equation}
    \bra{\psi_i} E_a^\dagger E_b \ket{\psi_j} = C_{ab} \delta_{ij} \ \forall i,j \in \{1,...,2^k\}, E_a,E_b \in \mathcal{E}. \label{eq:error-correction}
\end{equation}
This means that two correctable errors should not confuse two distinct logical codewords and by detecting the errors we should not learn any information about the encoded quantum state, i.e.~$C_{ab}$ is independent of $i$ and $j$ \cite{knill_resilient_1998}.

Many QEC codes can be formulated within the stabilizer formalism, first introduced by Gottesman \cite{gottesman_stabilizer_1997}. A stabilizer code that encodes $k$ logical qubits into an $n$-qubit Hilbert space, is defined by a stabilizer group $\mathcal{S}$. The stabilizer group is generated by $n-k$ elements and defines the codespace $C$ as $+1$-eigenstate of the stabilizer group
\begin{equation}
    C = \{ \ket{\psi} : S \ket{\psi} = \ket{\psi} \ \forall S \in \mathcal{S} \}. 
\end{equation}
For the stabilizers to have joint eigenvectors, the stabilizer group $\mathcal{S}$ needs to be abelian, i.e.~all stabilizers commute. The negative identity operator $- \mathbb{I}$ is not part of the stabilizer group.

Stabilizer codes are able to detect errors that anticommute with at least one of the stabilizers in $\mathcal{S}$. Let $E \in \mathcal{E}$ s.t. $\{ S, E \}=0$ for some $S \in \mathcal{S}$ and $\ket{\psi_i} \in C$, then $S E \ket{\psi_i} = - E \ket{\psi_i}$ and inserting in Eq.~\eqref{eq:error-correction} gives
\begin{equation}
    \bra{\psi_i} E \ket{\psi_j} = \bra{\psi_i} S E \ket{\psi_j} = - \bra{\psi_i} E \ket{\psi_j} = 0.
\end{equation}
Thus, errors that anticommute with the stabilizer group can be detected. However, only errors that either differ by a stabilizer or produce a distinguishable syndrome can be corrected, as otherwise the overlap of the corrupted states is nonzero (see Eq.~\eqref{eq:error-correction}).

To identify logical operators in the given stabilizer code, we look for operators that rearrange states in $C$ but do not map the state out of the codespace $C$. Let us define the normalizer of $\mathcal{S}$ as $\mathcal{N}(\mathcal{S}) = \{A \in \mathcal{P}_n : A^\dagger S A \in \mathcal{S} \} \subset \mathcal{P}_n$, which is a subset of the $n$-qubit Pauli group and contains all operators that map $\mathcal{S}$ on itself. Note that $\mathcal{S} \subset \mathcal{N}(\mathcal{S})$.

In fact, we have $s \in \mathcal{S} \implies -s \notin \mathcal{S}$ (since $-\mathbb{I} \notin \mathcal{S}$) and for the elements of the $n$-qubit Pauli group $\mathcal{P}_n$: $\forall u,v \in \mathcal{P}_n: u v = \pm v u$. This implies that the normalizer of $\mathcal{S}$ under $\mathcal{P}_n$ is equal the centralizer of $\mathcal{S}$ under $\mathcal{P}_n$, which is defined as $\mathcal{C}(\mathcal{S}) = \{ A \in \mathcal{P}_n: As=sA \ \forall s \in \mathcal{S} \}$. Thus, $\mathcal{N}(\mathcal{S})$ contains all operators that commute with all stabilizers.
Then, errors $E \in \mathcal{N}(\mathcal{S}) \setminus \mathcal{S}$ cannot be detected, but change the logical information of the encoded quantum state and therefore act as logical operators. Hence
there exists an automorphism $\mathcal{N}(\mathcal{S}) \setminus \mathcal{S} \rightarrow \mathcal{P}_k$. These logical operators $\bar{X}_i, \bar{Z}_i$ fulfill
\begin{align}
    [\bar{X}_i, S] &= 0, \\
    [\bar{Z}_i, S] &= 0, \\
    \{\bar{X}_i, \bar{Z}_i\} &= 0, \ \forall i \in \{1,...,k\}, S \in \mathcal{S}.
\end{align}
The error syndrome of a code is a function that maps an error to a $(n-k)$-bit number, i.e.~$f_S: \mathcal{P}_n \rightarrow \{0,1\}$, such that
\begin{equation}
    f_S (E) =
    \begin{cases} 
    0, & \text{if } [S,E] = 0, \\
    1, & \text{if } \{S,E\}=0.
    \end{cases}
\end{equation}
Then $f(E) = (f_{S_1} (E), ..., f_{S_{n-k}} (E))$ is called the syndrome. 
In \gls{qec}, codes are commonly denoted by $[[n,k,d]]$, where $n$ is the number of physical qubits, $k$ the number of logical qubits encoded, and $d$ is the code distance. The distance quantifies the minimum weight of an error chain to cause a logical error. Specifically, a $[[n,k,d]]$ code can detect up to $d-1$ errors and correct at least all errors up to weight $\lfloor (d - 1)/2 \rfloor$\cite{gottesman_stabilizer_1997,nielsen_quantum_2010}.

\subsubsection{Error channels}

\begin{figure*}
  \centering
  \settowidth{\circA}{$\begin{quantikz}
    \lstick{\(|0\rangle\)} & \gate{H} & \ctrl{1} & \ctrl{2} & \ctrl{3} & \ctrl{4} & \gate{H} & \gate[style={fill=bitflip}]{} & \meter{} \\
    \gategroup[wires=4,steps=1,style={dashed, rounded corners,
      fill=white!30}, background]{{data}}
    & \gate[style={fill=depolarizing}]{} & \targ{} & \qw & \qw & \qw & \qw & \qw & \qw \\
    & \gate[style={fill=depolarizing}]{} & \qw & \targ{} & \qw & \qw & \qw & \qw & \qw \\
    & \gate[style={fill=depolarizing}]{} & \qw & \qw & \targ{} & \qw & \qw & \qw & \qw \\
    & \gate[style={fill=depolarizing}]{} & \qw & \qw & \qw & \targ{} & \qw & \qw & \qw \\
  \end{quantikz}$}%
  \settowidth{\circB}{$\begin{quantikz}
    \lstick{\(|0\rangle\)} & \gate[style={fill=bitflip}]{} & \gate{H} & \gate[style={fill=depolarizing}]{} & \ctrl{1} & \gate[style={fill=correlated}]{}\wire[d][1]{q} & \ctrl{2} & \gate[style={fill=correlated}]{}\wire[d][2]{q} & \ctrl{3} & \gate[style={fill=correlated}]{}\wire[d][3]{q} & \ctrl{4} & \gate[style={fill=correlated}]{}\wire[d][4]{q} & \gate{H} & \gate[style={fill=depolarizing}]{} & \gate[style={fill=bitflip}]{} & \meter{} \\
    \gategroup[wires=4,steps=1,style={dashed, rounded corners,
      fill=white!30}, background]{{data}}
    & \gate[style={fill=depolarizing}]{} & \qw & \qw & \targ{} & \gate[style={fill=correlated}]{} & \qw & \qw & \qw & \qw & \qw & \qw & \qw & \qw & \gate[style={fill=depolarizing}]{} & \qw \\
    & \gate[style={fill=depolarizing}]{} & \qw & \qw & \qw & \qw & \targ{} & \gate[style={fill=correlated}]{} & \qw & \qw & \qw & \qw & \qw & \qw & \gate[style={fill=depolarizing}]{} & \qw \\
    & \gate[style={fill=depolarizing}]{} & \qw & \qw & \qw & \qw & \qw & \qw & \targ{} & \gate[style={fill=correlated}]{} & \qw & \qw & \qw & \qw & \gate[style={fill=depolarizing}]{} & \qw \\
    & \gate[style={fill=depolarizing}]{} & \qw & \qw & \qw & \qw & \qw & \qw & \qw & \qw & \targ{} & \gate[style={fill=correlated}]{} & \qw & \qw & \gate[style={fill=depolarizing}]{} & \qw \\
  \end{quantikz}$}%
  \setlength{\circGap}{6pt}%
  \pgfmathsetmacro{\sfactor}{%
    \textwidth / (\circA + \circB + \circGap)%
  }%
  \makebox[\textwidth][c]{%
    \scalebox{\sfactor}{%
      \begin{minipage}{\circA}
        \centering
        \[
        \begin{quantikz}
          \lstick{\(|0\rangle\)} & \gate{H} & \ctrl{1} & \ctrl{2} & \ctrl{3} & \ctrl{4} & \gate{H} & \gate[style={fill=bitflip}]{} & \meter{} \\
          \gategroup[wires=4,steps=1,style={dashed, rounded corners,
            fill=white!30}, background]{{data}}
          & \gate[style={fill=depolarizing}]{} & \targ{} & \qw & \qw & \qw & \qw & \qw & \qw \\
          & \gate[style={fill=depolarizing}]{} & \qw & \targ{} & \qw & \qw & \qw & \qw & \qw \\
          & \gate[style={fill=depolarizing}]{} & \qw & \qw & \targ{} & \qw & \qw & \qw & \qw \\
          & \gate[style={fill=depolarizing}]{} & \qw & \qw & \qw & \targ{} & \qw & \qw & \qw \\
        \end{quantikz}
        \]
        \vspace{-2em}
        \subcaption{Phenomenological noise}\label{fig:circuit-noise-a}
      \end{minipage}%
      \hspace{\circGap}%
      \begin{minipage}{\circB}
        \centering
        \[
        \begin{quantikz}
          \lstick{\(|0\rangle\)} & \gate[style={fill=bitflip}]{} & \gate{H} & \gate[style={fill=depolarizing}]{} & \ctrl{1} & \gate[style={fill=correlated}]{}\wire[d][1]{q} & \ctrl{2} & \gate[style={fill=correlated}]{}\wire[d][2]{q} & \ctrl{3} & \gate[style={fill=correlated}]{}\wire[d][3]{q} & \ctrl{4} & \gate[style={fill=correlated}]{}\wire[d][4]{q} & \gate{H} & \gate[style={fill=depolarizing}]{} & \gate[style={fill=bitflip}]{} & \meter{} \\
          \gategroup[wires=4,steps=1,style={dashed, rounded corners,
            fill=white!30}, background]{{data}}
          & \gate[style={fill=depolarizing}]{} & \qw & \qw & \targ{} & \gate[style={fill=correlated}]{} & \qw & \qw & \qw & \qw & \qw & \qw & \qw & \qw & \gate[style={fill=depolarizing}]{} & \qw \\
          & \gate[style={fill=depolarizing}]{} & \qw & \qw & \qw & \qw & \targ{} & \gate[style={fill=correlated}]{} & \qw & \qw & \qw & \qw & \qw & \qw & \gate[style={fill=depolarizing}]{} & \qw \\
          & \gate[style={fill=depolarizing}]{} & \qw & \qw & \qw & \qw & \qw & \qw & \targ{} & \gate[style={fill=correlated}]{} & \qw & \qw & \qw & \qw & \gate[style={fill=depolarizing}]{} & \qw \\
          & \gate[style={fill=depolarizing}]{} & \qw & \qw & \qw & \qw & \qw & \qw & \qw & \qw & \targ{} & \gate[style={fill=correlated}]{} & \qw & \qw & \gate[style={fill=depolarizing}]{} & \qw \\
        \end{quantikz}
        \]
        \vspace{-2em}
        \subcaption{Circuit-level noise}\label{fig:circuit-noise-b}
      \end{minipage}%
    }%
  }%
  \vspace{0.1cm}
  \fcolorbox{black}{depolarizing}{1-qb.\ depolarizing}\quad
  \fcolorbox{black}{correlated}{2-qb.\ depolarizing}\quad
  \fcolorbox{black}{bitflip}{bit-flip}
  \caption[Noisy $X$-type stabilizer readout circuits.]{\justifying Noisy $X$-type stabilizer readout circuits with error locations marked by colored boxes. (a) Phenomenological noise: a depolarizing channel (pink) is applied to all data qubits before readout; measurement errors are modeled via a bit-flip channel (green). All gates are assumed noiseless. (b) Circuit-level noise: idling data qubits decohere during ancilla initialization and measurement (pink); single- and two-qubit gates are followed by depolarizing channels (red). Measurement errors are again modeled via bit-flip channels. Idling errors during the execution of Hadamard- and CNOT-gate are not included. A specific gate ordering must be chosen based on the code layout to ensure fault tolerance (see Ref.~\cite{Tomita_2014}). $Z$-type stabilizer readout is analogous, omitting the Hadamard and reversing CNOT directions.}
  \label{fig:circuit-noise}
\end{figure*}

We consider three noise models of increasing realism, ranging from idealized code-capacity noise to full circuit-level noise.

\paragraph{Data-qubit noise:}
In the code-capacity setting, all gates are assumed ideal and errors occur only on data qubits, followed by perfect stabilizer readout~\cite{terhal_quantum_2015,dennis_topological_2002,wang_confinement-higgs_2003}. The single-qubit noise channel is
\begin{equation}
\mathcal{N}(\rho) = (1-\sum_j p_j)\rho + \sum_j p_j E_j \rho E_j,
\end{equation}
with $E_j\in\{X,Y,Z\}$. For bit-flip noise, $p_X=p$ and $p_Y=p_Z=0$, yielding
\begin{equation}
\mathcal{N}(\rho) = (1-p)\rho + p X\rho X.
\end{equation}
Due to the symmetry of surface and toric codes, this is equivalent to a phase-flip channel. Depolarizing noise is obtained by setting $p_X=p_Y=p_Z=p/3$:
\begin{equation}
\mathcal{N}(\rho) = (1-p)\rho + \frac{p}{3}(X\rho X + Y\rho Y + Z\rho Z).
\end{equation}

\paragraph{Phenomenological noise:}
Measurement errors are included by applying bit-flip noise to ancilla qubits prior to readout (Fig.~\ref{fig:circuit-noise}a). Since measurement faults are uncorrelated in time, stabilizers are measured repeatedly. A \gls{qec} cycle consists of $d$ rounds, each with depolarizing noise on data qubits and noisy ancilla measurements.

\paragraph{Circuit-level noise:}
The circuit-level model explicitly includes noisy gates, measurements, and idling (Fig.~\ref{fig:circuit-noise}b). Single-qubit gates (and idling) are followed by depolarizing noise with probability $p$, while initialization and measurement are preceded by basis-dependent bit- or phase-flip noise. Two-qubit gates are followed by a two-qubit depolarizing channel,
\begin{equation}
\mathcal{N}(\rho) = (1-p)\rho + \frac{p}{15}\sum_{i=1}^{15} E_2^i \rho E_2^i,
\end{equation}
where $E_2^i\in\{\sigma_k\otimes\sigma_l\}\setminus\{\mathbb{I}\otimes\mathbb{I}\}$. While specific implementations vary~\cite{rispler_random_2024}, we assume the final stabilizer measurement in each \gls{qec} cycle is perfect.

\subsubsection{Maximum Likelihood Decoding (MLD)}\label{sec:mld}

Decoding maps measured stabilizer syndromes to recovery operations by estimating the underlying physical error. A central difficulty is code degeneracy: many distinct error patterns can produce the same syndrome, making quantum decoding substantially harder than its classical counterpart~\cite{iolius_decoding_2024}. We focus on the surface code~\cite{kitaev_fault-tolerant_2003,fowler_surface_2012}, following Ref.~\cite{iolius_decoding_2024}, though the discussion applies more generally.

Decoding strategies can be broadly classified as follows:
\begin{itemize}
    \item \emph{Minimum weight decoding}: Select the most likely error consistent with the observed syndrome,
    \begin{equation}
        R = \arg\max_{E \in \mathcal{P}_n} p(E|s),
    \end{equation}
    where $p(E)$ is the physical error distribution and $p(E|s)$ denotes the probability of the error $E$ given the observed syndrome $s$. This approach ignores code degeneracy.
    \item \emph{\Gls{mld}}: Given an error $E$ producing syndrome $s$, the coset $E\mathcal{C}(\mathcal{S})$ decomposes into logical cosets
    \begin{equation}
    \begin{aligned}
        E \mathcal{C}(\mathcal{S}) = &E\mathcal{S} \cup E\bar{X}\mathcal{S} \\
        &\cup E\bar{Y}\mathcal{S} \cup E\bar{Z}\mathcal{S},
    \end{aligned}
    \end{equation}
    (with straightforward generalization to $k>1$ logical qubits). MLD selects the logical coset with maximal posterior probability and applies a recovery $R \in f(s)\tilde{\lambda}\mathcal{S}$, where
    \begin{equation}
        \tilde{\lambda} = \arg\max_{\lambda \in \{\mathbb{I},\bar{X},\bar{Y},\bar{Z}\}} p(\lambda|s),
    \end{equation}
    and $p(\lambda|s)=\sum_{E\in\lambda\mathcal{S}}p(E|s)$.
\end{itemize}

By summing over all errors in a logical coset, MLD fully accounts for code degeneracy and is therefore optimal. Minimum weight decoding, which neglects this degeneracy, is generally suboptimal.

\subsection{Coherent Information}
The \gls{ci} provides an upper bound on the amount of information that can be recovered after a quantum state undergoes a noise channel \cite{schumacher_quantum_1996,lloyd_capacity_1997}. In fact, choosing an encoding that maximizes the \gls{ci} gives the quantum capacity of a noisy channel \cite{gyongyosi_survey_2018}. Thus, the \gls{ci} of a corrupted logical encoding quantifies how much of the stored information can in principle be recovered. As a result, it shows a crossing at the threshold while approaching a step function for $d \rightarrow \infty$: beyond the threshold, the information cannot be recovered with certainty, while it can be perfectly preserved below threshold. 
Recent works have shown the relation between \gls{ci} to a statistical mechanics models associated to \gls{qec} codes \cite{fan_diagnostics_2024,colmenarez_2025_fundamental,lyons_understanding_2024,eckstein_robust_2024,behrends_surface_2025,lee_exact_2024}. 

In the context of quantum error correction, the \gls{ci} is defined as difference between the entropy of the encoded qubits, $Q$, and of the codestate entangled with a noiseless reference qubit, $RQ$ \cite{fan_diagnostics_2024}:
\begin{equation}
    I = S(\rho_Q) - S(\rho_{RQ}), \ S(\rho) = - \mathrm{Tr}(\rho \log \rho).
\end{equation}
This setup is shown in Fig.~\ref{fig:CI-setting}. Here $S$ denotes the von Neumann entropy and $\rho_Q$ the reduced density matrix of system $RQ$ by tracing out the reference qubit. 
We can view $I$ as indicating how much of the entanglement between $R$ and $Q$ can be recovered after the noisy channel. Before the noise channel, the \gls{ci} is maximal, $I_0 = \log 2$, while after applying a CPTP map, i.e.~a quantum noise channel, the \gls{ci} can only decrease, $I\leq I_0 = \log 2$ \cite{schumacher_quantum_1996}. This implies, that $I = \log 2$ is both sufficient and necessary for the existence of a \gls{qec} protocol that recovers the logical state with certainty \cite{schumacher_quantum_1996}.\\
\begin{figure}
    \centering
    \includegraphics{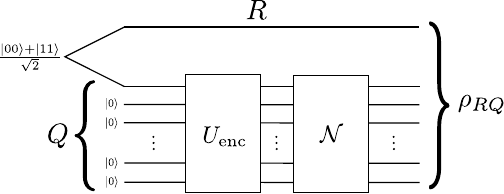}
    \caption[Setup for the coherent information in the context of QEC.]{\justifying Setup for the \gls{ci} in the context of \gls{qec}. A noiseless reference qubit (R) is entangled with another qubit, which is then encoded by $U_\textrm{enc}$ into a logical qubit (Q). This creates an equal superposition $(\ket{0_R 0_L} + \ket{1_R 1_L})/\sqrt{2}$ between reference qubit and logical qubit. The system $Q$ undergoes a noise channel $\mathcal{N}$. Illustration adapted from Fig.~1 of \cite{colmenarez_accurate_2024}.}
    \label{fig:CI-setting}
\end{figure}

\glsresetall
\section{Coherent information in the training process} \label{sec:ci-theo}
In this section, we describe with detailed derivations the role of \gls{ci} in the training process of neural network decoders.
In Sec.~\ref{sec:rewriting-main} we derive the rewriting of the \gls{ci} in terms of the logical operator of the encoding. Then, in Sec.~\ref{sec:interpretation-main} we present the \gls{ci} as lower bound on the loss function and as interpretation of the \gls{bce} for training neural network decoders. Finally, in Sec.~\ref{sec:STL-main} we present a definition of pure errors and the relevant logical operators used as labels for supervised learning.

\subsection{Rewriting the coherent information} \label{sec:rewriting-main}
As discussed in previous sections, the \gls{ci} quantifies the information recoverable from a logical state through a noisy channel, with a crossing at the threshold indicating the transition from high logical error rates to effective error suppression as the code distance increases. 
In the following derivations, we consider the setting in Fig.~\ref{fig:CI-setting} and rewrite the \gls{ci} in terms of error probability useful for decoders. We focus on the case of a single logical qubit ($k=1$); the generalization to multiple logical qubits is straightforward. We initially entangle the logical qubit  with the reference qubit, thereby preparing a Bell state
\begin{equation}
    \ket{\psi_0} = \sum_{l,r} \psi_0(l,r) \ket{s_0,l,r},
\end{equation}
where 
\begin{equation}
    \psi_0 (l,r) = 
\begin{cases} 
      \frac{1}{\sqrt{2}} & l=r, \\
      0 & l \neq r,
\end{cases}
\end{equation}
Here, the state of the qubits forming the system $Q$, representing the error correcting code, is described via the eigenvalue of the stabilizers $S_i$. The state $\ket{s_0,l,r}$ denotes the trivial syndrome, indicating that the state of the qubits is in the codespace. The logical quantum number $l$ specifies the logical state, corresponding to the eigenvalue of the respective logical operator, and the $r$ the state of the reference qubit in the computational basis. 
Considering a noise channel $\mathcal{N}(\rho) = \sum_E p(E) E \rho E^\dagger$ for some Pauli noise operator $E$ acting on system $Q$ and an initial state $\rho = \ket{\psi_0}\bra{\psi_0}$, we can rewrite the joint density matrix as
\begin{widetext}
\begin{equation}\label{eq:rho_rq}
\begin{aligned}
    \rho_{RQ} &= \sum_E p(E) E \ket{\psi_0} \bra{\psi_0} E^\dagger
    = \sum_s p(s) \sum_{l, r, l^\prime, r^\prime} \sum_E \frac{p(E) \delta_{s, s_E}}{p(s)} \psi_E(l,r) \psi_E^{*}(l^\prime, r^\prime) \ket{s, l, r} \bra{s, l^\prime, r^\prime}.
\end{aligned}
\end{equation}
\end{widetext}
Here, $p(s)$ describes the syndrome probability distribution, $\ket{\psi_0}$ the initial state, and $\ket{\psi_E}$ the erroneous state, given by
\begin{equation}
\begin{aligned}
    \ket{\psi_E} = E\ket{\psi_0} = \sum_{s, l, r} \delta_{s, s_E} \psi_E(l,r) \ket{s, l, r}.
\end{aligned}
\end{equation}
In general, the amplitude $\psi_E(l,r)$ depends on the error chain since an error $E$ can flip the value of the logical quantum number. 
In this formulation, it becomes clear that $\rho_{RQ}$ is block-diagonal, with each block corresponding to a different syndrome $s$. We can therefore consider each decoupled block, corresponding to a syndrome $s$, separately.

In the following we use that $\frac{p(E) \delta_{s,s_E}}{p(s)} = \frac{p(E,s)}{p(s)} = p(E|s)$ is the probability of an error chain $E$ given a measured syndrome $s$ and therefore zero for errors that cannot have caused the measured syndrome. Moreover, we find that, for a given syndrome, $\psi_E(l,r)$ depends only on whether and which logical operator was induced, rather than on the specific error chain. Thus we denote the amplitude $\psi_\lambda^s(l,r)$ by the logical operator $\lambda$ and the syndrome $s$ and rewrite the entries corresponding to the block for the syndrome $s$ as
\begin{widetext}
\begin{equation}\label{eq:logical_error}
\begin{aligned}
    \sum_E \frac{p(E) \delta_{s, s_E}}{p(s)} \psi_E(l,r) \psi_E^{*}(l^\prime, r^\prime) 
    = \sum_\lambda \psi_\lambda^s(l,r) (\psi_\lambda^s(l^\prime, r^\prime))^{*} \sum_{E_\lambda} p(E|s) 
    = \sum_\lambda p(\lambda |s) \psi_\lambda^s(l,r) (\psi_\lambda^s(l^\prime, r^\prime))^{*},
\end{aligned}
\end{equation}
\end{widetext}
with $\lambda \in \{1, X, Y, Z \}$ and $E_\lambda$ induces a logical operator $\lambda$, where a logical $X$ ($Y$, $Z$) is induced when $E_\lambda$ anticommutes with the respective logical operator (where anticommutation with both $X$ and $Z$ induces the $Y$ logical operator), i.e.~when, for the surface code, an odd number of qubits is affected by an $X$- ($Y$-, $Z$-) type error. 
Inserting Eq.~\eqref{eq:logical_error} into \eqref{eq:rho_rq} we obtain
\begin{equation}
\vspace{-0.5em}
\begin{aligned}
    \rho_{RQ} &= \sum_s p(s) \sum_\lambda p(\lambda | s) \ket{s,\lambda} \bra{s, \lambda} \\
    &\text{with} \ \ket{s, \lambda} = \sum_{l,r} \psi_\lambda (l,r) \ket{s,l,r},
\end{aligned}
\end{equation}
rewriting $\rho$ in its eigenbasis. Thereby, we have effectively transformed the four basis vectors for each syndrome $s$ to
\begin{align}
    \left\{
\begin{array}{l}
\ket{s,0,0} \\
\ket{s,0,1} \\
\ket{s,1,0} \\
\ket{s,1,1}
\end{array}
\right\}
\quad \longrightarrow \quad
\left\{
\begin{array}{l}
\frac{1}{\sqrt{2}} (\ket{s,0,0} + \ket{s,1,1}) \\
\frac{1}{\sqrt{2}} (\ket{s,1,0} + \ket{s,0,1}) \\
\frac{i}{\sqrt{2}} (\ket{s,0,1} - \ket{s,1,0}) \\
\frac{1}{\sqrt{2}} (\ket{s,0,0} - \ket{s,1,1})
\end{array}
\right\},
\end{align}
corresponding to the effect of each of the four logical operators applied to the first qubit of the initial Bell pair. For multiple logical qubits, this transformation is performed individually for each logical qubit. We obtain $\rho_Q$ by taking the partial trace over subsystem $R$,

\begin{widetext}
\begin{equation}
\begin{aligned}
    \rho_Q &= \text{Tr}_R(\rho_{RQ}) 
    = \sum_s p(s) \sum_\lambda p(\lambda | s) \sum_{l,l^\prime} \ket{s,l}\bra{s,l^\prime} \biggl( \sum_r \psi_\lambda(l,r) \psi_\lambda(l^\prime, r)^* \biggr) 
    = \sum_s \sum_l \frac{p(s)}{2} \ket{s,l} \bra{s,l},
\end{aligned}
\end{equation}
\end{widetext}
where we used $\sum_r \psi_\lambda(l,r) \psi_\lambda(l^\prime, r)^* = \frac{1}{2} \ \delta_{l, l^\prime} $ and $\sum_\lambda p(\lambda | s) = 1$. This is valid for a generic noise model since the Pauli error operators only modify the prefactor $\psi_\lambda^s (l,r)$ by flipping the logical state $l$ and/or introducing a phase factor. 

Having $\rho_{RQ}$ and $\rho_Q$ in a diagonal form we can now rewrite the \gls{ci} as 
\begin{equation}
\begin{aligned}
    I &= S(\rho_Q) - S(\rho_{RQ}) \\ &= \log{2} + \sum_s p(s) \sum_\lambda p(\lambda | s) \log p(\lambda | s). \label{eq:ci-rewriting}
\end{aligned}
\end{equation}

Hence any reduction in the \gls{ci} denotes a non-vanishing Shannon entropy $ \sum_\lambda p(\lambda | s) \log p(\lambda | s)$ of the conditional probability distribution of logical cosets (the same that appears in \gls{mld}, see Sec.~\ref{sec:mld}) averaged over all syndromes $s$. 
When considering noisy measurements, all the redundant degrees of freedom can be absorbed into the syndrome $s$.

\subsection{Interpretation of Training with BCE} \label{sec:interpretation-main}

Now we show how the \gls{ci}, as shown in Eq.~\eqref{eq:ci-rewriting}, is related to the training of neural network decoders. First, let us consider factorizing \[
p(\lambda | s) = p(\lambda_{z} | \lambda_{x}, s) \, p(\lambda_{x} | s).
\] 
The network is trained for binary classification of two logical operators, $X$ and $Z$. Therefore, the \gls{bce} is employed as loss function. For a batch of $n$ samples, predicting two binary output tokens $\hat{\lambda}$, the loss becomes
\begin{equation}
\begin{aligned}
    \mathcal{L}_\theta(s,\lambda) &= - \sum_{i=1}^n \sum_{j=x,z} \Bigl( \lambda_{ij} \log \hat{\lambda}_{ij} (s;\theta) \\ &\quad + (1 - \lambda_{ij}) \log \bigl( 1 - \hat{\lambda}_{ij} (s; \theta) \bigr) \Bigr).
\end{aligned}
\end{equation}
Minimizing the \gls{bce} loss function is equivalent to \gls{mle}, i.e.~maximizing the log-likelihood of the observed samples,
\begin{equation}
\begin{aligned}
    &- \sum_{s, \lambda \sim p(s, \lambda)} \log{q_\theta (\lambda| s)} \\
    = &- \sum_{s, \lambda \sim p(s, \lambda)} \sum_{j=x,z} \log{q_\theta (\lambda_j|\lambda_{<j}, s)}\\
    = &- \sum_{s, \lambda \sim p(s, \lambda)} \sum_{j=x,z} \Bigl( \lambda_{j} \log \hat{\lambda}_{j} (s;\theta) \\ & \quad \quad \quad \quad \quad + (1 - \lambda_{j}) \log \bigl( 1 - \hat{\lambda}_{j} (s; \theta) \bigr) \Bigr).
\end{aligned}
\end{equation}
Here, $\hat{\lambda}_j(s;\theta) = q_\theta(\lambda_j|\lambda_{<j},s)$ denotes the predicted probability of logical operator component $j$ by the neural network $q_\theta$, where the autoregressive ordering $x \prec z$ is adopted, such that $\lambda_{<x} = \emptyset$ and $\lambda_{<z} = \lambda_x$. Minimizing $\mathcal{L}_\theta$ with respect to $\theta$ is therefore equivalent to maximizing the log-likelihood $\log q_\theta(\lambda|s)$. This implies that the negative BCE loss is bounded above by the \gls{ci}, which quantifies the maximum recoverable information about the logical operators after the noise channel. We obtain
\begin{widetext}
\begin{equation}
\begin{aligned}
    &- \mathbb{E}_{s, \lambda \sim p(s, \lambda)} \mathcal{L}_\theta(s, \lambda)\\ 
    &= \sum_{s, \lambda \sim p(s, \lambda)} \sum_{j=x,z} \Bigl( \lambda_{j} \log \hat{\lambda}_{j} + (1 - \lambda_{j}) \log \bigl( 1 - \hat{\lambda}_{j} \bigr) \Bigr) \\
    &= \sum_s p(s) \sum_{\lambda} p(\lambda | s) 
    \sum_{j=x,z} \Bigl(\lambda_{j} \log \hat{\lambda}_{j} + (1 - \lambda_{j}) \log \bigl( 1 - \hat{\lambda}_{j} \bigr) \Bigr) \\
    &= \sum_s p(s) \biggl( 
        p(0,0|s) \bigl( \log (1 - \hat{\lambda}_x) + \log (1 - \hat{\lambda}_z(\lambda_x=0)) \bigr) + p(0,1|s) \bigl( \log (1 - \hat{\lambda}_x) + \log \hat{\lambda}_z(\lambda_x=0) \bigr) \\
        &\quad \quad \quad \quad \quad + p(1,0|s) \bigl( \log \hat{\lambda}_x + \log (1 - \hat{\lambda}_z(\lambda_x=1)) \bigr) + p(1,1|s) \bigl( \log \hat{\lambda}_x + \log \hat{\lambda}_z(\lambda_x=1) \bigr)
    \biggr) \\
    & = \sum_{s} p(s) \sum_\lambda p(\lambda | s) \log q_\theta(\lambda | s)  \leq \sum_{s} p(s) \sum_\lambda p(\lambda | s) \log p(\lambda | s),
\end{aligned}
\end{equation}
\end{widetext}
with equality, once $\hat{\lambda}_x(s) = q_\theta(\lambda_x | s) = p(\lambda_x | s)$ and $\hat{\lambda}_z(s, \lambda_x) = q_\theta(\lambda_z | \lambda_x, s) = p(\lambda_z | \lambda_x, s)$, i.e.~the network approximates the distribution $p$ perfectly. Thus, the loss function has a unique minimum, where the \gls{ci} is exactly recovered. 
This means that the \gls{bce} loss function is naturally suited for this task, as the network learns to accurately track the logical operators for a given syndrome, thereby maximizing the \gls{ci} and approaching the theoretical maximum of information that can be recovered from the syndrome after the noisy quantum channel.

\subsection{\texorpdfstring{$STL$}{STL}-decomposition with Maximum Support Logical Operator} \label{sec:STL-main}
\begin{figure*}
    \centering
    \includegraphics[width=\textwidth]{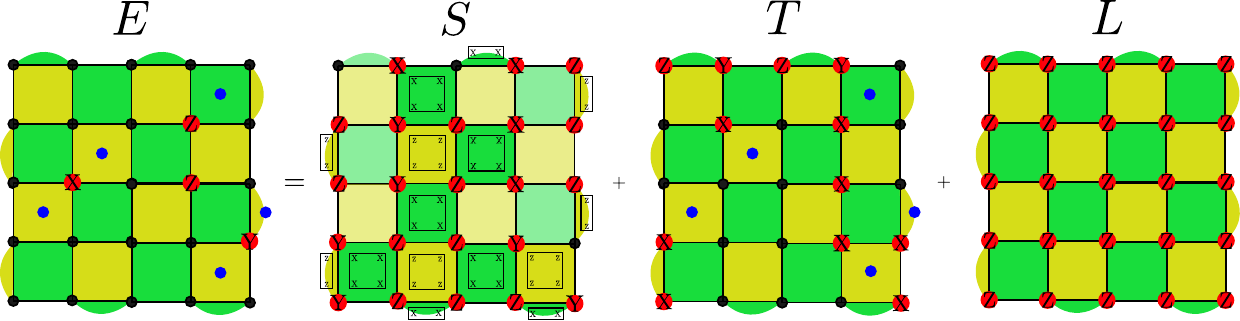}
    \caption[$STL$-decomposition of an error chain $E$ into its stabilizer $S$, pure error $T$, and logical operator $L$.]{\justifying $STL$-decomposition of an error chain $E$ (left) into stabilizer $S$ (middle left), pure error $T$ (middle right), and logical operator $L$ (right) on the surface code. Erroneous data qubits are marked by red dots. The excited stabilizers are marked by blue dots. Non-transparent plaquettes represent the stabilizer components of $S$. The pure error $T$ excites the syndrome but commutes with all stabilizers and logicals. The logical operator $L$ is chosen as the logical operator of maximal support, applying a Pauli operator to each data qubit. Adapted from Ref.~\cite{varsamopoulos_decoding_2017}.}
    \label{fig:STL-decomposition}
\end{figure*}

Most neural network decoders approach the decoding task as a classification problem \cite{torlai_neural_2017, chamberland_deep_2018,varsamopoulos_decoding_2017,maskara_advantages_2019,wagner_symmetries_2020,meinerz_scalable_2022,overwater_neural-network_2022,cao_qecgpt_2023,wang_transformer-qec_2023,blue_machine_2025}. Hereby, the error chain is decomposed as
\begin{equation}
    E = S T L, \label{eq:STL-decomposition}
\end{equation}
where $S$ is an element of the stabilizer group, $T$ is a Pauli operator that triggers the measured syndrome and $L$ a logical Pauli operator \cite{poulin_optimal_2006,varsamopoulos_decoding_2017}. Such a decomposition is exemplarily shown in Fig.~\ref{fig:STL-decomposition}. 
The definition of pure errors is not universal and can vary across different works, as Refs.~\cite{iolius_decoding_2024,varsamopoulos_decoding_2017,poulin_optimal_2006} rely solely on the defining property of pure errors $T$,
\begin{equation}
    [T_i, s_j] = \delta_{i,j} \ \forall i,j,
\end{equation}
i.e.~a pure error does not commute with exactly one generator of the stabilizer group, where the generators represent the measured parity checks. Commutation properties of pure errors with elements of the logical group are not considered. Therefore, we will follow and adapt the definition from Ref.~\cite{cao_qecgpt_2023}: A pure error additionally fulfills
\begin{equation}
    [T_i, \lambda_j] = 0 \ \forall i,j,
\end{equation}
meaning that the pure error commutes with any logical operator on the codespace. This has the advantage that the logical operator in the composition of Eq.~\eqref{eq:STL-decomposition} can be obtained via a global parity check, i.e.~calculating the commutator between $E$ and $\lambda_j$. This avoids costly labeling of the input data and allows for a cheap supervised learning approach.

Since $T$ can be easily determined, and the recovery succeeds for any stabilizer $S^\prime$, the recovery operation 
\begin{equation} 
    R=S^\prime T \tilde{L}
\end{equation} 
depends on the accurate estimation of the logical operator $\hat{L}$. This reduces decoding to a classification task with four possible labels $I, \bar{X}, \bar{Y}, \bar{Z}$ (for a $k=1$ code, encoding a single logical qubit). The syndrome-triggering part $T$ can be efficiently calculated using a look-up table, containing pure errors that each trigger an individual stabilizer measurement \cite{varsamopoulos_decoding_2017}. These pure errors are independent of each other and are multiplied to construct $T$.  

While the choice of logical operators $\lambda_j$ is arbitrary, we find it advantageous to use the maximum-support logical operator rather than a boundary-localized Pauli string. By multiplying the parallel representatives of the $X$- and $Z$-type logical operator, we obtain a logical operator acting with the respective Pauli on all data qubits, where measuring the logical operator effectively implements a global parity check. In contrast, when considering only a single string-like logical operator, only a fraction of roughly $\frac{1}{d}$ of the error chains triggers a flip of the logical operator, rendering most data samples uninformative for training, making it harder for the network to learn accurate error distributions at larger code distances. Predicting all separate string-like representatives autoregressively offers no advantage due to their mutual dependence.

The $STL$-decomposition for the maximum support logical operator is illustrated in Fig.~\ref{fig:STL-decomposition}. The logical operator in the decomposition can be obtained by a global parity check on the data qubits. This parity check can be read out in the setting of Fig.~\ref{fig:CI-setting-msmts}. 
Thus, the $STL$-decomposition does not need to be manually designed, but the value of the logical operator can be measured alongside the syndrome, so that labels are cheap and automatically obtained. 
Hereby, the pure errors $T$ are chosen as the complementary error chains, connecting an excited stabilizer to the respective boundary of the code with an even parity, thereby commuting with the logical operators (see Fig.~\ref{fig:STL-decomposition}). Consequently, every pure error has an even parity on the data qubits.

\glsresetall
\section{A transformer-based model for quantum error correction} \label{sec:nn-model}
In this section, we present our transformer-based neural network for \gls{qec} and demonstrate its performance by estimating the \gls{ci} for surface code sizes up to $d=11$ for depolarizing noise in the code capacity setting and up to $d=7$ for phenomenological and circuit-level noise. In Sec.~\ref{sec:model-architecture} we present the network architecture, while in Sec.~\ref{sec:training-details} and Sec.~\ref{sec:curriculum-details} details about the training are discussed. Finally, in Sec.~\ref{sec:estimating-ci-results} the results in estimating the \gls{ci} are shown and in Sec.~\ref{sec:decoding-results} the decoding performance compared to \gls{mwpm} is demonstrated.\\

The setup for this work is shown in Fig.~\ref{fig:CI-setting-msmts}. A quantum state is first entangled with a noiseless reference qubit $R$ and then encoded into a logical state within a system $Q$. The system undergoes either a single round (code capacity setting) or multiple rounds (phenomenological and circuit-level noise: $d$ rounds) of noise and subsequent stabilizer measurements. The final round of stabilizer readouts is assumed to be perfect and therefore not part of the noisy \gls{qec} cycle.

Prior to the noise channel, the entangled Bell state 
\begin{equation}
    \ket{\psi_0} = \frac{\ket{0_R 0_L} + \ket{1_R 1_L}}{\sqrt{2}}
\end{equation}
is stabilized by $X_R X_L$ and $Z_R Z_L$, extending the stabilizer group of the whole system $RQ$ by two generators, hence the state on RQ is stabilized by $d^2 + 1$ operators for a single logical qubit ($k=1$). The Bell stabilizers commute with each other and with all stabilizers of the \gls{qec} code. However, if a Bell stabilizer measurement flips, it indicates a change in the value of the logical operator, as the reference qubit is not affected by noise. Therefore, the two Bell stabilizers are measured at the end of the \gls{qec} cycle, enabling the simultaneous readout of both logical operators of the encoded quantum state. 

The machine learning task of learning the \gls{ci} reduces to estimating the conditional densities $p(\lambda | s)$ of each logical operator $\lambda$ given a measured syndrome $s$, where $\lambda$ can be written as a two-bit binary number representing the state of the encoded logical qubit. These probabilities will be used to compute the \gls{ci} in terms of $p(\lambda | s)$ according to Eq.~\eqref{eq:ci-rewriting}. 

\begin{figure}
    \centering
    \includegraphics[width=\linewidth]{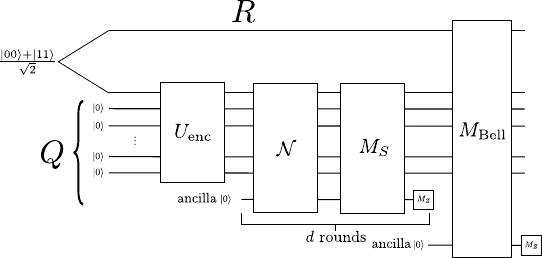}
    \caption[Circuit for estimating conditional logical probabilities $p(\lambda|s)$.]{\justifying Circuit for estimating conditional logical probabilities $p(\lambda|s)$. A noiseless reference qubit $R$ is prepared in a Bell pair with a physical qubit, which is subsequently encoded into a logical qubit, forming the entangled state $\frac{1}{\sqrt{2}}(\ket{0_R 0_L} + \ket{1_R 1_L})$ via a single noiseless stabilizer measurement. The logical system $Q$ then undergoes one round of noise (code capacity) or $d$ rounds (phenomenological/circuit-level) of noise $\mathcal{N}$ and stabilizer readout $M_S$, with the final round assumed noiseless. At each round, fresh ancilla qubits initialized to $\ket{0}$ are introduced to facilitate the stabilizer measurements and subsequently discarded. Stabilizer outcomes are tracked as changes relative to the previous round rather than against a fixed $+1$ reference, accommodating arbitrary initial states. Finally, a Bell measurement $M_\text{Bell}$ on $RQ$ reveals any logical flips, from which the conditional logical probabilities are extracted.}
    \label{fig:CI-setting-msmts}
\end{figure}

\subsection{Model architecture} \label{sec:model-architecture}
\begin{figure*}
    \centering
    \includegraphics[width=0.9\linewidth]{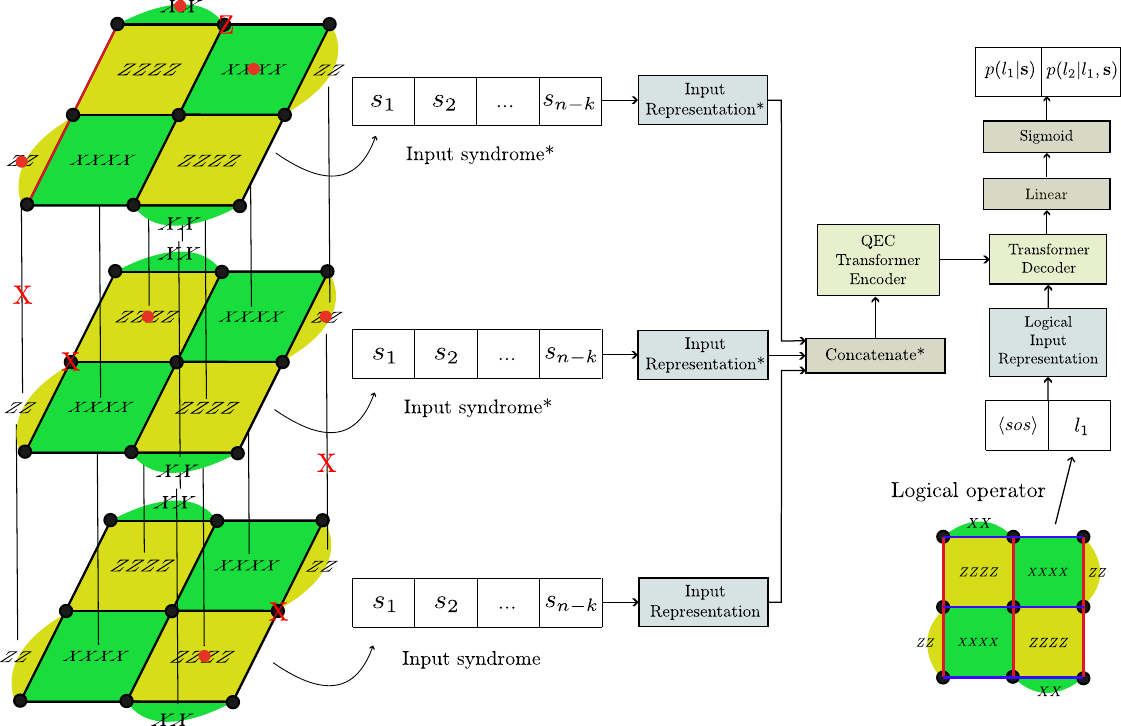}
    \caption{\justifying Model architecture. Full network overview with circuit-level noise affecting data and ancilla qubits. Stabilizer measurements are repeated over $d$ rounds, introducing both spacelike and timelike error correlations (e.g., a timelike $Z$-error connecting two weight-2 stabilizers after round two). Details about the individual model components are shown in Fig.~\ref{fig:structure-depolarizing-components}.}
    \label{fig:structure-depolarizing}
\end{figure*}
\begin{figure*}
    \centering
    \includegraphics[width=\linewidth]{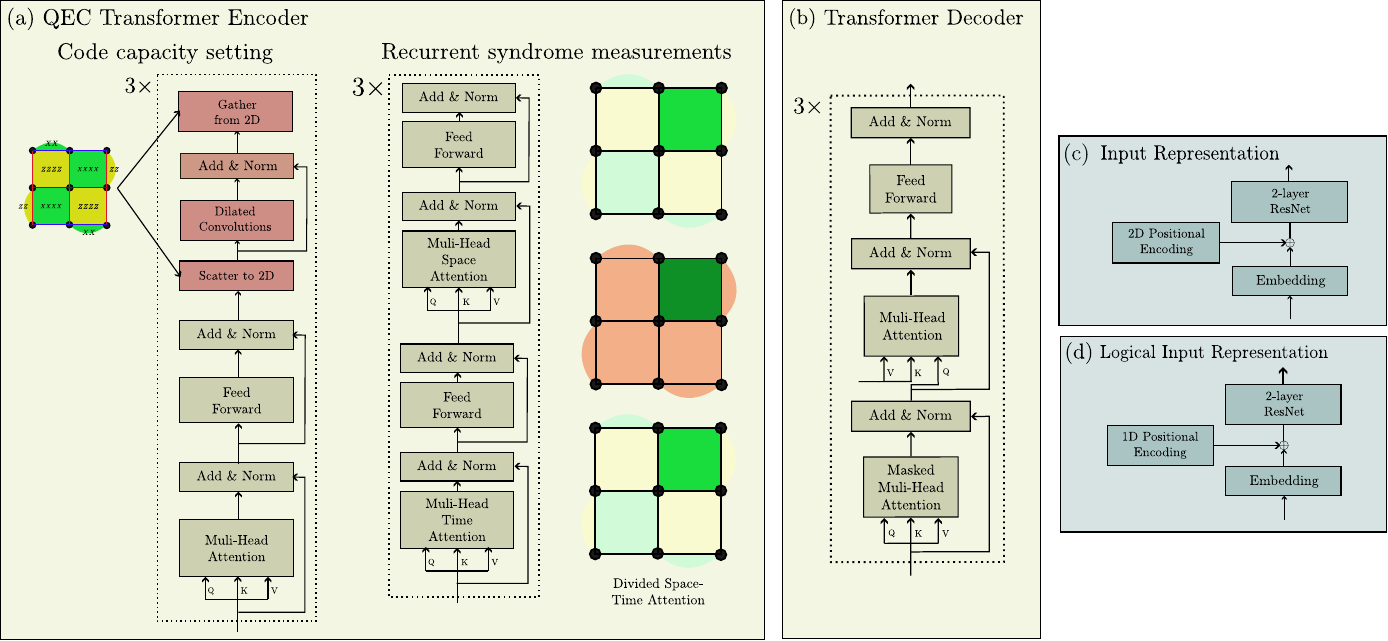}
    \caption{\justifying Model components.   
    (a) Encoder structure. For depolarizing noise, three layers combine multi-head attention, feedforward, and convolutional modules to capture short- and long-range syndrome dependencies. For circuit-level and phenomenological noise, a divided space-time attention mechanism is used: attention alternates between the time axis (tokens at the same spatial position across rounds) and the spatial axis (tokens within one round). Illustration of divided space-time attention: the current token (dark green) attends to same-position tokens across time (green), then to same-round tokens in space (red).
    (b) The decoder processes logical tokens using masked self-attention and cross-attention to all encoded stabilizer tokens, estimating $p(\lambda|s)$ autoregressively.  
    (c) Stabilizer and (d) logical tokens are embedded using positional encodings and residual blocks; a separate embedding is used for the final perfect measurement round.}
    \label{fig:structure-depolarizing-components}
\end{figure*}
We employ a transformer encoder-decoder architecture to estimate $p(\lambda | s)$. Transformers \cite{vaswani_attention_2023} are sequence-to-sequence models that process discrete input tokens via the attention mechanisms, allowing the model to learn correlations between tokens across the full input sequence. Originally developed for natural language processing \cite{vaswani_attention_2023, devlin_bert_2019, brown_language_2020}, they have since been successfully applied to probability density estimation \cite{fakoor_trade_2020} and, more recently, to \gls{qec} \cite{cao_qecgpt_2023, choukroun_deep_2024, wang_transformer-qec_2023, bausch_learning_2024, blue_machine_2025}. In our setting, each stabilizer measurement corresponds to one token, and the attention mechanism captures correlations in the syndrome across space and time. We refer to Figs.~\ref{fig:structure-depolarizing} and \ref{fig:structure-depolarizing-components} for a full overview of our architecture, and list all hyperparameters in App.~\ref{app:hyperparameters}, Tab.~\ref{tab:hyperparameters}.

\paragraph{Input representation:}
Each stabilizer measurement is embedded into a $D$-dimensional vector and combined with a fixed sinusoidal positional encoding \cite{vaswani_attention_2023} that encodes the spatial position (and temporal position for $r > 1$ rounds) of each stabilizer token. Specifically, we use a two-dimensional encoding ($m=2$) for code capacity noise and a three-dimensional encoding ($m=3$) for phenomenological and circuit-level noise, with the surface code placed on a grid of size $(2d+1) \times r$. The embedded tokens are further refined by a two-layer residual network before being passed to the encoder. Logical operator tokens are embedded analogously with a one-dimensional positional encoding.

\paragraph{Encoding:} For the code capacity setting ($r=1$), we extend the encoder layers with two-dimensional convolutions alongside the attention mechanism. The convolutional layers capture local spatial correlations arising from isolated data qubit errors, while the attention mechanism accounts for longer error chains connecting stabilizers further apart, inspired by Ref.~\cite{bausch_learning_2024}.
\paragraph{Divided space-time attention:}
For noise models beyond code capacity, stabilizer measurements are faulty and must be repeated over $d$ rounds, introducing both spacelike and timelike error 
correlations. A key design challenge is to treat these correlations on an equal footing. Recurrent transformer architectures \cite{bausch_learning_2024} introduce a bias toward spatially local errors over timelike ones, while temporal pooling \cite{choukroun_deep_2024} discards much of the temporal structure. Local attention windows \cite{blue_machine_2025} fail to capture non-local error chains that span the full code distance. Full attention over all $d(d^2-1)$ tokens \cite{wang_transformer-qec_2023} avoids these issues but becomes computationally prohibitive at large distances.

To address this, we introduce a divided space-time attention mechanism \cite{bertasius_is_2021, lee_cast_2024}, which, to the best of our knowledge, has not been applied to \gls{qec} prior to this work. Rather than attending over all tokens simultaneously, each token attends separately to all tokens at the same spatial position across time (temporal attention) and then to all tokens within the same measurement round (spatial attention). This reduces the number of tokens each token attends to from $d(d^2-1)$ to $(d^2 - 1 + d)$, while preserving full global attention in both space and time.

\paragraph{Autoregressive decoder:}
The encoded stabilizer state is passed to a transformer decoder, which estimates the logical operator probabilities autoregressively as $p(\lambda|s) = p(\lambda_z|\lambda_x, s)\, p(\lambda_x|s)$. This factorization is necessary because, under depolarizing noise, the $X$- and $Z$-components of the logical operator are correlated: $p(Y) \neq p(X)p(Z)$. The decoder uses masked self-attention on the logical tokens to enforce this autoregressive structure, followed by cross-attention to all encoded stabilizer tokens across all measurement rounds.

\subsection{Training Details} \label{sec:training-details}
The network is trained in a supervised fashion. The labels, i.e.~the logical operators, are sampled alongside the syndromes, eliminating the need for costly labeling of correctability. All data samples are obtained using \textit{Stim} for an efficient simulation of stabilizer circuits \cite{gidney_stim_2021}. Sampling the data is approximately two orders of magnitude faster than training the network for a single epoch and is therefore almost negligible in terms of runtime. Consequently, we choose an online-training approach, where data is sampled on-the-fly from the data distribution $p(s)$, thereby diminishing the effects of a finite-size dataset.

For each selected noise value, a separate network is trained using data sampled specifically from this noise rate. The chosen noise strengths are listed in the appendix in Tab.~\ref{tab:noise-strengths}.

The learning rate is initially increased linearly over five epochs to gradually accumulate gradients and stabilize early training \cite{liu_variance_2021}, before being annealed following a cosine schedule, smoothly decaying from the initial value down to $10^{-8}$ over $500$ to $750$ epochs. If the training loss plateaus before the end of the decay cycle, one or several warm restarts of the cosine schedule are performed. If the training loss stagnates at a value consistent with conditional collapse (see Sec.~\ref{sec:curriculum-details}), the training is restarted from scratch. We set the initial learning rate to $10^{-4}$ to $10^{-5}$ and schedule the learning rate down to $10^{-8}$. A batch size of $1000$ is chosen, as we observe that a larger batch size improves the stability of the training.

Since the machine learning task is \#P-hard, we expect the training time required for convergence to scale exponentially with the code distance. Accordingly, we employ more frequent warm restarts at higher distances to facilitate optimization.

\subsection{Curriculum learning to avoid conditional collapse} \label{sec:curriculum-details}
As separate models are trained for different noise rates, we observe a conditional collapse, $p(\lambda|s) \approx p(\lambda)$, happening at higher noise rates, where tracking the logical information gets increasingly difficult. The conditional output probability $p(\lambda|s)$ collapses onto the overall probability $p(\lambda)$, causing the decoder to ignore the syndromes and to output only the average logical operator. The training gets stuck in a local minimum.

This phenomenon happens since, at high noise rates, the syndrome becomes too complex, so that the network cannot immediately recognize patterns in the training data, leading the training to a local minimum corresponding to the conditional collapse. The labels are effectively averaged over all syndromes, resulting in an estimate of the \gls{ci} of $I=-\log 2$. This value is already reached at low noise rates, when on average, one error occurs on some data qubit of the \gls{qec} code. Since the syndrome is ignored, each logical operator is predicted with equal probability and no information can be retrieved by the network after the noise channel. 

To prevent a conditional collapse of the output probability, $p(\lambda|s) \approx p(\lambda)$, we train the network for distances $d \geq 5$ using a curriculum learning approach. Curriculum learning describes a strategy in which the samples are presented in a structured manner, rather than being randomly shuffled, to guide the learning process \cite{bengio_curriculum_2009}. This technique helps the network converging and improves the achieved minimum, as the training gets less likely trapped in local minima. 

In our case we use this strategy as follows: We begin by training the network for the lowest chosen noise rate. This noise rate is chosen so that the conditional probability $p(\lambda | s)$ is easy to learn for the majority of the syndromes. This initial model then serves as a pre-trained network, which is subsequently fine-tuned on the syndromes it has already seen while gradually learning correlations in more complex syndromes arising from a higher number of data qubit errors. However, this prevents the training from getting trapped in a local minimum associated with conditional collapse, as the number of unseen (but relevant) syndromes is increased only slightly from noise rate to noise rate. Thereby, we not only avoid the conditional collapse but also reduce the computational cost of training the networks for higher noise rates, as we effectively fine-tune models that have already been trained at slightly lower noise rates.

\subsection{Estimating the Coherent Information} \label{sec:estimating-ci-results}
We evaluate the performance of our transformer-based neural network on estimating the \gls{ci} of the surface code in three different settings: code capacity with depolarizing errors, phenomenological noise, and circuit-level noise. The model's accuracy in approximating $p(\lambda|s)$ is analyzed and compared to theoretical expectations of the \gls{ci}, obtained as in Ref.~\cite{colmenarez_accurate_2024}. 
Yet, directly estimating the \gls{ci} via $p(\lambda|s)$ introduces some instabilities, which we discuss next.
During inference, the syndrome is passed to the encoder alongside a start-of-sequence token, and the logical operator probabilities are predicted autoregressively: first $p(\lambda_x|s)$, then $p(\lambda_z|\lambda_x, s)$ by conditioning on both possible outcomes of the first token.

This autoregressive structure introduces a source of instability in the \gls{ci} estimates. For syndromes with no excited $X$-type stabilizers, the probability of a logical $X$ flip is close to zero, meaning that the case $\lambda_x = 1$ is rarely represented in the training data, as this situation is very unlikely to occur (implying that a logical error has happened without exciting any stabilizer).
Estimating $p(\lambda_z | \lambda_x = 1, s)$ in this regime is therefore unreliable, which can degrade the \gls{ci} estimate. 

Since training the network with \gls{bce} loss is equivalent to maximizing the information that the network can recover from the corrupted state after the noise channel, the \gls{ci} serves as an upper bound for the network predictions (see Sec.~\ref{sec:interpretation-main}). We can therefore use the validation loss as a proxy for the \gls{ci} to obtain more robust results than explicitly computing $I = \sum_s p(s) \sum_\lambda p(\lambda|s) \log p(\lambda|s)$.

This instability is specific to probability estimation and does not affect decoding, where only the most likely logical operator is needed. In that case, the model feeds back only the most probable first token to generate the second, avoiding the problematic low-probability conditioning entirely.
The results for the \gls{ci} are presented in Fig.~\ref{fig:results_CI}. 
We verify the finite-size scaling of the \gls{ci} by rescaling the noise probability according to \( p \rightarrow (p - p_{th}) d^{1/\nu} \), which approximately collapses the curves for different code distances onto a single scaling function.

For the following results, the uncertainties of the network predictions for the \gls{ci} were obtained via $100$ bootstrap resamplings. For the \gls{ci} estimates, the statistical uncertainties are too small to be visible and are hidden behind the markers in Fig.~\ref{fig:results_CI}. The uncertainties of the threshold value and the critical exponent were obtained by performing a finite-size scaling analysis on $250$ resampled datasets. 

\paragraph{Code capacity:} We extract a threshold of \( p_{th} = 0.172(1) \), which is close to the optimal threshold \( p_{th} = 0.189(3) \) reported in Ref.~\cite{bombin_strong_2012}, and a critical exponent \( \nu = 1.81(1) \), close to the one estimated in \cite{huang_coherent_2024}. Notably, our estimated threshold is lower than the noise probability at which the \gls{ci} curves for different distances intersect. We attribute this discrepancy to the exponential increase in decoding complexity with the distance, combined with the fact that we did not scale training time accordingly. As a result, the network underperforms relative to the theoretical maximum the larger the distance, leading to a lower apparent threshold in the scaling analysis.
However the threshold obtained is larger than the $p_{th}=0.14$ achieved by \gls{mwpm} decoding \cite{iolius_decoding_2024}.
For smaller distances \( d = 3 \) and \( d = 5 \), the network closely matches the numerically calculated \gls{ci}, successfully recovering the maximally achievable information after the noise channel. Both curves intersect at \( p_{th}^{(3,5)} = 0.185(1) \). 

\paragraph{Phenomenological noise:} Under phenomenological noise, we estimate the threshold as \( p_{th} = 5.15(1) \times 10^{-2} \) and the critical exponent as \( \nu = 2.64(12) \). Unlike in the depolarizing noise scenario, a consensus on the optimal threshold for phenomenological noise is lacking in the literature. For instance, Ref.~\cite{wang_transformer-qec_2023} reports a threshold of \( p_{th} = 3.8 \times 10^{-2} \) using a neural network decoder with \( d \) rounds of syndrome measurements. In contrast, Rispler et al.~\cite{rispler_random_2024} estimate a higher threshold of \( p_{th} = 6 \times 10^{-2} \) for the toric code, based on a mapping of surface code decoding under phenomenological noise to the random coupled-plaquette gauge model. The latter is expected to yield a threshold close to that of the surface code. However, such threshold estimates for finite-size systems can be highly sensitive to implementation details—such as the number of noise rounds, the measurement strategy for the stabilizers, and whether the final measurement round is assumed to be noiseless. It is therefore plausible that the threshold we observe is influenced by finite-size effects, given the limited distances considered. To facilitate reproducibility and comparison with future studies, we provide the explicit stabilizer measurement circuit used for the distance-three code in Appendix~\ref{sec:circuits}.

In contrast to the code capacity setting with depolarizing noise, where the threshold is typically associated with a zero crossing of the \gls{ci}, we observe an intersection at a nonzero value of the \gls{ci} under phenomenological noise. Huang et al.~\cite{huang_coherent_2024} attributed the zero crossing in the case of independent bit/phase flip to the emergent self-duality of the random-bond Ising model (RBIM) in the thermodynamic limit. Based on our results, we conjecture that this self-duality is broken under phenomenological noise or must be re-interpreted to account for the effect of noisy measurements, leading to the observed offset in the intersection point of the \gls{ci} curves.

\paragraph{Circuit-level noise:} Only few studies have explored the optimal threshold of \gls{qec} codes under circuit-level noise, as decoding syndromes sampled from this noise model is particularly challenging. Moreover, the issues concerning the threshold under phenomenological noise apply even more to circuit-level noise: The threshold is highly dependent on the specifics of the noise model, making it even more complicated to compare results across different works, as no universal noise model for circuit-level noise has been established. The detailed circuit of the syndrome measurements for a distance three code that we use, is shown in Appendix \ref{sec:circuits}.

Nevertheless, rough estimates can still be found. Fowler et al.~\cite{fowler_surface_2012} report a suboptimal threshold of approximately $p_{th} \approx 5.7 \times 10^{-3}$ using \gls{mwpm} decoding, and Rispler et al.~\cite{rispler_random_2024} obtain an approximation of the threshold of $p_{th}=1.4 \times 10^{-2}$ for the toric code by using an effective noise model for circuit-level noise that can be mapped to the random coupled-plaquette gauge model, the same statistical mechanical model that has also been used to estimate the phenomenological noise threshold, but with adapted interaction strengths. This suggests that neural networks decoders that perform well under phenomenological noise, should, at least, achieve reasonable results under circuit-level noise. 
\begin{figure*}
    \centering
    \includegraphics[width=\linewidth]{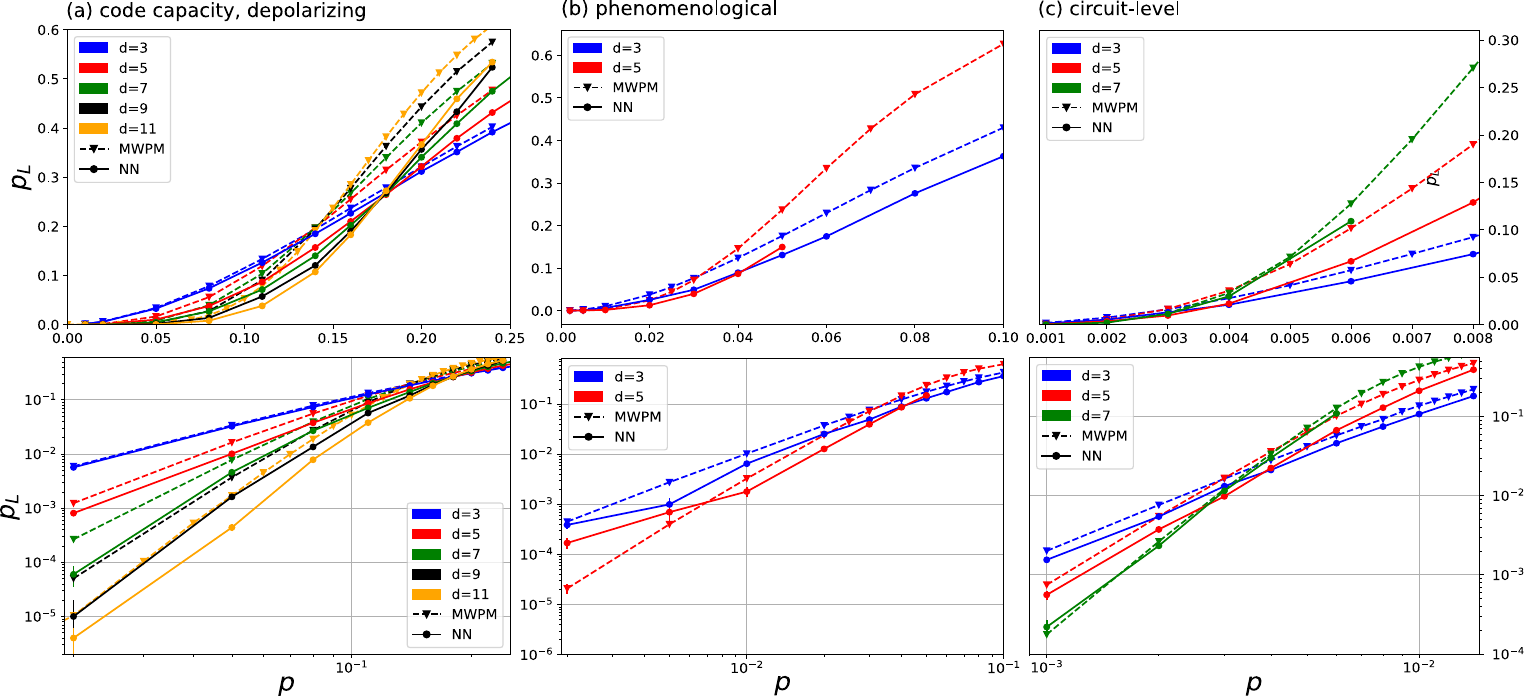}
    \caption[Decoding performance of our neural network compared to MWPM.]{\justifying Decoding performance of our neural network vs.\ the \gls{mwpm} decoder. We show (a) code capacity for $d=3,5,7,9,11$, (b) phenomenological noise for $d=3,5$, and (c) circuit-level noise for $d=3,5,7,9$. Each dot is a network trained at a specific noise probability; dashed lines show the performance of \gls{mwpm}. On the left, logical error rates (linear scale) are plotted, while on the right the scaling behavior (log-log scale) is shown.}
    \label{fig:log-error}
\end{figure*}

The results for the estimated \gls{ci} are shown in Fig.~\ref{fig:results_CI}(c). We obtain a threshold at $p_{th}=3.00(1) \times 10^{-3}$ with a \gls{ci} of around $I \approx 0.95(1)$, and a critical exponent of $\nu = 1.19(4)$. Similar to the phenomenological noise case, the threshold occurs at a nonzero \gls{ci}, implying that much of the information can be recovered during error correction near the threshold for finite distances. In the following section, we further evaluate the network performance in terms of decoding.\\

\subsection{Decoding the surface code} \label{sec:decoding-results}
We choose to approach the decoding problem based on the $STL$-decomposition of an error chain $E$ into a stabilizer $S$, a pure error $T$, and a logical operator $L$, see Sec.~\ref{sec:STL-main}. The goal of the decoder is to correctly classify the logical operator or, more precisely, to estimate the probabilities of each logical operator for a given syndrome. By considering the maximum support logical operator, it is sufficient for decoding to classify the syndrome based on global properties, such as the value of the logical operator, which represents a global parity check. We measure this global parity check alongside the syndrome. The logical operator is generated autoregressively by choosing the most likely logical operator.

Our results for depolarizing noise in the code capacity setting, phenomenological noise, and circuit-level noise are shown in Fig.~\ref{fig:log-error}. The plots in the top row show the logical vs.~physical error rate on a linear scale, illustrating the threshold and performance compared to the \gls{mwpm} algorithm. In contrast, the plots in the bottom row show the scaling of both \gls{mwpm} and our model below threshold, using a log-log scale. We note that our model outperforms \gls{mwpm} significantly and scales similarly, as $\frac{(d+1)}{2}$ below threshold. For depolarizing noise in the code capacity setting, the logical error rates exhibits approximately the same finite-size scaling behavior as the \gls{ci}, with a threshold $p_\mathrm{th} = 0.172(1)$ and a critical exponent $\nu = 1.80(1)$. For phenomenological noise, we find a threshold of $p_\mathrm{th} = 0.043(34)$ and a critical exponent of $\nu = 2.4(4)$. Finally, for circuit-level noise we observe $p_\mathrm{th} = 2.8 (2) \times 10^{-3}$ and $\nu = 1.19(7)$. Under circuit-level noise, we observe the largest performance margin between our method and \gls{mwpm} for distances up to $d=5$, which shrinks for distance $d=7$, possibly due to insufficient training. This indicates that our approach is particularly well-suited for handling complex noise models, as it makes no prior assumptions about the noise and instead learns directly from the raw syndrome data. Notably, the performance gap between our decoder and \gls{mwpm} increases with code distance, as for small distances degeneracy plays a less important role.

\glsresetall
\section{Soft post-selection to improve logical accuracy} \label{sec:post-selection}
In this section, we present both theoretical and numerical details on soft-post selection. In Sec.~\ref{sec:soft-post-selection}, we formally define soft post-selection and state relevant properties. In Sec.~\ref{sec:thresholds-post-selection} we present work on the theoretical threshold under post-selection. Finally, we numerically verify the stated properties with the previously presented neural network decoder in Sec.~\ref{sec:post-selection-results}.

\subsection{Soft post-selection} \label{sec:soft-post-selection}
We start the chapter with a formal definition of post-selection and present some relevant theorems. All statements in this and the following subsection are proven in App.~\ref{sec:post-selection-appendix}. Again, we consider a single logical qubit $k=1$.

Formally, we define soft post-selection in terms of a selection function $f$ that maps a post-selection parameter $c$ onto a set of accepted syndromes $\{ s_1, ..., s_n\} \in \mathcal{P}(\mathcal{S})$. In the following, $\mathcal{S}$ denotes the set of measured stabilizers, i.e.~the generators of the stabilizer group, and therefore $\mathcal{P}(\mathcal{S})$ the set of all subsets of syndromes. As the logical error rate is only well-defined if $p_\mathrm{abort} < 1$, we consider only post-selection schemes $f$ with post-selection parameters $c$ and noise rates such that $f(c) \neq \emptyset$.
\begin{definition} \label{def:std-mld-post-selection}
Standard \gls{mld} post-selection. Let $c \in [0,1]$. Then the \gls{mld} post-selection function is given by 
\begin{equation}
    \begin{aligned}
        &f_\mathrm{MLD}: [0,1] \rightarrow \mathcal{P}(\mathcal{S}), \\ &f_\mathrm{MLD}(c) = \{ s \in \mathcal{S}: \underset{\lambda \in \Lambda}{\max} \  p(\lambda | s) > c\}.
    \end{aligned}
\end{equation}
\end{definition}
Since our model predicts the logical operators autoregressively, it naturally 
outputs a confidence score for each logical token individually. This motivates 
a per-token confidence threshold, where a syndrome is accepted only if the model 
is sufficiently confident about both $\lambda_x$ and $\lambda_z$ separately. 
We refer to this as split \gls{mld} post-selection, indexed by NN.
\begin{definition} \label{def:split-post-selection}
Split \gls{mld} post-selection: Let $c \in [0,1]$. Then the split MLD post-selection function is given by 
\begin{equation}
\begin{aligned}
&f_\mathrm{NN}: [0,1] \rightarrow \mathcal{P}(\mathcal{S}), \\ 
&f_\mathrm{NN}(c) = \{ s \in \mathcal{S}: \ (p(\lambda_x | s) > \sqrt{c} \ \lor \ p(\lambda_x | s) < 1 - \sqrt{c})  \\
&\quad  \land \ (p(\lambda_z | \lambda_x, s) > \sqrt{c} \ \lor \ p(\lambda_z | \lambda_x, s) < 1 - \sqrt{c}) \}.
    \end{aligned}
    \end{equation}
\end{definition}

\begin{figure}
    \centering
    \includegraphics[width=\linewidth]{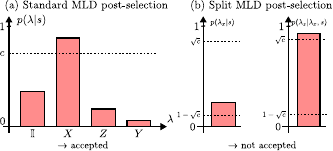}
    \caption{\justifying Visualization of (a) standard MLD post-selection and (b) split MLD post-selection. (a) Shown are the coset probabilities $p(\lambda|s)$ for each possible logical configuration. As $p(X|s)>c$, the syndrome is accepted. (b) Shown are the probabilities of the first logical ($X$) operator and of the second logical operator ($Z$), conditioned on the value of the first one. We only accept the syndrome when certain that a logical operator is likely (close to $1$ probability) or unlikely (close to $0$) probability to appear. As $1-\sqrt{c}<p(\lambda_x|s)<\sqrt{c}$, the syndrome gets discarded.}
    \label{fig:post-selection-visualization}
\end{figure}
While this scheme naturally emerges from the autoregressive output structure, 
it can also be applied under \gls{mld} directly, since the per-token 
probabilities can be recovered from the full coset probabilities $p(\lambda|s)$ via
\begin{align*}
    &p(\lambda_x|s) = \sum_{\lambda_z} p(\lambda_x, \lambda_z |s), \\
    & p(\lambda_z|s, \lambda_x = \hat{\lambda}_x) = \frac{p(\lambda_x = \hat{\lambda}_x, \lambda_z | s)}{p(\lambda_x = \hat{\lambda}_x |s)},
\end{align*}
enabling split MLD post-selection, using the MLD coset probabilities. Both post-selection schemes are visualized in Fig.~\ref{fig:post-selection-visualization}. To define a unified framework for post-selection under \gls{mld}, we define \gls{mld} post-selection as follows.
\begin{definition} \label{def:mld-post-selection}
    \gls{mld} post-selection: A post-selection scheme $f$ is an \gls{mld} post-selection scheme if syndromes are accepted or discarded based on the information available in the \gls{mld} coset probabilities and for post-selection parameter $c \in [0,1]$ we have $\underset{\lambda \in \Lambda}{\mathrm{max}} \ p(\lambda|s) > c$, i.e.~$f(c) \subset f_\mathrm{MLD}(c)$.
\end{definition}
Split \gls{mld} post-selection is an \gls{mld} post-selection scheme in the setting of Def.~\ref{def:mld-post-selection} and is stricter than standard \gls{mld} post-selection for the same post-selection parameter, i.e.~split \gls{mld} post-selection always discards at least as many syndromes as \gls{mld} post-selection. Hence:
\begin{corollary}
Let $s \in \mathcal{S}$ and $c \in [0,1]$. Then $f_\mathrm{NN}(c) \subset f_\mathrm{MLD}(c)$.
\end{corollary}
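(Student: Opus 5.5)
The plan is to show directly that every syndrome accepted by split \gls{mld} post-selection is also accepted by standard \gls{mld} post-selection. Concretely, fix $c\in[0,1]$ and $s\in f_\mathrm{NN}(c)$. The goal is to exhibit one logical coset $\hat\lambda=(\hat\lambda_x,\hat\lambda_z)\in\Lambda$ with $p(\hat\lambda|s)>c$. Then $\max_{\lambda}p(\lambda|s)>c$, i.e.\ $s\in f_\mathrm{MLD}(c)$ by Def.~\ref{def:std-mld-post-selection}. The tool is the factorization $p(\lambda|s)=p(\lambda_x|s)\,p(\lambda_z|\lambda_x,s)$ from Sec.~\ref{sec:interpretation-main}, together with the fact that each factor is a distribution over a single bit.

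The first step is to turn each disjunction in Def.~\ref{def:split-post-selection} into an existence statement. Read $p(\lambda_x|s)$ as the probability of $\lambda_x=1$. Then
\[
p(\lambda_x{=}1|s)>\sqrt c \quad\text{or}\quad p(\lambda_x{=}1|s)<1-\sqrt c .
\]
In the second case $p(\lambda_x{=}0|s)=1-p(\lambda_x{=}1|s)>\sqrt c$. So the first condition is equivalent to the existence of a value of $\lambda_x$ with conditional probability strictly greater than $\sqrt c$. I take $\hat\lambda_x$ to be the more likely value, $\hat\lambda_x=\arg\max_{\lambda_x}p(\lambda_x|s)$, so that $p(\hat\lambda_x|s)>\sqrt c$ as well. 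The same argument, applied to the binary distribution $p(\,\cdot\,|\hat\lambda_x,s)$, gives a value $\hat\lambda_z$ with $p(\hat\lambda_z|\hat\lambda_x,s)>\sqrt c$.

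The second step multiplies these bounds. Since both factors are strictly greater than $\sqrt c\ge 0$,
\[
p(\hat\lambda|s)=p(\hat\lambda_x|s)\,p(\hat\lambda_z|\hat\lambda_x,s)>\sqrt c\cdot\sqrt c=c .
\]
This also covers the edge case $c=0$, because strict positivity of both factors is preserved. The inclusion $f_\mathrm{NN}(c)\subset f_\mathrm{MLD}(c)$ follows. Since split post-selection only uses per-token probabilities, which are functions of the coset probabilities as shown before Def.~\ref{def:mld-post-selection}, it is also an \gls{mld} post-selection scheme in the sense of that definition.

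The main obstacle I expect is interpretive rather than computational. Def.~\ref{def:split-post-selection} does not state which value of $\lambda_x$ the $\lambda_z$-condition is conditioned on. The plan is to fix it as the selected token $\hat\lambda_x$, which matches the autoregressive decoding where the most probable first token is fed back, and the formula $p(\lambda_z|s,\lambda_x=\hat\lambda_x)$ given in the text. I would state this convention explicitly at the start of the proof. If one instead requires the $\lambda_z$-condition for both values of $\lambda_x$, the argument goes through unchanged, since it is only used at $\hat\lambda_x$.
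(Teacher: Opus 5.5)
Your proof is correct. It uses the same factorize-and-multiply idea as the paper, but with a different witness coset, and that difference matters.

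The paper takes $\hat\lambda=(\hat\lambda_x,\hat\lambda_z)$ to be the joint maximizer $\arg\max_\lambda p(\lambda|s)$. It then asserts $p(\lambda_x=\hat\lambda_x|s)>\sqrt c$ and $p(\lambda_z=\hat\lambda_z|\hat\lambda_x,s)>\sqrt c$. The first inequality does not follow from $s\in f_\mathrm{NN}(c)$, because the $x$-component of the joint maximizer need not be the marginal maximizer.

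For a concrete case, write $p(a,b|s)$ for $\lambda_x=a$, $\lambda_z=b$, and take $\sqrt c=0.6$ with $p(0,0|s)=0.39$, $p(0,1|s)=0$, $p(1,0|s)=0.2379$, $p(1,1|s)=0.3721$. The syndrome passes both split tests along the path $\lambda_x=1$, $\lambda_z=1$, each factor being $0.61$. The joint maximizer is $(0,0)$, yet $p(\lambda_x=0|s)=0.39<\sqrt c$. The conclusion $\max_\lambda p(\lambda|s)=0.39>0.36=c$ still holds, but only by your route. You certify the greedy autoregressive path, where the two per-token bounds are exactly what the definition gives, and only then use $\max_\lambda p(\lambda|s)\ge p(\hat\lambda|s)$.

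Fixing the conditioning value explicitly is also more than cosmetic. Under your convention, conditioning on the most probable first token, the inclusion holds. This matches the paper's decoding procedure and its formula $p(\lambda_z|s,\lambda_x=\hat\lambda_x)$. It also holds if the test is required for both values of $\lambda_x$. It fails if the $\lambda_z$-test is evaluated at the $x$-component of the joint maximizer. For example, $p(0,0|s)=0.35$, $p(0,1|s)=0$, $p(1,0|s)=p(1,1|s)=0.325$ with $\sqrt c=0.6$ passes both tests in that reading, but has $\max_\lambda p(\lambda|s)=0.35<c$.
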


To quantify the fraction of discarded samples, we introduce the abort probability.
\begin{definition} \label{def:abort-probability}
Let $c \in [0,1]$ and $f: c \rightarrow \mathcal{P}(\mathcal{S})$ a post-selection function. Then the abort probability is defined by
\[
p_\mathrm{abort}(c, f) = \mathbb
P(f(c)^\mathcal{C}) = \sum_{s \notin f(c)} p(s).
\]
\end{definition}

The logical error rate $p_L$ and its complement, the success rate $p_\mathrm{succ}$, quantify how well \gls{mld} performs on the set of accepted syndromes, i.e. how likely a logical error is introduced after decoding on the set of accepted syndromes.
We find that post-selection schemes based on \gls{mld} directly impose an upper bound on the logical error rate, and the schemes are monotonous, meaning that discarding more syndromes leads to a lower logical error rate.
\begin{corollary} \label{cor:bound}
    Let $f$ be an MLD post-selection function. Let $c \in [0,1]$. Then $p_\mathrm{succ}^c > c$ or equivalently, $p_\mathrm{L}^c < 1-c$.
\end{corollary}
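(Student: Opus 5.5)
The argument works pointwise in the syndrome and then averages over the accepted set. I first fix the meaning of the quantities in the statement. For a post-selection function $f$ and parameter $c$ with $f(c)\neq\emptyset$, MLD decoding is applied to every accepted syndrome. The success rate on the accepted set is therefore
\begin{equation*}
p_\mathrm{succ}^c=\frac{\sum_{s\in f(c)} p(s)\,\max_{\lambda\in\Lambda} p(\lambda|s)}{\sum_{s\in f(c)} p(s)} = \frac{\sum_{s\in f(c)} p(s)\, g(s)}{1-p_\mathrm{abort}(c,f)},
\end{equation*}
where $g(s)=\max_{\lambda} p(\lambda|s)$ and we use Def.~\ref{def:abort-probability}. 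This holds because, given $s$, MLD outputs $\tilde\lambda=\arg\max_\lambda p(\lambda|s)$, so the recovery succeeds exactly when the true logical coset equals $\tilde\lambda$. Conditioned on $s$, that happens with probability $g(s)$. Averaging over $s$ with weights $p(s)$ restricted to the accepted syndromes and renormalised gives the formula. I would state this identity explicitly, since the corollary hinges on it.

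Next I invoke Def.~\ref{def:mld-post-selection}. Because $f$ is an MLD post-selection function, $f(c)\subset f_\mathrm{MLD}(c)$. By Def.~\ref{def:std-mld-post-selection}, every accepted $s$ therefore satisfies $g(s)>c$ strictly. This bound holds pointwise on $f(c)$.

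Finally I average. The weights $p(s)/\sum_{s'\in f(c)}p(s')$ form a probability distribution on the nonempty set $f(c)$, and I restrict attention to syndromes with $p(s)>0$, since the others contribute nothing. A convex combination of numbers each strictly greater than $c$ is itself strictly greater than $c$. Hence $p_\mathrm{succ}^c>c$, and $p_L^c=1-p_\mathrm{succ}^c<1-c$.

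The only delicate point is the strictness of the inequality, which needs at least one accepted syndrome of positive probability. I would handle this with the standing assumption $f(c)\neq\emptyset$, which makes $p_\mathrm{abort}<1$ so that the logical error rate is well defined. Beyond that, the main care is justifying that MLD's conditional success probability equals $g(s)$. This uses the fact that the pure error $T$ is fixed by $s$, so the only remaining uncertainty is the logical coset.
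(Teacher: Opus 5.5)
Your proposal is correct and follows the paper's proof essentially step by step. You write $p_\mathrm{succ}^c$ as the renormalised average of $\max_{\lambda} p(\lambda|s)$ over $f(c)$, using that MLD fails with probability $1-\max_\lambda p(\lambda|s)$ given $s$; then $f(c)\subset f_\mathrm{MLD}(c)$ bounds each term strictly below by $c$, and your explicit note that strictness needs an accepted syndrome of positive probability (the standing assumption $p_\mathrm{abort}<1$) is a point the paper leaves implicit.
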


\begin{corollary} \label{cor:monotony}
    Let $f$ be an MLD post-selection function. Let $c,c^\prime \in [0,1]$ with $c < c^\prime$. Then $p_\mathrm{succ}^c < p_\mathrm{succ}^{c^\prime}$ or equivalently, $p_\mathrm{L}^c > p_\mathrm{L}^{c^\prime}$.
\end{corollary}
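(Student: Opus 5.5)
The plan is to write the success rate as a conditional average of the maximum coset probability over the accepted syndromes, and then show that raising $c$ only removes syndromes whose value lies below the new average. Set $g(s)=\max_{\lambda\in\Lambda} p(\lambda|s)$. MLD recovers correctly exactly when the induced logical coincides with $\tilde\lambda$, so
\[
p_\mathrm{succ}^c=\frac{\sum_{s\in f(c)} p(s)\, g(s)}{\sum_{s\in f(c)} p(s)},
\]
which is the $p$-weighted mean of $g$ over $f(c)$. This is the same representation used for Corollary~\ref{cor:bound}, where $g>c$ on $f(c)$ gives $p_\mathrm{succ}^c>c$ at once.

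First, I would treat the prototype $f=f_\mathrm{MLD}$. For $c<c'$ we have $f(c')=\{g>c'\}\subset\{g>c\}=f(c)$. Write $f(c)=f(c')\,\dot\cup\, D$ with $D=\{s:\ c<g(s)\le c'\}$. Then $p_\mathrm{succ}^c$ is a convex combination of two quantities:
\begin{itemize}
\item the mean of $g$ over $f(c')$, which equals $p_\mathrm{succ}^{c'}$ and is $>c'$ by Corollary~\ref{cor:bound};
\item the mean of $g$ over $D$, which is $\le c'$.
\end{itemize}
Whenever $\mathbb P(D)>0$, the combination is therefore strictly below $p_\mathrm{succ}^{c'}$, which gives $p_\mathrm{succ}^c<p_\mathrm{succ}^{c'}$ and equivalently $p_L^c>p_L^{c'}$.

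For a general MLD post-selection scheme in the sense of Def.~\ref{def:mld-post-selection}, I need the scheme to be nested in $c$, i.e.\ $f(c')\subset f(c)$, and the discarded layer $f(c)\setminus f(c')$ must only contain syndromes with $g\le c'$. I would argue this holds when acceptance is a threshold in $c$ on some function of the coset probabilities that is monotone in confidence. Split MLD post-selection is of this form, since its acceptance conditions tighten as $\sqrt c$ grows. If a scheme is not of this form, I would impose this monotonicity as the natural reading of ``MLD-based''. Given nestedness plus $g\le c'$ on the removed part, the convex-combination argument above goes through verbatim.

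The main obstacle is strictness. If no syndrome with positive probability has $g\in(c,c']$, then $f(c)=f(c')$ up to null sets and the two rates are equal. Since the syndrome space is finite, $g$ takes finitely many values, so this genuinely happens for most pairs $c<c'$. I would therefore state the result as non-strict monotonicity in general, and strict whenever $\mathbb P(f(c)\setminus f(c'))>0$. This also means the same caveat should be noted for the strict inequality in the corollary as stated.
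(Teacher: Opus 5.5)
Your argument for $f_\mathrm{MLD}$ is the paper's argument. You decompose $f(c)=f(c')\cup\bigl(f(c)\setminus f(c')\bigr)$, bound $g\le c'$ on the removed layer, and use $p_\mathrm{succ}^{c'}>c'$ from Corollary~\ref{cor:bound} to place the weighted average below $p_\mathrm{succ}^{c'}$. Your caveats are also correct. The paper's own chain ends in $\le$, so strict inequality holds only when $\mathbb{P}(f(c)\setminus f(c'))>0$. Its proof also silently uses nestedness and $g\le c'$ on the removed layer, and neither follows from the defining property $f(c)\subset f_\mathrm{MLD}(c)$ of an MLD post-selection scheme.

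The step that fails is your assertion that split MLD post-selection satisfies these hypotheses. It is nested. But a syndrome rejected by $f_\mathrm{NN}(c')$ is only guaranteed $g(s)\le\sqrt{c'}$, not $g(s)\le c'$. The paper's own remark gives such a case: a syndrome with $g=0.686>0.64=c'$ that $f_\mathrm{NN}(0.64)$ discards while $f_\mathrm{NN}(c)$ accepts it for small $c$.

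This is not just a failure of the proof technique; monotonicity itself can fail at the level of coset probabilities. Take a hand-crafted two-syndrome example in the spirit of Table~\ref{tab:example}, with $\sqrt{c'}=0.8$:
\begin{itemize}
\item $s_1$ has $p(\hat\lambda_x|s_1)=p(\hat\lambda_z|\hat\lambda_x,s_1)=0.81$. It is accepted, with $g(s_1)=0.6561$.
\item $s_2$ has $p(\hat\lambda_x|s_2)=0.99$ and $p(\hat\lambda_z|\hat\lambda_x,s_2)=0.79$. It is rejected, yet $g(s_2)\approx 0.782$.
\end{itemize}
Then $p_\mathrm{succ}^{c'}=0.6561$. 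For any $c$ with $\sqrt{c}<0.79$ both syndromes are accepted, so $p_\mathrm{succ}^{c}$ is a weighted mean of $0.6561$ and $0.782$ and strictly exceeds $p_\mathrm{succ}^{c'}$.

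Your proof therefore establishes the non-strict corollary for $f_\mathrm{MLD}$, and for nested schemes whose rejected set at level $c'$ lies in $\{g\le c'\}$. Split MLD post-selection is not in that class. It is covered neither by your argument nor by the paper's, even though it is the scheme to which the paper applies the corollary numerically.
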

For both presented schemes, $c<0.25$ accepts all syndromes, while for $c=1$ all syndromes are discarded as $\underset{\lambda \in \Lambda}{\mathrm{max}} \ p(\lambda|s) = 1$ is only reached in the thermodynamic limit. Thus, the case $c<0.25$ reduces to \gls{mld} without post-selection and the previous corollary also proves that post-selection always decreases the logical error rate. 

With this, we formulate the main theorem of this chapter. \Gls{mld}-based post-selection is optimal in the sense that no other post-selection scheme can achieve a lower logical error rate at the same abort probability:
\begin{theorem} \label{thm:main}
Let $c, c^\prime \in [0, 1]$, $f$ an MLD post-selection function and $f^\prime: [0,1] \rightarrow \mathcal{P}(\mathcal{S})$ any other post-selection scheme. Then 
\[
p_\mathrm{abort}(c^\prime, f^\prime) \leq p_\mathrm{abort}(c, f) \Longrightarrow p_L(c^\prime, f^\prime) \geq p_L(c, f).
\]
\end{theorem}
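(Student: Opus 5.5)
The plan is to reduce the theorem to a comparison of conditional averages of the maximum coset probability $g(s)=\max_{\lambda\in\Lambda}p(\lambda|s)$. The comparison is an exchange argument of Neyman--Pearson or fractional-knapsack type.

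First, for any accepted set $A\in\mathcal{P}(\mathcal{S})$ with $\mathbb{P}(A)>0$ and any decoder that outputs a logical $\tilde\lambda(s)$, the success rate is
\[
p_\mathrm{succ}(A)=\frac{1}{\mathbb{P}(A)}\sum_{s\in A}p(s)\,p(\tilde\lambda(s)|s)\le\frac{1}{\mathbb{P}(A)}\sum_{s\in A}p(s)\,g(s)=\mathbb{E}[g\,|\,A],
\]
with equality for MLD (Sec.~\ref{sec:mld}). So it suffices to compare $\mathbb{E}[g|A']$ for $A'=f'(c')$ against $\mathbb{E}[g|A]$ for $A=f(c)$, where $f$ decodes by MLD. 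This is the same mechanism that gives Corollaries~\ref{cor:bound} and~\ref{cor:monotony}.

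The core step treats the superlevel set $A=f_\mathrm{MLD}(c)=\{g>c\}$. Given $A'$ with $\mathbb{P}(A')\ge\mathbb{P}(A)$, i.e.\ $p_\mathrm{abort}(c',f')\le p_\mathrm{abort}(c,f)$, split $A'=(A\cap A')\cup(A'\setminus A)$ and $A=(A\cap A')\cup(A\setminus A')$. Then:
\begin{itemize}
\item $g\le c$ on $A'\setminus A$ and $g>c$ on $A\setminus A'$;
\item $\mathbb{P}(A'\setminus A)\ge\mathbb{P}(A\setminus A')$.
\end{itemize}
Writing $m=\mathbb{E}[g|A]>c$, I would show $\sum_{A'}p\,g\le m\,\mathbb{P}(A')$ by the following chain:
\[
\sum_{A'}p\,g=\sum_{A}p\,g-\sum_{A\setminus A'}p\,g+\sum_{A'\setminus A}p\,g\le m\mathbb{P}(A)-c\,\mathbb{P}(A\setminus A')+c\,\mathbb{P}(A'\setminus A).
\]
Because $c<m$, the right-hand side is at most $m\mathbb{P}(A)+m\big(\mathbb{P}(A'\setminus A)-\mathbb{P}(A\setminus A')\big)=m\,\mathbb{P}(A')$. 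This yields $p_\mathrm{succ}(A')\le p_\mathrm{succ}(A)$, i.e.\ $p_L(c',f')\ge p_L(c,f)$.

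The main obstacle is extending this from $f_\mathrm{MLD}$ to an arbitrary MLD scheme $f$ with $f(c)\subset f_\mathrm{MLD}(c)$ (Def.~\ref{def:mld-post-selection}). The exchange argument uses that $A$ is a superlevel set of $g$. An arbitrary subset of $\{g>c\}$ is not a superlevel set: it could keep only syndromes with $g$ barely above $c$, while a competitor keeps a smaller-abort set with larger average $g$, and then the inequality fails. So the key step is to exhibit $f(c)$ itself as a superlevel set $\{g>\tilde c\}$ up to ties, and apply the argument with $\tilde c$. I would first try to derive this from the phrase ``based on the information available in the MLD coset probabilities'', reading it as a threshold rule on a confidence score that is monotone in $g$. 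Split MLD post-selection is of this type only approximately, so if monotonicity fails I would restrict the statement to threshold-type MLD schemes, which include $f_\mathrm{MLD}$.

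Finally, ties at $g=c$ only move mass between level sets with equal $g$ and do not affect the averages. The case $\mathbb{P}(A)=0$ is excluded by the standing assumption $f(c)\neq\emptyset$.
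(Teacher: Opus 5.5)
Your argument is correct for $f=f_\mathrm{MLD}$, and more generally for any scheme whose accepted set is a superlevel set of $g(s)=\max_{\lambda}p(\lambda|s)$ up to ties. You use the same three-way split $f\cap f'$, $f\setminus f'$, $f'\setminus f$ as the paper, but at the decisive step your chain is sound and the paper's written chain is not.

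In case (ii) the paper first replaces the denominator $1-p_\mathrm{abort}$ by the larger $1-p'_\mathrm{abort}$. It then asserts $c\sum_{f(c)\setminus f'(c')}p(s)>c\sum_{f'(c')\setminus f(c)}p(s)$. But $\mathbb{P}(f'(c'))\ge\mathbb{P}(f(c))$ gives $\mathbb{P}(f'\setminus f)\ge\mathbb{P}(f\setminus f')$, so that inequality runs the wrong way. The slack needed to pay for the extra accepted mass was discarded in the first step.

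You instead keep $m\,\mathbb{P}(A)$ and use $c<m$ to absorb the surplus $\mathbb{P}(A')-\mathbb{P}(A)$, which is exactly the missing ingredient. Your version also has two further advantages:
\begin{itemize}
\item it treats equal and unequal abort probabilities together, whereas the paper's case (i) is only sketched;
\item via $p(\tilde\lambda(s)|s)\le g(s)$, it lets the competitor $f'$ use any decoder, whereas the paper assumes MLD on both sides.
\end{itemize}

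Where you hedge, you should commit: the statement is false for a general MLD scheme $f(c)\subset f_\mathrm{MLD}(c)$. The paper's proof does not cover that case either. Its step $\sum_{f'\setminus f}p\,g\le c\sum_{f'\setminus f}p$ needs $g\le c$ outside $f(c)$, i.e.\ it silently takes $f=f_\mathrm{MLD}$.

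A counterexample is easy to build. Suppose $f(c)$ accepts some $s_a$ and rejects an equally likely $s_b$ with $g(s_b)>g(s_a)$. Swapping the two leaves the abort probability unchanged and strictly increases $p_\mathrm{succ}$, contradicting the theorem.

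Split MLD post-selection does produce such pairs. Take $\sqrt{c}=0.8$:
\begin{itemize}
\item a syndrome with $p(\lambda_x=0|s)=p(\lambda_z=0|\lambda_x=0,s)=0.85$ is accepted, with $g\approx0.72$;
\item a syndrome with $p(\lambda_x=0|s)=0.99$ and $p(\lambda_z=0|\lambda_x=0,s)=0.78$ is rejected, with $g\approx0.77$.
\end{itemize}
No reading of ``based on the information available in the MLD coset probabilities'' forces monotonicity in $g$. Your restriction to threshold-type schemes is therefore the correct form of the theorem, not a fallback.
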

In the proof of the theorem, we find that under \gls{mld} the logical error rate / success probability can be rewritten as (see App.~\ref{sec:post-selection-appendix})
\begin{equation}
    p_\mathrm{succ} = \sum_{s \in f(c)} \Tilde{p}(s) \  \underset{\lambda \in \Lambda}{\mathrm{max}} \ p(\lambda | s). \label{eq:p-succ}
\end{equation}
where $\Tilde{p}(s) = \frac{p(s)}{\sum_{s^\prime} p(s^\prime)}$ and we sum over $f(c)$, which is the set of accepted syndromes. For each syndrome, the probability of successfully correcting is equal $\underset{\lambda \in \Lambda}{\mathrm{max}} \ p(\lambda | s)$. 

The optimality stated in Thm.~\ref{thm:main} characterizes the trade-off between the abort probability and the logical error rate among accepted samples. As we demonstrate in Fig.~\ref{fig:pL-vs-pabort} in the appendix, this implies that discarding syndromes and committing logical errors are balanced such that the ratio \(\frac{p_L}{p_\mathrm{abort}}\) converges to a constant in the limit \(n \rightarrow \infty\).

\subsection{Thresholds under optimal post-selection} \label{sec:thresholds-post-selection}
\begin{figure*}
    \centering
    \includegraphics[width=\linewidth]{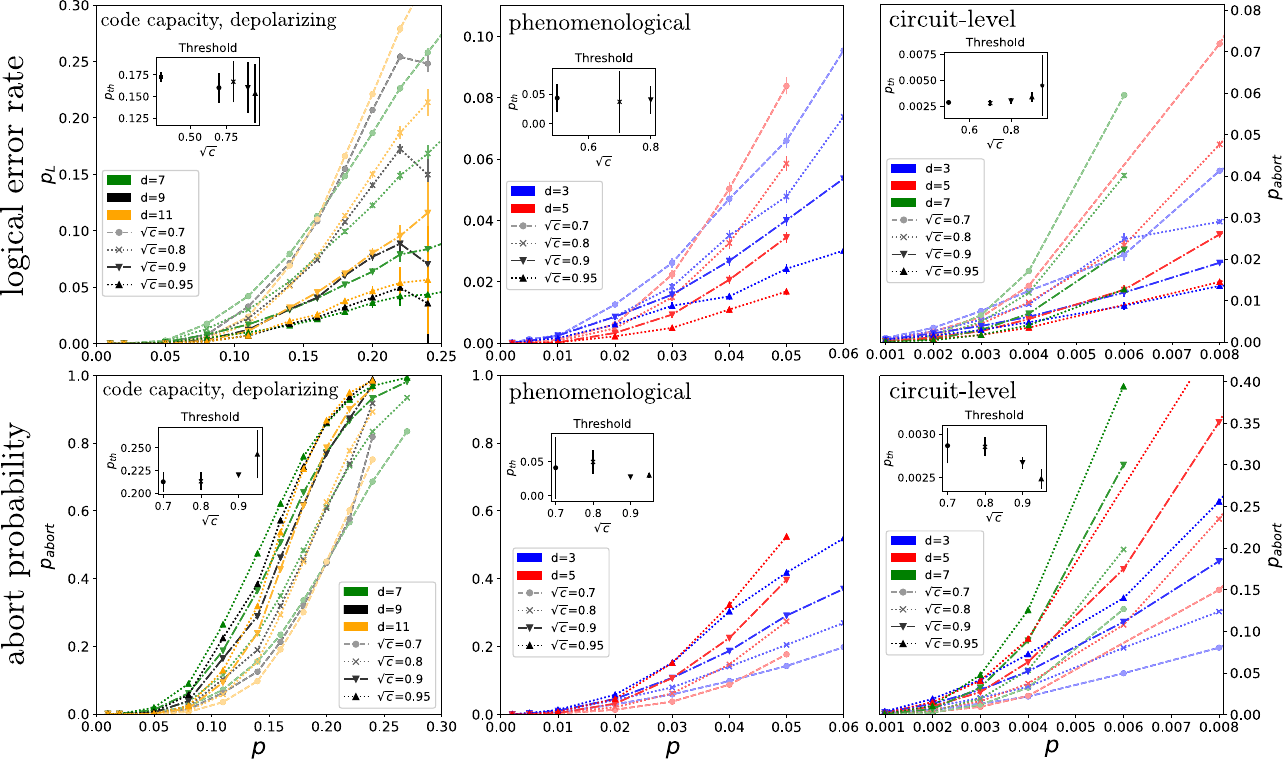}
    \caption[Decoding performance of our NN decoder with post-selection based on output probabilities.]{\justifying Decoding performance of our NN decoder with soft post-selection based on output probabilities. (upper panels) Logical error rates for post-selection thresholds \( \sqrt{c} = 0.7, 0.8, 0.9, 0.95 \). Higher thresholds reduce logical error rates. Inset: observed threshold vs.~post-selection parameter. (lower panels) Abort probability \( p_\mathrm{abort} \) increases with stricter post-selection and shows threshold behavior. Inset: threshold closely matches that of decoding without post-selection and is largely independent of the post-selection parameter \( c \). Black boxes mark crossings.
    }
    \label{fig:abort-results}
\end{figure*}
Below threshold, \gls{qec} codes can reliably protect the logical information by scaling up the encoding. This is proven in form of the threshold theorem \cite{aharonov_fault-tolerant_2008,kitaev_fault-tolerant_2003,knill_resilient_1998,shor_fault-tolerant_1997}:
\begin{equation}
\lim_{d \to \infty} p_L(d, p) =
\begin{cases} 
0, & \text{if } p < p_{\text{th}}, \\
const. > 0, & \text{if } p > p_{\text{th}}.
\end{cases}
\end{equation}
Under \gls{mld}, we find that the maximum coset probability converges to unity in probability, given the error rate is below threshold. 
\begin{theorem} \label{thm:abort-threshold}
    Let $g_d: \mathcal{S} \rightarrow [0,1], g_d(s)=\underset{\lambda \in \Lambda}{\mathrm{max}} \ p(\lambda | s)$ be the mapping of a syndrome to its maximum coset probability for a distance-$d$ \gls{qec} code. Then, for $p<p_{th}$ we have
    \begin{equation}
        g_n \overset{P}{\rightarrow} 1 \ \mathrm{as} \ d \rightarrow \infty.
    \end{equation}
\end{theorem}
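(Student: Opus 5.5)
The idea is to combine the threshold theorem, as stated just above, with the rewriting of the MLD success probability in Eq.~\eqref{eq:p-succ}. Take the case without post-selection, i.e.\ accept all syndromes (for instance $c<0.25$). Then Eq.~\eqref{eq:p-succ} reduces to
\[
1-p_L(d,p)=\sum_s p(s)\,\max_{\lambda\in\Lambda}p(\lambda|s)=\mathbb{E}_{s\sim p}[g_d(s)] .
\]
For each syndrome, MLD applies the most likely coset, and it succeeds exactly when the induced logical operator equals that coset. The conditional success probability is therefore $g_d(s)$, and averaging over $s$ gives the expected value above. For $p<p_{th}$, the threshold theorem applied to the MLD decoder gives $p_L(d,p)\to 0$. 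Hence $\mathbb{E}[g_d]\to 1$ as $d\to\infty$.

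Next, convert convergence in mean into convergence in probability. Since $0\le g_d\le 1$, the variable $1-g_d$ is nonnegative. Markov's inequality then gives, for every $\epsilon>0$,
\[
\mathbb{P}\bigl(|g_d-1|\ge\epsilon\bigr)=\mathbb{P}(1-g_d\ge\epsilon)\le\frac{\mathbb{E}[1-g_d]}{\epsilon}=\frac{p_L(d,p)}{\epsilon}\longrightarrow 0 .
\]
This is exactly $g_d\overset{P}{\rightarrow}1$. Here the underlying probability is the syndrome distribution $p(s)$ of the distance-$d$ code at noise rate $p$. The sequence of probability spaces changes with $d$, but convergence in probability to a constant only involves the laws of $g_d$, so this causes no difficulty.

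The main point needing care is the identification in the first step. The threshold theorem must be applied to the optimal (MLD) decoder, and the $p_L$ in that theorem must be the same quantity as the success probability in Eq.~\eqref{eq:p-succ}. In particular, $p_L$ has to count every logical failure, including $\bar X$, $\bar Y$ and $\bar Z$, and not only one logical component. I would justify this as follows. The threshold $p_{th}$ is by definition the optimal one, i.e.\ the one attained by MLD. For $p<p_{th}$, any decoder suppressing $p_L$ implies that MLD does as well, since MLD maximizes $\sum_s p(s)\,p(\tilde\lambda(s)|s)$ over all choices of $\tilde\lambda(s)$. A short supporting lemma would make Eq.~\eqref{eq:p-succ} precise with $f(c)=\mathcal{S}$, using $\tilde p=p$ in that case.
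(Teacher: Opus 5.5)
Your proposal is correct and follows essentially the same route as the paper. Both arguments use Eq.~\eqref{eq:p-succ} with all syndromes accepted to identify $1-p_L=\sum_s p(s)\,g_d(s)$, then invoke the threshold theorem for $p<p_{th}$, and conclude with the bound $\mathbb{P}(1-g_d>c)\le p_L/c$. The paper derives that bound by hand, splitting the sum into syndromes with $1-g_d(s)>c$ and $1-g_d(s)\le c$, while you cite Markov's inequality directly. Your remark that the threshold theorem must be applied to the MLD decoder (counting all logical failures) makes precise an assumption the paper leaves implicit.
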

Thus, a threshold for the abort probability under \gls{mld} post-selection emerges naturally. This threshold coincides with the threshold $p_{th}$ of standard \gls{qec} \cite{english_thresholds_2024}. For any post-selection parameter $c$, the abort probability vanishes as the probability of syndromes with $g_d < c$ vanishes.
\begin{equation}
    \forall c>0: \ P(|g_d - 1| > c) \rightarrow 0 \ \mathrm{as} \ d \rightarrow \infty.
\end{equation}
Additionally, from the threshold theorem, we follow directly that
\begin{align}
    \mathbb{E} g_d = \sum_s p(s) \ \underset{\lambda \in \Lambda}{\mathrm{max}} \ p(\lambda | s) \rightarrow 1.
\end{align}
This does not follow from Thm.~\ref{thm:abort-threshold}, as convergence in probability does not imply convergence of the expectation value.

\subsection{Soft post-selection on the network outputs} \label{sec:post-selection-results}
In this section, we numerically demonstrate the effectiveness of the \gls{mld} post-selection schemes introduced earlier. Since our model outputs the probabilities of both logical operators in an autoregressive manner, it naturally lends itself to the split \gls{mld} post-selection scheme. 

The post-selection parameter $c$ controls the strictness of the confidence threshold: for $c=1$ all syndromes are rejected; for $c=0$ all syndromes are accepted unconditionally. In practice, no syndromes are discarded for $\sqrt{c} \leq 0.5$, as the network needs to predict either $0$ (no logical operator) or $1$ (a logical $X$ or $Z$). Fig.~\ref{fig:abort-results} shows the logical error rate and abort probability of our neural network decoder under split \gls{mld} post-selection for $\sqrt{c} \in \{ 0.7, 0.8, 0.9, 0.95 \}$, evaluated under depolarizing noise in the code capacity setting, phenomenological noise, and circuit-level noise. As expected from Cor.~\ref{cor:monotony}, the logical error rate consistently decreases with increasing post-selection parameter $c$, accompanied by a corresponding increase in the abort probability.

\paragraph{Threshold behavior and extraction.} Both the logical error rate and the abort probability exhibit a clear threshold behavior, confirming that \gls{mld} post-selection preserves the threshold of standard decoding without post-selection. We extract threshold values via a finite-size scaling analysis using the ansatz $p \rightarrow (p - p_\mathrm{th}) d^{\frac{1}{\nu}}$, with uncertainties estimated by bootstrap resampling. The extracted thresholds for different post-selection parameters are shown in the insets of Fig.~\ref{fig:abort-results}, and remain largely independent of $c$. The logical error rate exhibits a less regular behavior than the abort probability; post-selection reduces the logical error rate particularly at high noise rates, where a greater number of syndromes are discarded, effectively preventing the logical error rate from increasing with the noise probability (see also Fig.~\ref{fig:compare-finite-size-effects}).

\paragraph{Finite-size effects.}
We observe notable finite-size effects in the form of sudden increases and drops, particularly at low distances $d=3,5$ (see Appendix~\ref{app:finite-size-effects}). Consequently, we exclude both distances from the threshold analysis in the code capacity setting. Under phenomenological and circuit-level noise, we observe similar effects; however, since our decoder was trained only on small code distances, we retain these instances in the analysis for consistency.

\paragraph{Scalability.}
A frequent criticism of (soft) post-selection is its limited scalability \cite{smith_mitigating_2024}. However, as we observe that both logical error rate and abort probability exhibit a threshold behavior, they diminish below threshold as the code distance increases. This demonstrates that \gls{mld} post-selection remains efficient in the large-code limit when compared to no post-selection, thereby enabling scalable post-selection based on the model confidence.

\section{Conclusion} \label{sec:conclusion}
\glsresetall

In this work, we have reformulated the \gls{ci} in terms of the conditional distribution $p(\lambda | s)$ over logical operators $\lambda$ given a syndrome $s$. This yields a novel interpretation of neural decoders trained with \gls{bce}: such training corresponds to maximizing the recoverable logical information, bounded above by the \gls{ci}. Thus, neural networks do not merely approximate \gls{mld}, but serve as variational approximators of the \gls{ci} itself. This interpretation links the training objective directly to the goal of preserving logical information through noisy quantum channels.

\begin{table*}[t]
\centering

\begin{tabular}{lccc}
\toprule
\textbf{Noise model} & \textbf{Estimated threshold} & \textbf{Reference threshold} & \textbf{Critical exponent} $\nu$ \\
\midrule
Depolarizing (code capacity) 
    & $p_\mathrm{depolarizing} = 0.172(1)$ 
    & $p_{th} = 0.189(3)$ \cite{bombin_strong_2012} 
    & $1.81(1)$ \\[6pt]
Phenomenological 
    & $p_\mathrm{phenomenological} = 5.15(1)\times 10^{-2}$ 
    & $p_{th} = 0.038$ \cite{wang_transformer-qec_2023}; $p_{th} = 0.06$ \cite{rispler_random_2024} 
    & $2.64(12)$ \\[6pt]
Circuit-level 
    & $p_\mathrm{circuit\text{-}level} = 3.00(1)\times 10^{-3}$ 
    & $p_{th} \approx 0.01$ \cite{fowler_surface_2012}; $p_{th} \approx 0.014$ \cite{rispler_random_2024} 
    & $1.19(4)$ \\
\bottomrule
\end{tabular}
\caption{\justifying Summary of estimated thresholds and critical exponents extracted using the finite size scaling ansatz \( p \rightarrow (p - p_{th}) d^{1/\nu} \) for three noise models. Reference values are provided for comparison.}
\label{tab:thresholds}
\end{table*}

Building on this theoretical insight, we have introduced a transformer-based neural network architecture designed to estimate $p(\lambda | s)$ for surface codes under various noise models, effectively tracking logical operators across noisy quantum channels, similar to a neural network decoder \cite{torlai_neural_2017, chamberland_deep_2018,varsamopoulos_decoding_2017,maskara_advantages_2019,wagner_symmetries_2020,meinerz_scalable_2022,overwater_neural-network_2022,cao_qecgpt_2023,wang_transformer-qec_2023,blue_machine_2025}. Our approach departs from prior works \cite{cao_qecgpt_2023,choukroun_deep_2024,wang_transformer-qec_2023,bausch_learning_2024,blue_machine_2025} in several important ways. First we track the maximum support logical operators, ensuring that all training data contributes to the learning task. Second, we employ a structure that is purely built upon the transformer encoder-decoder architecture \cite{vaswani_attention_2023} and use an autoregressive transformer decoder to model correlations between the logical operators. Third, we implement a divided space-time attention scheme for recurrent syndrome measurements, inspired by video understanding architectures \cite{bertasius_is_2021}, alternating between spatial and temporal attention layers. This improves expressiveness and balances sensitivity to data and measurement errors, overcoming the limitations of prior models, as previous architectures either bias models toward spacelike errors \cite{choukroun_deep_2024,bausch_learning_2024}, incur high runtime from full self-attention \cite{wang_transformer-qec_2023}, or limit the capture of non-local error chains by restricting attention windows \cite{blue_machine_2025}.

Our architecture is trained for depolarizing noise in the code capacity setting, phenomenological noise, and circuit-level noise models. To ensure effective training across these regimes, we designed a curriculum learning schedule that mitigates conditional collapse, $p(\lambda|s) \approx p(\lambda)$, and accelerates convergence, particularly in high-noise regions. We observe that the model performs best in the intermediate noise regime, where sufficient structure is present to learn from.

We have demonstrated that our model is able to successfully estimate the \gls{ci}, enabling \gls{qec} threshold and critical exponent estimation via finite-size scaling analysis. The results for all three noise models are summarized in Table~\ref{tab:thresholds}. These results highlight finite-size effects in small surface codes (e.g.,~$d=3,5$), particularly under circuit-level noise, though the observed scaling behavior supports the use of CI estimation as a practical method for characterizing decoder performance. In all noise regimes, our model also outperforms \gls{mwpm} \cite{fowler_minimum_2014} in terms of logical error rate.

Furthermore, we have explored soft post-selection strategies, where the information available in the syndrome is used to decide whether to discard runs with ambiguous logical inference \cite{smith_mitigating_2024,english_thresholds_2024}. We have proved that \gls{mld}-based post-selection is optimal and scalable: \gls{mld}-based post-selection achieves the lowest logical error rate among different post-selection schemes, balancing discarding syndromes and accepting logical errors. In the large-distance limit, the maximum coset probability converges to unity, leading to vanishing abort probabilities and aligning the thresholds under post-selection and for standard quantum computations. Based on this insight, we have proposed a novel split \gls{mld} post-selection scheme that treats uncertainty in both, $X$- and $Z$-type logical operators separately. This approach further reduces logical errors by discarding syndromes that exhibit a disproportionately high likelihood of inducing logical failures. We have performed split \gls{mld} post-selection based on the neural network output, yielding substantial improvements: the abort probabilities exhibit clear threshold behavior, and logical error rates are significantly lower than those achieved with standard decoding.

Altogether, our work further establishes transformer-based neural network decoders not only as competitive decoders, but also as computationally tractable approximators of information-theoretic quantities. This has significant implications. First, it shows that despite the hardness of \gls{qec} decoding, a neural network trained in reasonable time can estimate the coherent information with high accuracy. Second, it demonstrates that the \gls{ci} can serve as a practical, and physically meaningful performance metric for quantum error correction under general noise. Third, our autoregressive modeling of 
$p(\lambda | s)$ opens new pathways for decoding multiple logical qubits; by filtering based on the model's confidence, we introduce optimal soft post-selection into \gls{qec} protocols based on \gls{mld}.

While we have focused on surface codes, our method does not rely on their specific structure and can likely be adapted to color codes \cite{bombin_topological_2006}, hypergraph product codes \cite{Tillich_2014}, and other quantum low-density parity check (qLDPC) code families \cite{breuckmann_quantum_2021}. The architecture’s autoregressive nature is especially well-suited for codes encoding multiple logical qubits. Finally, adapting these methods to experimental circuit-level noise remains an important direction for future research.

\section*{Code and Data availability}
All code used to simulate the \gls{qec} circuits and to train our models is available at \url{https://github.com/Dominik5431/LearningOptimalQuantumErrorCorrectionThresholds}. The corresponding data, including the trained model checkpoints, can be accessed via \url{https://zenodo.org/records/19655235}. All machine learning models in this work have been implemented using PyTorch \cite{paszke_pytorch_2019} and numerical simulations have been carried out with NumPy \cite{harris_array_2020} and SciPy \cite{virtanen_scipy_2020}.

\begin{acknowledgments}
L.C. and M.M. gratefully acknowledge funding by the U.S. ARO Grant No. W911NF-21-1-0007. M.M. furthermore acknowledges funding from the European Union’s Horizon Europe research and innovation programme under grant agreement No 101114305 (“MILLENION-SGA1” EU Project), and the German Federal Ministry of Research, Technology and Space (BMFTR) as part of the Research Program Quantum Systems, research project 13N17317 (”SQale”), and MUNIQC-ATOMS (Grant No. 13N16070), as well as NeuQuant (Grant No. 13N17066). This research is also part of the Munich Quantum Valley (K-8 on Hardware-adapted Theory), which is supported by the Bavarian state government with funds from the Hightech Agenda Bayern Plus. M.M. acknowledges funding from
the ERC Starting Grant QNets through Grant No. 804247. L.C.~and M.M. also acknowledge support for the research that was sponsored by IARPA and the Army Research Office, under the Entangled Logical Qubits program, and was accomplished under Cooperative Agreement Number W911NF-23-2-0216. The views and conclusions contained in this document are those of the authors and should not be interpreted as representing the official policies, either expressed or implied, of IARPA, the Army Research Office, or the U.S. Government. The U.S. Government is authorized to reproduce and distribute reprints for Government purposes notwithstanding any copyright notation herein. M.M. acknowledges support from the Deutsche Forschungsgemeinschaft (DFG, German Research Foundation) under  Germany’s Excellence Strategy Cluster of Excellence Matter and Light for  Quantum Computing (ML4Q) EXC 2004/1 390534769. 
MS acknowledges support by the Helmholtz Initiative and Networking Fund, Grant No.~VH-NG-1711.
The
authors gratefully acknowledge the computing time provided to them at the NHR Center NHR4CES at RWTH
Aachen University (Project No. p0020074 and rwth1871). This is
funded by the Federal Ministry of Education and Research and the state governments participating on the
basis of the resolutions of the GWK for national high
performance computing at universities.
\end{acknowledgments}

\bibliography{bibliography2.bib}

\appendix

\clearpage
\section{The surface code}
\begin{figure}
    \centering
    \includegraphics[width=\linewidth]{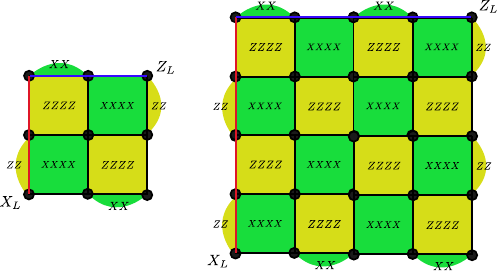}
    \caption[The (rotated) surface code: $d=3$ and $d=5$.]{\justifying The (rotated) surface code: $d=3$ and $d=5$. The data qubits (black dots) are sitting on the vertices of a square lattice of size $d$. Yellow (Green) plaquettes correspond to $Z$ ($X$) stabilizers. At the boundary of the respective logical operator, the weight-four stabilizer are reduced to weight-2 stabilizers. One possible $Z$ ($X$) type logical operator is shown in blue (red). The rotated surface code is a $[[d^2,1,d]]$ code.}
    \label{fig:surface-code}
\end{figure}
In this section, we describe the topological \gls{qec} code used throughout this work. Topological \gls{qec} codes are a prominent class of error-correcting codes that utilize the principles of topology to protect quantum information. Here, topology describes global features that remain invariant under local perturbations: the information is encoded in a way that spreads the information globally across the system, ensuring it is resilient to local disturbances such as local Pauli noise. The surface code is among the most prominent representatives of topological \gls{qec} \cite{kitaev_fault-tolerant_2003,dennis_topological_2002,fowler_surface_2012}. It encodes logical qubits into a two-dimensional lattice of physical qubits. It is particularly well-suited to hardware architectures with local interactions and has become one of the most promising candidates for fault-tolerant quantum computation.

In the rotated surface code, physical qubits are arranged on a square lattice of length $L$ with two types of stabilizer operators (see Fig.~\ref{fig:surface-code}): plaquette ($Z$-type) and star ($X$-type) operators. Each stabilizer acts on four (bulk) or two (boundary) neighboring qubits and enforces local parity constraints. Errors are detected by measuring changes in stabilizer outcomes, known as syndromes, which reveal the presence and location of bit-flip ($X$) or phase-flip ($Z$) errors.

Logical qubits are encoded by defining operators that span the lattice. Specifically, logical $X$- and $Z$-type operators correspond to strings of Pauli $X$ or $Z$ operators that cross the lattice from one boundary to the opposite. Due to the topological nature of the code, a logical error requires a chain of physical errors connecting opposite boundaries. The code distance is equal to the length of the boundary, $d=L$.

\section{Curriculum Learning}
\begin{figure*}
    \centering
    \includegraphics[width=\textwidth]{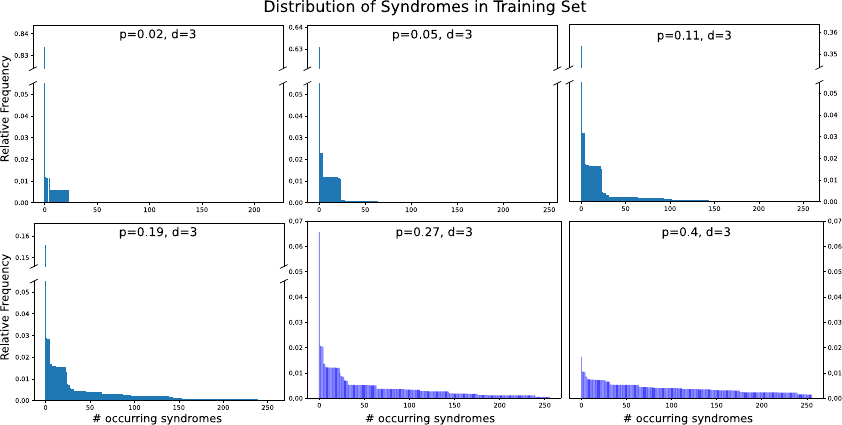}
    \caption[Illustration of the complexity of the syndromes in the training set in dependency of the noise strength of the depolarizing channel.]{\justifying Illustration of the increasing complexity of syndromes in the training set as a function of the depolarizing noise strength in the code capacity setting for a $d=3$ surface code. The plots show the frequency distribution of measured syndromes for $p=0.02$, $p=0.05$, $p=0.11$, $p=0.19$, $p=0.27$, and $p=0.4$. Syndromes are ordered by their frequency of occurrence, starting with the trivial syndrome which is the most likely syndrome as long as $p<0.5$. As the noise rate increases, the frequency of the trivial syndrome decreases, while easily-decodable syndromes (syndrome with only few excited stabilizers and accordingly few erroneous qubits) become more common. With further increase in noise, more complex syndromes with a high number of excitations occur more frequently and the syndrome distribution approaches a uniform distribution. The hierarchical structure of the syndrome becomes apparent.}
    \label{fig:curriculum-d3}
\end{figure*}
In Fig.~\ref{fig:curriculum-d3}, we illustrate the curriculum learning scheme using the syndrome distribution for a surface code of distance $d = 3$ under depolarizing noise in the code capacity setting (although this approach is not applied at $d = 3$ for this particular noise model). The figure shows the distribution of all $256$ possible syndromes, sorted and enumerated by their frequency of occurrence, across different noise strengths. For all values of $p$, the trivial (all-zero) syndrome is the most probable, as the likelihood of no error remains higher than that of any error as long as $p < 0.5$. As the noise strength increases, the frequency of the trivial syndrome decreases, while syndromes resulting from single-qubit errors become more prominent before eventually being surpassed by more complex error patterns.

When pretraining on low noise rates, the network learns to estimate logical operators for the simple syndromes first—such as the trivial syndrome and those resulting from single-qubit errors, thereby avoiding conditional collapse, where $p(\lambda | s) \approx p(\lambda)$ due to convergence to a poor local minimum. This pretrained model is then fine-tuned on higher noise rates, where it can refine the estimated probabilities for already encountered syndromes and begin to capture the correlations present in the more complex syndromes that dominate at higher noise levels. As a result, this curriculum learning strategy leads to improved convergence quality, avoids poor local minima, and reduces the overall computational cost.

When applying such a curriculum learning approach to system exhibiting a phase transition, one needs to be aware that an implicit bias may be introduced due to the training process starting in the ordered phase. A natural way to check for such bias would be to reverse this approach and start training from high noise rates. However, this is not feasible, as the complexity of the training set increases with the noise rate and does not reach its maximum at the phase transition. Nevertheless, we do not expect an implicit bias to arise, because of the hierarchical structure of the error patterns, as it is illustrated in Fig.~\ref{fig:curriculum-d3}. By increasing the noise rate, more complicated error patterns are slowly introduced while the likelihood of simple error chains is slowly decreased. The network is learning to track logical operators across the noise channel for error patterns of increasing complexity, which becomes increasingly difficult as the number of errors increases. The phase transition is only observed when considering a nonlinear function of $p(\lambda|s)$, such as the \gls{ci}, showing a characteristic crossing at the phase transition.

\section{Decoding strategy} \label{sec:decoding-strategy}
For estimating the \gls{ci}, our neural network approximates the conditional probabilities of the logical operators $p(\lambda | s)$. This already enables us to employ the network as a neural network decoder without further adaption. During decoding, the two logical operators are generated autoregressively. To determine the logical operators that define the recovery operations, two different strategies are possible. On one hand, we can select the most probable logical operator by rounding the network's prediction $\hat{\lambda}_j$; on the other hand, we can sample from the distribution $p(\lambda_j | s, \lambda_{<j})$. Both strategies have their advantages in different regimes. However, we show that it is beneficial to choose the most likely logical operator. A corresponding statement is also considered in Ref.~\cite{wichette_partition_2025} in more detail.

Assume, we measure a syndrome $s$ alongside the value $\lambda_x$ of the logical $X$-operator. First, if we decide on the most likely logical operator by thresholding our decision, whether this decision results in an error can be expressed as
\begin{align}
    N_\textrm{error}^{(th)} = \lambda_x \cdot \mathbb{I}(p(\lambda_x|s) \leq 0.5) + (1-\lambda_x) \cdot \mathbb{I}(p(\lambda_x|s) > 0.5),
\end{align}
where $\mathbb{I}$ denotes the Heaviside step function, which is $1$ for non-negative arguments and $0$ otherwise, $N_\textrm{error}^{(th)} = 1$ indicates a logical error and $N_\textrm{error}^{(th)}=0$ indicates a correct decision. $\lambda_x \in \{ 0,1 \}$ describes the decision for or against an $X$- or $Z$-type logical error. An error occurs only if $\lambda_x=0$ and $p(\lambda_x|s) > 0.5$, and vice versa. On the other hand, if we sample from the conditional probability $p(\lambda_x|s)$ of the logical operator, the indicator for a logical error can be written as
\begin{align}
    N_\textrm{error}^{(s)} = \lambda_x (1 - p(\lambda_x|s)) + (1 - \lambda_x) p(\lambda_x|s).
\end{align}
Consider the difference $N_\textrm{error}^{(th)} - N_\textrm{error}^{(s)}$ and distinguish two cases, $p(\lambda_x|s) \leq 0.5$ and $p(\lambda_x|s) > 0.5$.\\
\begin{widetext}
\begin{equation}
\begin{aligned}
    &1\textrm{st case:} \ p(\lambda_x|s) \leq 0.5: \\
    & \quad N_\textrm{error}^{(th)} - N_\textrm{error}^{(s)} = (1-\lambda_x) - \bigl( \lambda_x(1-p(\lambda_x|s)) + (1-\lambda_x)p(\lambda_x|s) \bigr) = (1- p(\lambda_x|s))(1-2\lambda_x) \\
    &2\textrm{nd case:} \ p(\lambda_x|s) > 0.5: \\
    & \quad N_\textrm{error}^{(th)} - N_\textrm{error}^{(s)} = \lambda_x - \bigl( \lambda_x(1-p(\lambda_x|s)) + (1-\lambda_x)p(\lambda_x|s) \bigr) = p(\lambda_x|s)(2\lambda_x-1)
\end{aligned}
\end{equation} 
\end{widetext}
Consider, as an example, the first case. For $\lambda_x=1$, taking the most likely operator is beneficial, while for $\lambda_x=0$, sampling is preferable since, due to stochasticity, we also draw the right decision $\lambda_x=0$ in some cases. However, since $\lambda_x=1$ will be more likely (with $p(\lambda_x|s) > 0.5$), the error introduced by sampling is higher than that introduced by thresholding. The same reasoning applies to the second case and can be formalized by taking the expectation over the syndrome distribution $p(s)$ to obtain the logical error rate $p_L$ for both strategies:
\begin{widetext}
\begin{equation}
\begin{aligned}
    p_L^{(th)} - p_L^{(s)} &= \sum_s p(s) \sum_{\lambda_x = 0,1} p(\lambda_x|s)\\
    &= \sum_s p(s) \Biggl( (1 - p)^2 \, \mathbb{I}(p > 0.5) - p(1-p) \, \mathbb{I}(p \leq 0.5)  - p(1-p) \, \mathbb{I}(p > 0.5) + p^2 \, \mathbb{I}(p \leq 0.5) \Biggr)\\
    &= \sum_s p(s) \Biggl( (1-p) (1-2p) \, \mathbb{I}(p > 0.5) + p(2p-1) \, \mathbb{I}(p \leq 0.5) \Biggr) < 0,
\end{aligned}
\end{equation} 
\end{widetext}
for $p(\lambda_x=0|s)=1-p$ and $p(\lambda_x=1|s)=p$. The argument can be straightforwardly generalized to several logical operators. Thus, we choose the most likely logical error in the following.

\section{Numerical calculations of the CI under circuit-level noise}
\begin{figure}
    \centering
    \includegraphics[width=0.97 \linewidth]{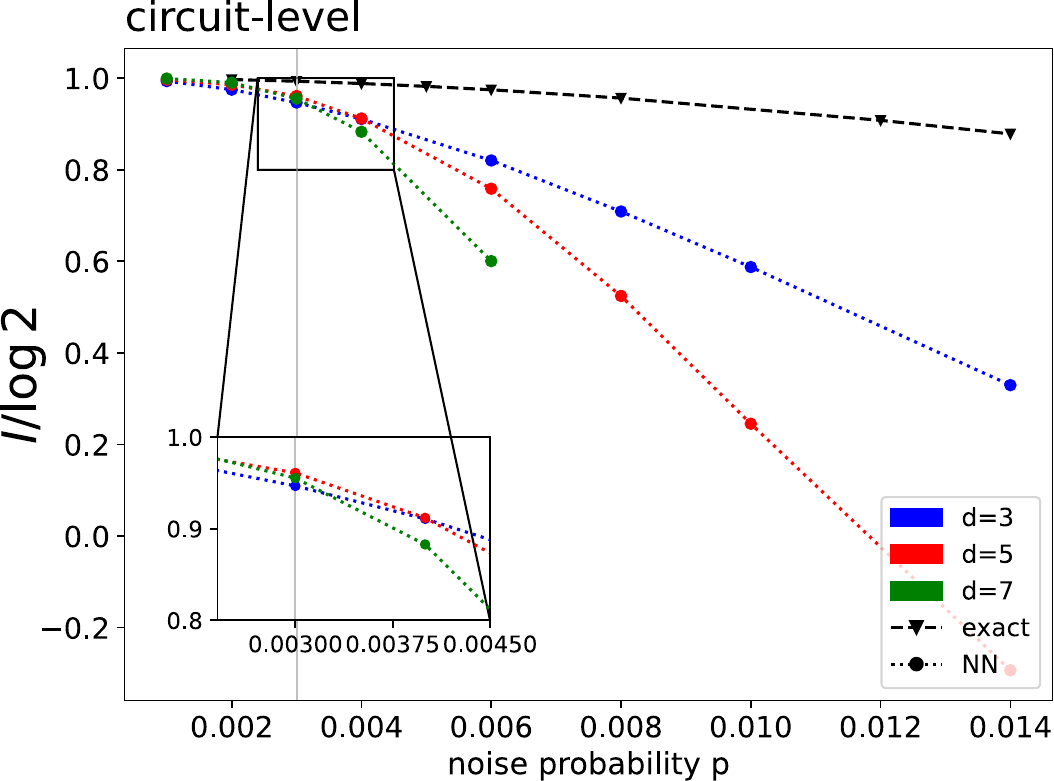}
    \caption{\justifying Comparison to the exact CI calculation in the circuit-level noise setting. We note that the previous method overestimates the \gls{ci} as computed with or neural network approach.}
    \label{fig:CI-circuit-level-exact}
\end{figure}
In Fig.~\ref{fig:results_CI}, we compare the \gls{ci} obtained under the phenomenological noise model with exact calculations performed following the method of Ref.~\cite{colmenarez_accurate_2024}.
For the circuit-level noise model at code distance $d=3$, we observe a discrepancy between our results and the one computed using the method in Ref.~\cite{colmenarez_accurate_2024}. We conjecture that this difference originates from the specific modeling of the density matrix $\rho_{Q}$ used in their approach. A comparison to the exact solution can be found in Fig.~\ref{fig:CI-circuit-level-exact}.

In Ref.~\cite{colmenarez_accurate_2024}, the density matrix $\rho_{Q}$ includes both the data and ancilla qubits, with the ancilla qubits not being measured. Instead, the full system of data and ancillas is treated coherently as the joint reference-system state. However, in order to more faithfully model the actual \gls{qec} procedure, the density matrix should instead be constructed as
\begin{equation}
    \rho_{RQ} = \sum_m \rho^m_{RQ} \otimes \ketbra{m}{m}_A,
\end{equation}
where $\rho^m_{RQ}$ denotes the density matrix of the data qubits and reference system, and the ancilla qubits $A$ are projected onto the measurement outcomes $m$. This construction reflects the projective nature of the stabilizer measurements performed in \gls{qec} circuits.

When the ancilla qubits are not measured, as is the case in the modeling of Ref.~\cite{colmenarez_accurate_2024}, coherent errors can propagate through the entangling two-qubit gates, particularly between $Z$- and $X$-type stabilizer ancillas. These ancilla qubits can remain entangled, leading to an overestimation of the \gls{ci}. Therefore, we believe that proper treatment of ancilla measurements is necessary for an accurate estimation of \gls{ci} under circuit-level noise.

\section{Phenomenological noise for distance-seven code}
In the phenomenological noise setting, we observe that training becomes significantly more difficult and computationally expensive for higher distances, similarly to the circuit-level noise case at similar code distances (e.g., $d=7$). However, for phenomenological noise the network often fails to converge to a good solution and instead gets trapped in local minima, where the \gls{ci} is not reliably attained. This reduced convergence performance may be attributed to the lower amount of structural information present in the data under phenomenological noise compared to circuit-level noise, making it more challenging for the network to learn meaningful patterns. Due to the high computational cost and unstable convergence behavior, we refrained from exhaustively optimizing the training in this regime. Nonetheless, for the sake of completeness, we report the obtained results in estimating the \gls{ci} in this section in Fig.~\ref{fig:CI-phenomenological-d7}.
\begin{figure}
    \centering
    \includegraphics[width=\linewidth]{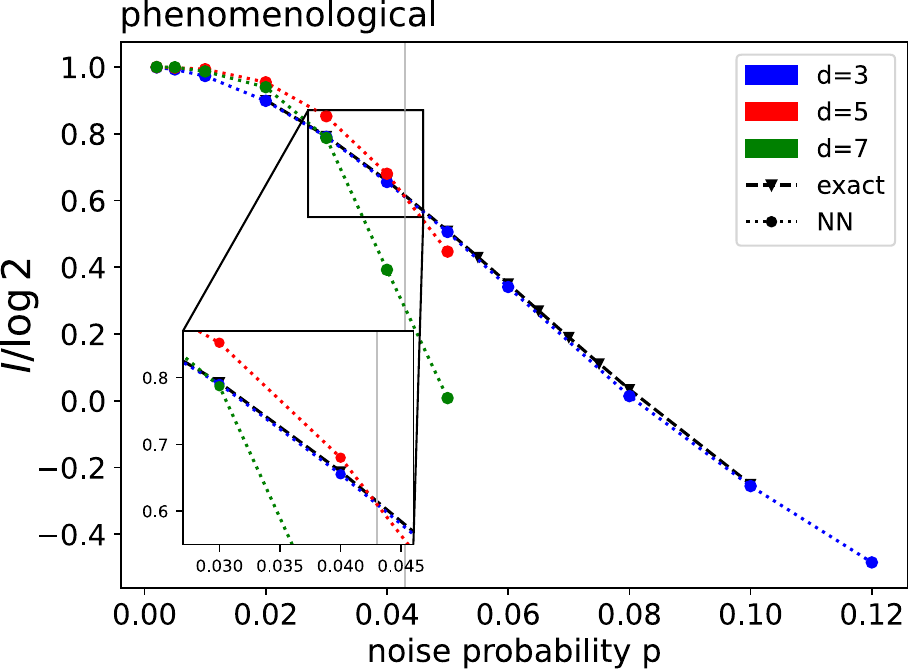}
    \caption{\justifying Estimates of the \gls{ci} for phenomenological noise, incl. $d=7$. We find that the training did not fully converged due to limited training time or got often stuck in suboptimal minima.}
    \label{fig:CI-phenomenological-d7}
\end{figure}

\section{Training loss} \label{app:loss}
To show how well the training converged or to verify that the neural network training has converged, we plot the training and validation loss over the course of training for the models used in our experiments. As shown in Fig.~\ref{fig:training_loss}, the loss decreases steadily and saturates, indicating stable convergence. The validation loss follows a similar trend, suggesting that the model does not overfit and generalizes well to unseen syndromes. These results confirm that the training process was successful and the reported performance is based on fully trained models.

\begin{figure*}
    \centering
    \includegraphics[width=0.8\linewidth]{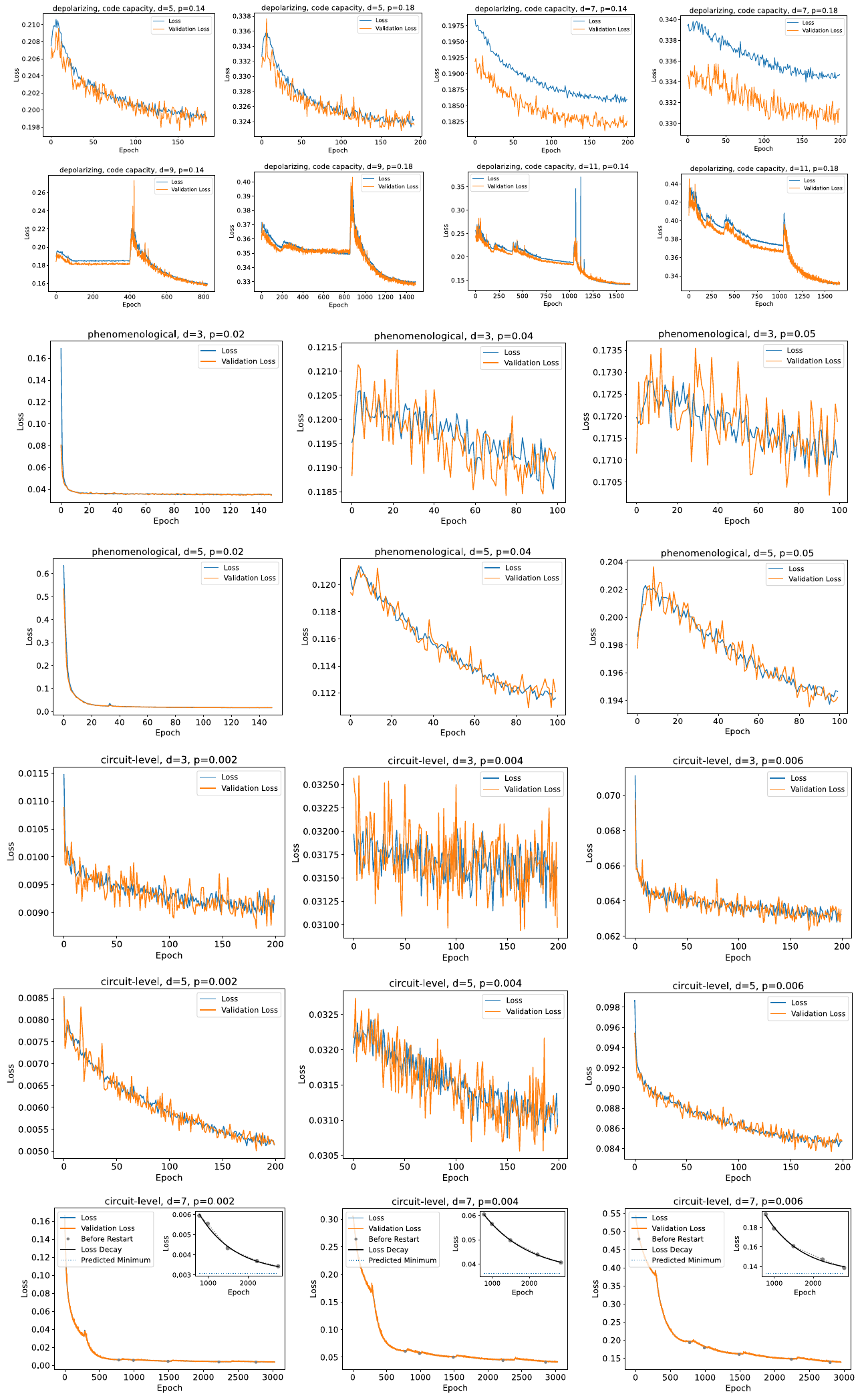}
    \caption{\justifying Training loss for selected, representative noise rates. We observe that training is most difficult around the threshold. For circuit level noise at distance $d=7$ we observe that the training did not fully converge yet. In the inset, we show to which value we expect the loss to go down.}
    \label{fig:training_loss}
\end{figure*}

\section{Theoretical perspective on MLD post-selection} \label{sec:post-selection-appendix}
In this section, we complete the theoretical results presented in Sec.~\ref{sec:post-selection} by proving the stated theorems and presenting additional information. In the following, we denote by $\mathcal{S}$ the set of all possible syndromes and by $p_L$ the logical error rate, which may depend on noise rate, post-selection parameter and post-selection scheme. Every probability measure $\mathbb{P}$ has to be understood as measure in the syndrome space. 

We recap the definition of the investigated \gls{mld} post-selection schemes. The selection functions $f$ are defined as follows.

\begin{definition}
Standard MLD post-selection. Let $c \in [0,1]$. Then the MLD post-selection function is given by 
\begin{equation}
    \begin{aligned}
        &f_\mathrm{MLD}: [0,1] \rightarrow \mathcal{P}(\mathcal{S}), \\ &f_\mathrm{MLD}(c) = \{ s \in \mathcal{S}: \underset{\lambda \in \Lambda}{\max} \  p(\lambda | s) > c\}.
    \end{aligned}
\end{equation}
\end{definition}

\begin{definition} 
Split MLD post-selection: Let $c \in [0,1]$. Then the split MLD post-selection function is given by 
\begin{equation}
\begin{aligned}
&f_\mathrm{NN}: \mathbb{R} \rightarrow \mathcal{P}(\mathcal{S}), \\ 
&f_\mathrm{NN}(c) = \{ s \in \mathcal{S}: \ (p(\lambda_x | s) > \sqrt{c} \ \lor \ p(\lambda_x | s) < 1 - \sqrt{c})  \\
&\quad  \land \ (p(\lambda_z | \lambda_x, s) > \sqrt{c} \ \lor \ p(\lambda_z | \lambda_x, s) < 1 - \sqrt{c}) \}.
    \end{aligned}
    \end{equation}
\end{definition}

\begin{remark}
    The idea for split \gls{mld} post-selection naturally emerges when autoregressively predicting the logical operators and applying a confidence filtering on the network output. However, given $p(\lambda|s)$ for $\lambda \in \{\mathbb{I}, \bar{X}, \bar{Y}, \bar{Z}$, the conditional probabilities can be calculated via
    \begin{align*}
        &p(\lambda_x|s) = \sum_{\lambda_z} p(\lambda_x, \lambda_z |s), \\
        & p(\lambda_z|s, \lambda_x = \hat{\lambda}_x) = \frac{p(\lambda_x = \hat{\lambda}_x, \lambda_z | s)}{p(\lambda_x = \hat{\lambda}_x |s)},
    \end{align*}
    enabling split \gls{mld} post-selection, using the \gls{mld} coset probabilities. Therefore, split \gls{mld} post-selection is a post-selection scheme in the following sense.
\end{remark}

\begin{definition} \label{def:app-post-selection}
    \gls{mld} post-selection: A post-selection scheme $f$ is an \gls{mld} post-selection scheme if syndromes are accepted or discarded based on the information available in the \gls{mld} coset probabilities and for post-selection parameter $c \in [0,1]$ we have $\underset{\lambda \in \Lambda}{\mathrm{max}} \ p(\lambda|s) > c$, i.e.~$f(c) \subset f_\mathrm{MLD}(c)$.
\end{definition}

\begin{definition}
Let $c \in [0,1]$ and $f: c \rightarrow \mathcal{P}(\mathcal{S})$ a post-selection function. Then the abort probability is defined by
\[
p_\mathrm{abort}(c, f) = \mathbb
P(f(c)^\mathcal{C}) = \sum_{s \notin f(c)} p(s).
\]
\end{definition}

Split MLD post-selection is stricter than MLD post-selection and therefore an MLD post-selection scheme in the setting of Def.~\ref{def:app-post-selection}.

\begin{corollary}
Let $s \in \mathcal{S}$. Then $f_\mathrm{NN}(c) \subset f_\mathrm{MLD}(c)$.
\end{corollary}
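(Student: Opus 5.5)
The plan is to show directly that every syndrome accepted by split \gls{mld} post-selection has a logical coset of probability strictly larger than $c$. The key tool is the factorization $p(\lambda|s)=p(\lambda_x|s)\,p(\lambda_z|\lambda_x,s)$ from Sec.~\ref{sec:interpretation-main}, together with the fact that both $\lambda_x$ and $\lambda_z$ are binary variables. The proof then reduces to multiplying two confidence bounds of size $\sqrt{c}$.

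I first dispose of the degenerate regime $c<1/4$. Here $f_\mathrm{MLD}(c)=\mathcal{S}$, because four coset probabilities summing to one force $\max_\lambda p(\lambda|s)\geq 1/4>c$. The inclusion $f_\mathrm{NN}(c)\subset f_\mathrm{MLD}(c)$ is therefore trivial in this case. This matters because for $\sqrt{c}<1/2$ the conditions ``$>\sqrt{c}$ or $<1-\sqrt{c}$'' hold for every probability, so the split scheme accepts everything.

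For $c\geq 1/4$, so $\sqrt{c}\geq 1/2$, I fix $s\in f_\mathrm{NN}(c)$ and translate the two-sided condition on the binary marginal. The condition $p(\lambda_x=1|s)>\sqrt{c}$ or $p(\lambda_x=1|s)<1-\sqrt{c}$ is equivalent to the existence of a value $\lambda_x^\ast\in\{0,1\}$ with $p(\lambda_x^\ast|s)>\sqrt{c}$. In the second case this uses $p(\lambda_x=0|s)=1-p(\lambda_x=1|s)>\sqrt{c}$. Since $\sqrt{c}\geq 1/2$, this $\lambda_x^\ast$ is the most likely value, which is exactly the token fed back autoregressively.

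The second condition, read at $\lambda_x=\lambda_x^\ast$, gives in the same way a value $\lambda_z^\ast$ with $p(\lambda_z^\ast|\lambda_x^\ast,s)>\sqrt{c}$. If the definition is instead read as holding for all values of $\lambda_x$, it holds in particular at $\lambda_x^\ast$, so the argument is unaffected. Then
\[
\max_\lambda p(\lambda|s)\;\geq\; p(\lambda_x^\ast,\lambda_z^\ast|s)\;=\;p(\lambda_x^\ast|s)\,p(\lambda_z^\ast|\lambda_x^\ast,s)\;>\;\sqrt{c}\cdot\sqrt{c}\;=\;c ,
\]
so $s\in f_\mathrm{MLD}(c)$.

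The main obstacle is not the computation but fixing the precise meaning of the definition of $f_\mathrm{NN}$, namely which value of $\lambda_x$ the conditional $p(\lambda_z|\lambda_x,s)$ refers to. I would state explicitly that it is the autoregressively selected $\lambda_x^\ast$, as in the Remark, and note that a stricter reading only shrinks $f_\mathrm{NN}(c)$. A second point to check is that the strict inequalities survive the multiplication, which holds because both factors are strictly above $\sqrt{c}\geq 0$.
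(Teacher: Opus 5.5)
Your proof is correct and follows the paper's argument: both use $p(\lambda|s)=p(\lambda_x|s)\,p(\lambda_z|\lambda_x,s)$ to multiply the two $\sqrt{c}$ confidence bounds into $\max_\lambda p(\lambda|s)>c$. Your version is slightly cleaner. You bound $\max_\lambda p(\lambda|s)\geq p(\lambda_x^\ast,\lambda_z^\ast|s)$ directly, whereas the paper takes $(\hat\lambda_x,\hat\lambda_z)$ to be the joint argmax and asserts that its components satisfy the $\sqrt{c}$ bounds, which need not hold when $\sqrt{c}$ is close to $1/2$.
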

\begin{proof}
Let $s \in f_\mathrm{NN}(c)$ and $\lambda = \underset{\lambda \in \Lambda}{\mathrm{arg \, max}} \, p(\lambda | s) = (\hat{\lambda}_x, \hat{\lambda}_z)$. Then $p(\lambda_x = \hat{\lambda}_x | s) > \sqrt{c}$ and $p(\lambda_z = \hat{\lambda}_z | \hat{\lambda}_x, s) > \sqrt{c}$. Since $p(\lambda | s) = p(\lambda_x|s) p(\lambda_z | \lambda_x, s)$, we have $\underset{\lambda \in \Lambda}{\max} \  p(\lambda | s) > c$ and $s \in f_\mathrm{MLD}(c)$. 
\end{proof}

The logical error rate $p_L$ and its complement, the success rate $p_\mathrm{succ}$, quantify how well \gls{mld} performs on the set of accepted syndromes, i.e.~how likely a logical error is introduced after decoding on the set of accepted syndromes. 
\begin{corollary} \label{cor:bound}
    Let $f$ be an MLD post-selection function. Let $c \in [0,1]$. Then $p_\mathrm{succ}^c > c$ or equivalently, $p_\mathrm{L}^c < 1-c$.
\end{corollary}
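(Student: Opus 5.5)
We prove Corollary~\ref{cor:bound} by writing the success probability of \gls{mld} restricted to the accepted syndromes as a weighted average of the maximum coset probabilities, and then bounding each term from below by $c$.

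First, fix $c$ with $f(c)\neq\emptyset$; as in the text, the logical error rate is only defined in that case. Conditioned on acceptance, the syndrome distribution is the renormalised measure
\[
\tilde p(s)=\frac{p(s)}{\sum_{s'\in f(c)}p(s')}, \qquad s\in f(c).
\]
On an accepted syndrome $s$, \gls{mld} applies the recovery $R\in f(s)\tilde\lambda\mathcal{S}$ with $\tilde\lambda=\arg\max_\lambda p(\lambda|s)$. The $STL$-decomposition of Sec.~\ref{sec:STL-main} shows that this recovery succeeds exactly when the error $E$ lies in the coset $\tilde\lambda\mathcal{S}$, because any stabiliser factor is harmless. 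Conditioned on $s$, this happens with probability $\sum_{E\in\tilde\lambda\mathcal{S}}p(E|s)=p(\tilde\lambda|s)=\max_\lambda p(\lambda|s)$. Summing over accepted syndromes gives the representation \eqref{eq:p-succ}:
\[
p_\mathrm{succ}^c=\sum_{s\in f(c)}\tilde p(s)\,\max_{\lambda\in\Lambda}p(\lambda|s).
\]

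Second, use the defining property of an \gls{mld} post-selection scheme (Def.~\ref{def:mld-post-selection}), namely $f(c)\subset f_\mathrm{MLD}(c)$. Every accepted $s$ therefore satisfies $\max_\lambda p(\lambda|s)>c$. The weights $\tilde p(s)$ are nonnegative and sum to one over the nonempty set $f(c)$, so $p_\mathrm{succ}^c$ is a convex combination of numbers that are each strictly larger than $c$. Hence $p_\mathrm{succ}^c>c$. Strictness holds because at least one accepted syndrome carries positive weight; syndromes with $p(s)=0$ can be dropped without changing anything. Finally, $p_L^c=1-p_\mathrm{succ}^c<1-c$.

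The only step that needs care is the first one, namely identifying the conditional success probability of \gls{mld} on a syndrome with $\max_\lambda p(\lambda|s)$. This rests on the fact that, for fixed $s$, success depends only on the logical coset of $E$ relative to the fixed pure error $T$, which is the coset decomposition recalled in Sec.~\ref{sec:mld}. A secondary point is that the split scheme requires a well-defined $p(\lambda_z|\lambda_x,s)$. We sidestep this by using only the inclusion $f(c)\subset f_\mathrm{MLD}(c)$, which has already been established for $f_\mathrm{NN}$ by the preceding corollary.
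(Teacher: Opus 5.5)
Your proof is correct and follows essentially the same route as the paper. Like the paper, you write $p_\mathrm{succ}^c$ as the $\tilde p$-weighted average of $\max_{\lambda}p(\lambda|s)$ over the accepted syndromes, then bound each term below by $c$ using $f(c)\subset f_\mathrm{MLD}(c)$; your remark that strictness requires the accepted set to carry positive probability mass makes explicit a condition the paper only assumes implicitly.
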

\begin{proof}
    We start by writing the logical error rate
    \begin{align}
        p_L^c &= \sum_{E: \ s(E) \in f(c)} \Tilde{p}(E) \cdot p(\mathrm{fail} | s(E), E) \\
        &= \sum_{s \in f(c)} \Tilde{p}(s) \sum_{E} p(E|s) \cdot p(\mathrm{fail} | s(E), E) \\
        &= \sum_{s \in f(c)} \Tilde{p}(s) \cdot p(\mathrm{fail} | s),
    \end{align}
    where 
    \begin{align}
        \Tilde{p}(E) = \frac{p(E)}{\sum_{E^\prime: \ s(E) \in f(c)} p(E^\prime)},
    \end{align}
    is the probability of an error restricted to the accepted syndromes and 
    \begin{align} \label{eq:syndrome-normalized}
        \Tilde{p}(s) = \frac{p(s)}{\sum_{s^\prime \in f(c)} p(s^\prime)},
    \end{align}
    is the probability of observing a syndrome $s$ normalized by the set of accepted syndromes. Under MLD, we have 
    \begin{align*}
    &p(\mathrm{fail} | s) = \sum_{\lambda \in \Tilde{\Lambda}(s)} p(\lambda | s) = 1 - \underset{\lambda \in \Lambda}{\mathrm{max}} \ p(\lambda|s) \\ 
    & \quad \mathrm{for} \  \Tilde{\Lambda}(s) = \Lambda \setminus \{ \underset{\lambda \in \Lambda}{\mathrm{arg \ max}} \ p(\lambda | s) \}.
    \end{align*}
    This means, we can rewrite the success probability of decoding, $p_\mathrm{succ} = 1 - p_L$ as
    \begin{align}
        p_\mathrm{succ} = \sum_{s \in f(c)} \Tilde{p}(s) \  \underset{\lambda \in \Lambda}{\mathrm{max}} \ p(\lambda | s). \label{eq:p-succ}
    \end{align}
    Since we sum over syndromes $s \in f(c)$, we have $\underset{\lambda \in \Lambda}{\mathrm{max}} \ p(\lambda | s) > c$ and therefore
    \begin{align}
        p_\mathrm{succ} > c \sum_{s \in f(c)} \Tilde{p}(s) = c.
    \end{align}
\end{proof}

\begin{corollary}
    Let $f$ be an MLD post-selection function and $c,c^\prime \in [0,1]$ with $c < c^\prime$. Then $p_\mathrm{succ}^c < p_\mathrm{succ}^{c^\prime}$ or equivalently, $p_\mathrm{L}^c > p_\mathrm{L}^{c^\prime}$.
\end{corollary}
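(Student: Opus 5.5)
The plan is to work directly from the representation of the success probability derived in the proof of Cor.~\ref{cor:bound}, namely Eq.~\eqref{eq:p-succ}:
\[
p_\mathrm{succ}^c = \frac{\sum_{s \in f(c)} p(s)\, g(s)}{\sum_{s \in f(c)} p(s)}, \qquad g(s) = \max_{\lambda\in\Lambda} p(\lambda|s).
\]
Thus $p_\mathrm{succ}^c$ is the $p$-weighted average of $g$ over the accepted set $f(c)$. The statement then reduces to showing that raising the parameter from $c$ to $c^\prime$ only removes syndromes whose $g$-value is no larger than that of every syndrome kept.

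First I establish nestedness, $f(c^\prime) \subset f(c)$ for $c<c^\prime$. For $f_\mathrm{MLD}$ this is immediate, since $g(s)>c^\prime$ implies $g(s)>c$. For $f_\mathrm{NN}$ it also holds, because the conditions $p>\sqrt{c^\prime}$ or $p<1-\sqrt{c^\prime}$ imply the corresponding conditions with $c$. For a general MLD scheme in the sense of Def.~\ref{def:mld-post-selection}, I would assume the scheme is a threshold rule on $g$, i.e.\ $f(c)=\{s: g(s)>c\}$ up to the choice of parametrization. Without this, monotonicity cannot hold in general, so this is the implicit hypothesis I would state explicitly.

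Next, write $D = f(c)\setminus f(c^\prime)$, so that $f(c)=f(c^\prime)\,\dot\cup\, D$. Set $A=\sum_{f(c^\prime)}p(s)$, $B=\sum_D p(s)$, and let $\bar g_{c^\prime}$ and $\bar g_D$ be the weighted averages of $g$ on the two pieces. Then
\[
p_\mathrm{succ}^c = \frac{A\,\bar g_{c^\prime} + B\,\bar g_D}{A+B} = p_\mathrm{succ}^{c^\prime} - \frac{B}{A+B}\bigl(\bar g_{c^\prime}-\bar g_D\bigr).
\]
Under the threshold form, every $s\in D$ has $g(s)\le c^\prime$, while every $s\in f(c^\prime)$ has $g(s)>c^\prime$. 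Hence $\bar g_{c^\prime} > c^\prime \ge \bar g_D$. Combined with $A>0$, which holds because $f(c^\prime)\neq\emptyset$ by standing assumption, this yields $p_\mathrm{succ}^c \le p_\mathrm{succ}^{c^\prime}$. Equivalently, $p_L^c \ge p_L^{c^\prime}$ follows from $p_L = 1-p_\mathrm{succ}$.

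The main obstacle is strictness. The inequality is strict exactly when $B>0$, i.e.\ when $D$ contains a syndrome of positive probability. If no syndrome has $g(s)\in(c,c^\prime]$, then $f(c)=f(c^\prime)$ and the two rates coincide. This happens, for instance, for all $c<c^\prime<1/4$, as noted in the paper. I would therefore prove the strict version under the hypothesis $f(c)\neq f(c^\prime)$, and otherwise record the weak inequality together with equality in the degenerate case. For the split scheme $f_\mathrm{NN}$, which is not literally a threshold on $g$, I would check the separation property by hand: a syndrome in $D$ has some per-token probability in $[1-\sqrt{c^\prime},\sqrt{c^\prime}]$. If the ordering of $g$ values across the two pieces cannot be guaranteed, I would fall back to stating the result for $f_\mathrm{MLD}$ only.
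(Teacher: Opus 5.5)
Your proposal is correct and is essentially the paper's own proof, which likewise splits $f(c)=f(c^\prime)\cup\bigl(f(c)\setminus f(c^\prime)\bigr)$, bounds $\max_{\lambda}p(\lambda|s)$ by $c^\prime$ on the removed syndromes, and then invokes $p_\mathrm{succ}^{c^\prime}>c^\prime$ from the previous corollary. The paper leaves nestedness and the bound on $f(c)\setminus f(c^\prime)$ implicit and only derives $\le$, so your caveats are warranted: strictness needs the removed set to carry positive probability, and the threshold property fails for $f_\mathrm{NN}$, which can break monotonicity there (per-token confidences $0.75, 0.99$ are rejected at $\sqrt{c^\prime}=0.8$ but accepted at $\sqrt{c}=0.7$ with $\max_\lambda p(\lambda|s)\approx 0.74>c^\prime$, while confidences $0.81, 0.81$ are accepted at both with $\max_\lambda p(\lambda|s)\approx 0.66$), so restricting to $f_\mathrm{MLD}$ is the right fallback.
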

\begin{proof}
    Starting from Eq.~\eqref{eq:p-succ}, we have
    \begin{widetext}
        \begin{equation}
            \begin{aligned}
        p_\mathrm{succ} &= \sum_{s \in f(c)} \Tilde{p}^c(s) \  \underset{\lambda \in \Lambda}{\mathrm{max}} \ p(\lambda | s) \\
        &= \sum_{s \in f(c^\prime)} \Tilde{p}^c(s) \  \underset{\lambda \in \Lambda}{\mathrm{max}} \ p(\lambda | s) + \sum_{s \in f(c) \setminus f(c^\prime)} \Tilde{p}^c(s) \  \underset{\lambda \in \Lambda}{\mathrm{max}} \ p(\lambda | s) \\
        &\leq \sum_{s \in f(c^\prime)} \Tilde{p}^c(s) \  \underset{\lambda \in \Lambda}{\mathrm{max}} \ p(\lambda | s) + c^\prime \sum_{s \in f(c) \setminus f(c^\prime)} \Tilde{p}^c(s) \\
        &= p_\mathrm{succ}^{c^\prime} \frac{\sum_{s^\prime \in f(c^\prime)} p(s^\prime)}{\sum_{s \in f(c)} p(s)} + c^\prime \frac{\sum_{s^\prime \in f(c) \setminus f(c^\prime)} p(s^\prime)}{\sum_{s \in f(c)} p(s)} \\
        &\leq p_\mathrm{succ}^{c^\prime} \frac{\sum_{s^\prime \in f(c)} p(s^\prime)}{\sum_{s \in f(c)} p(s)} = p_\mathrm{succ}^{c^\prime},
    \end{aligned}
    \end{equation}
    \end{widetext}
    using the normalized syndrome distribution Eq.~\eqref{eq:syndrome-normalized}. The index $c$ denotes normalization of the syndrome density with respect to the post-selection parameter $c$. We used that $\Tilde{p}^{c^\prime}(s) \sum_{s^\prime \in f(c^\prime)} p(s^\prime) = \Tilde{p}^c(s) \sum_{s^\prime \in f(c)} p(s^\prime)$ and the result of Cor.~\ref{cor:bound}.
\end{proof}
As the case $c<0.25$ reduces to MLD decoding without post-selection, the previous corollary also proved that post-selection always decreases the logical error rate. With this, we formulate the main theorem of this chapter.

\begin{theorem} \label{thm:main-app}
Let $c, c^\prime \in [0, 1]$, $f$ an MLD post-selection function and $f^\prime: [0,1] \rightarrow \mathcal{P}(\mathcal{S})$ any other post-selection scheme. Then 
\[
p_\mathrm{abort}(c^\prime, f^\prime) \leq p_\mathrm{abort}(c, f) \Longrightarrow p_L(c^\prime, f^\prime) \geq p_L(c, f).
\]
\end{theorem}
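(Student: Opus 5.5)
The plan is to reduce everything to the function $g(s)=\max_{\lambda\in\Lambda} p(\lambda|s)$ and to prove a ``top-set'' (Neyman--Pearson type) inequality.

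\emph{Step 1: a per-syndrome bound for an arbitrary scheme.} Whatever decoder accompanies $f^\prime$, on an accepted syndrome $s$ it outputs some logical $\hat\lambda(s)$. Its conditional success probability is therefore $p(\hat\lambda(s)|s)\le g(s)$. Repeating the computation in the proof of Cor.~\ref{cor:bound} gives
\[
p_\mathrm{succ}(c^\prime,f^\prime)\le \frac{\sum_{s\in A^\prime} p(s)\,g(s)}{\sum_{s\in A^\prime}p(s)},\qquad A^\prime=f^\prime(c^\prime).
\]
For the MLD scheme, Eq.~\eqref{eq:p-succ} gives equality on $A=f(c)$. So both success rates are weighted averages of $g$ over the accepted sets. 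The hypothesis $p_\mathrm{abort}(c^\prime,f^\prime)\le p_\mathrm{abort}(c,f)$ says exactly that $\mu(A^\prime)\ge\mu(A)$, where $\mu(B)=\sum_{s\in B}p(s)$.

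\emph{Step 2: the rearrangement inequality.} I would show that for any $B$ with $\mu(B)\ge\mu(A)$,
\[
\frac{1}{\mu(B)}\sum_{B}p\,g\le\frac{1}{\mu(A)}\sum_{A}p\,g .
\]
I would argue this in two moves.
\begin{itemize}
\item \emph{Among sets of equal mass, the superlevel set is best.} Write $\sum_B pg-\sum_A pg=\sum_{B\setminus A}pg-\sum_{A\setminus B}pg$. On $B\setminus A$ we have $g\le c$, and on $A\setminus B$ we have $g>c$. This shows that swapping mass from $B\setminus A$ into $A$ can only increase the sum.
\item \emph{Enlarging a superlevel set lowers the average.} Adding syndromes with $g\le c$ to $A$ averages in values below every value already present. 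This is the same mechanism as the proof of Cor.~\ref{cor:monotony}.
\end{itemize}
Combining the two moves, with the mass excess of $B$ over $A$ handled as in the second move, gives the inequality. It then yields $p_\mathrm{succ}(c^\prime,f^\prime)\le p_\mathrm{succ}(c,f)$, i.e.\ $p_L(c^\prime,f^\prime)\ge p_L(c,f)$.

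\emph{Main obstacle.} The difficulty is that Step~2 needs $A$ to be a genuine superlevel set of $g$. That is, every rejected syndrome must have $g$ no larger than every accepted one. This holds for $f_\mathrm{MLD}$ itself. It is not automatic for an arbitrary MLD scheme with $f(c)\subset f_\mathrm{MLD}(c)$: a scheme that rejects high-confidence syndromes while keeping borderline ones would violate the conclusion. I would therefore first prove the theorem for $f=f_\mathrm{MLD}$, and more generally for any $f$ that accepts exactly $\{g>t\}$ for some threshold $t$. I would then state explicitly the needed condition for general MLD schemes: $g(s)\le\min_{A}g$ for all $s\notin A$. For $f_\mathrm{NN}$ this condition has to be checked separately, or the claim must be weakened accordingly.

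A further technical point is the discreteness of $\mathcal{S}$. Exact mass matching may be impossible, and ties in $g$ occur. Both moves above use only inequalities between values on and off $A$, so I expect them to go through without needing a fractional acceptance argument.
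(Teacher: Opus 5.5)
Your argument is sound and in substance is the paper's. The paper also writes $p_\mathrm{succ}$ as the $p$-weighted average of $g(s)=\max_\lambda p(\lambda|s)$ over the accepted set, splits it over $f(c)\cap f'(c')$, $f(c)\setminus f'(c')$ and $f'(c')\setminus f(c)$, and uses $g>c$ on the second piece and $g\le c$ on the third.

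Your reservation about general MLD schemes is not a gap in your proof; it is a genuine defect of the statement. The paper's proof uses $g(s)\le c$ for $s\notin f(c)$ in two places: in case (i), where it says there exist $s$ with $\max_\lambda p(\lambda|s)\le c$, and in the penultimate inequality of case (ii). That property holds only when $f(c)=f_\mathrm{MLD}(c)$, or more generally when $f(c)$ is a superlevel set of $g$. For arbitrary $f(c)\subset f_\mathrm{MLD}(c)$ the theorem fails, and $f_\mathrm{NN}$ is already a counterexample. At $c=0.64$, a syndrome with $p(\lambda_x{=}0|s)=p(\lambda_z{=}0|\lambda_x{=}0,s)=0.81$ is accepted with $g=0.6561$. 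The paper's own remark example, with cosets $(0.294,0.686,0.014,0.006)$, is rejected with $g=0.686$. If these two syndromes are equally likely, swapping one for the other gives an $f'$ with the same abort probability and strictly smaller $p_L$.

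Your treatment of the extra accepted mass is also more careful than the paper's. In case (ii) the paper replaces $c\sum_{f(c)\setminus f'(c')}p(s)$ by $c\sum_{f'(c')\setminus f(c)}p(s)$ as if the latter were smaller. But $p_\mathrm{abort}>p_\mathrm{abort}'$ means exactly that $f'(c')\setminus f(c)$ carries more mass than $f(c)\setminus f'(c')$, so that step runs the wrong way.

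Your two moves combine into the single bound $\sum_{B}p\,g-\sum_{A}p\,g\le c\,(\mu(B)-\mu(A))$. Hence $\mu(B)^{-1}\sum_B p\,g$ is at most the convex combination of $\mu(A)^{-1}\sum_A p\,g$ and $c$, with weights $\mu(A)/\mu(B)$ and $1-\mu(A)/\mu(B)$. Since the average of $g$ over $A$ exceeds $c$, this closes the argument without splitting atoms, as you expected. It is worth writing the step out in exactly this form rather than as two separate moves.

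Your Step 1 is also a mild strengthening of the paper. The paper assumes that $f'$ is paired with MLD decoding, whereas your bound holds for any decoder used with $f'$.
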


\begin{proof}
    Since the logical error rate is not well defined in the case of $p_\mathrm{abort} = 1$, we assume without loss of generality $\mathbb{P}(s \in f(c))>0$ and $\mathbb{P}(s \in f^\prime(c^\prime))>0$. We assume that decoding is performed via \gls{mld}, i.e.~$p(\mathrm{succ} | s) = \underset{\lambda \in \Lambda}{\mathrm{max}} \ p(\lambda | s)$.
    Let $f^\prime$ be another post-selection scheme with parameter $c^\prime$ such that $p_\mathrm{abort}(c^\prime, f^\prime) \leq p_\mathrm{abort}(c, f)$.
    Then
    \begin{align}
        p_\mathrm{succ} &=  \sum_{s \in f(c)} \Tilde{p}(s) \  \underset{\lambda \in \Lambda}{\mathrm{max}} \ p(\lambda | s)\\
        &= \frac{\sum_{s \in f(c)} p(s) \ \underset{\lambda \in \Lambda}{\mathrm{max}} \ p(\lambda | s)}{\sum_{s^\prime \in f(c)} p(s^\prime)} \\
        &= \frac{\sum_{s \in f(c)} p(s) \ \underset{\lambda \in \Lambda}{\mathrm{max}} \ p(\lambda | s)}{1 - p_\mathrm{abort}}.
    \end{align}
    Distinguish two cases:\\
    (i) $p_\mathrm{abort} = p_\mathrm{abort}^\prime$: Then either $f(c) = f^\prime (c^\prime) \implies p_\mathrm{succ} = p_\mathrm{succ}^\prime$. Or $f(c) \neq f^\prime (c^\prime)$. Then there exist one or several $s$ such that $\underset{\lambda \in \Lambda}{\mathrm{max}} \ p(\lambda | s) \leq c$. Thus, $p_\mathrm{succ} > p_\mathrm{succ}^\prime$. \\
    (ii) $p_\mathrm{abort} > p_\mathrm{abort}^\prime$: Rewrite
    \begin{widetext}
        \begin{equation}
            \begin{aligned}
        p_\mathrm{succ} &= \frac{\sum_{s \in f(c)} p(s) \ \underset{\lambda \in \Lambda}{\mathrm{max}} \ p(\lambda | s)}{1 - p_\mathrm{abort}} \\
        &> \frac{\sum_{s \in f(c)} p(s) \ \underset{\lambda \in \Lambda}{\mathrm{max}} \ p(\lambda | s)}{1 - p_\mathrm{abort}^\prime} \\
        &= \frac{\sum_{s \in f(c) \cap f^\prime (c^\prime)} p(s) \ \underset{\lambda \in \Lambda}{\mathrm{max}} \ p(\lambda | s) + \sum_{s \in f(c) \setminus f^\prime (c^\prime)} p(s) \ \underset{\lambda \in \Lambda}{\mathrm{max}} \ p(\lambda | s)}{1 - p_\mathrm{abort}^\prime} \\
        &> \frac{\sum_{s \in f(c) \cap f^\prime (c^\prime)} p(s) \ \underset{\lambda \in \Lambda}{\mathrm{max}} \ p(\lambda | s) + c \sum_{s \in f(c) \setminus f^\prime (c^\prime)} p(s)}{1 - p_\mathrm{abort}^\prime} \\
        &> \frac{\sum_{s \in f(c) \cap f^\prime (c^\prime)} p(s) \ \underset{\lambda \in \Lambda}{\mathrm{max}} \ p(\lambda | s) + c \sum_{s \in f^\prime (c^\prime) \setminus f(c)} p(s)}{1 - p_\mathrm{abort}^\prime} \\
        & \geq \frac{\sum_{s \in f(c) \cap f^\prime (c^\prime)} p(s) \ \underset{\lambda \in \Lambda}{\mathrm{max}} \ p(\lambda | s) + \sum_{s \in f^\prime (c^\prime) \setminus f(c)} p(s)  \ \underset{\lambda \in \Lambda}{\mathrm{max}} \ p(\lambda | s)}{1 - p_\mathrm{abort}^\prime} \\
        &= \frac{\sum_{s \in f^\prime(c^\prime)} p(s) \ \underset{\lambda \in \Lambda}{\mathrm{max}} \ p(\lambda | s)}{1 - p_\mathrm{abort}^\prime} = p_\mathrm{succ}^\prime.
    \end{aligned}
        \end{equation}
    \end{widetext}
    Thus, $p_L(c^\prime, f^\prime) < p_L(c, f)$.
\end{proof}

In words, Thm.~\ref{thm:main-app} establishes that any MLD-based post-selection scheme is optimal: no other post-selection strategy can achieve a lower logical error rate without discarding additional syndromes. An MLD-based post-selection scheme chooses the subset of syndromes that yields the smallest possible logical error rate for the given abort probability.
In the next remarks, some examples of sub-optimal decoding and post-selection are discussed, where Thm.~\ref{thm:main-app} is not fulfilled. 

\begin{remark}
    In Ref.~\cite{english_thresholds_2024} a similar result is stated. However, we find that their theorem is wrong. It is stated: \\
    \textnormal{Post-selection that aborts if MLD returns a maximum coset probability less than some $c \in [0, 1]$ is optimal in the following sense. Such post-selection partitions the errors $\mathcal{E}$ into an abort set $\mathcal{E}_\mathrm{abort}$ of some cardinality $|\mathcal{E}_\mathrm{abort}| = r$. This partition strictly upper bounds $p_\mathrm{succ}$ for any abort set with equivalent cardinality $r$.}\\
    From our previous calculations, we infer that $\mathbb{P}_\mathrm{succ} = [\underset{\sigma \in \mathcal{P}^{\otimes k}}{\mathrm{max}} Z_{\sigma E}]_{E_\mathrm{accept}}$ in Appendix B describes the expectation value over all syndromes. Thus we interpret 
    \[
    \mathbb{P}_\mathrm{succ} = [\underset{\sigma \in \mathcal{P}^{\otimes k}}{\mathrm{max}} Z_{\sigma E}]_{E_\mathrm{accept}} = \sum_{s \in f(c)} \Tilde{p}(s) \  \underset{\lambda \in \Lambda}{\mathrm{max}} \ p(\lambda | s).
    \]
    Now consider the (exemplary, hand-crafted) situation depicted in Table~\ref{tab:example}. In this example, there are four distinct syndromes, each triggered by an equal number of possible error chains. We compare two different post-selection schemes: first, the standard \gls{mld} post-selection, which Ref.~\cite{english_thresholds_2024} claims is optimal in the previously defined sense; and second, an alternative scheme that is designed to accept syndromes $s_1$ and $s_3$, while discarding the other two. Both post-selection schemes thus partition the set of errors into accept and reject sets of equal cardinality. We note that $p_\mathrm{succ} < p_\mathrm{succ}^\prime$, which contradicts the result of Ref.~\cite{english_thresholds_2024}. However, since $p_\mathrm{abort} < p_\mathrm{abort}^\prime$, our own result in Thm.~\ref{thm:main-app} remains valid. This highlights that the quantity constraining the choice of post-selection schemes should be the abort probability, rather than the cardinality, i.e. number of aborted errors, of the abort partition.
\end{remark}

\begin{table}[]
    \centering
    \caption{\justifying For $c=0.8$. Some noise model to match the given values. MLD post-selection accepts $s_1$ and $s_2$. The other post-selection scheme is designed to accept $s_1$ and $s_3$.}
    \begin{tabular}{c|c|c|c|c}
        $s$ & $s_1$ & $s_2$ & $s_3$ & $s_4$ \\\hline\hline
        $p(s)$ & $0.8$ & $0.1$ & $0.001$ & $0.099$ \\
        $\underset{\lambda \in \Lambda}{\mathrm{max}} \ p(\lambda | s)$ & $0.99$ & $0.9$ & $0.25$ & $0.25$ \\
        $\Tilde{p}(s)_\mathrm{MLD}$ & $\frac{8}{9}$ & $\frac{1}{9}$ & - & -  \\
        $\Tilde{p}(s)_\mathrm{other}$ & $\frac{800}{801}$ & - & $\frac{1}{801}$ & - \\\hline
        $p_\mathrm{succ,MLD}$ & \multicolumn{4}{|c}{$0.9800$}  \\
        $p_\mathrm{succ,other}$ & \multicolumn{4}{|c}{$\approx 0.9891$} \\\hline
        $p_\mathrm{abort,MLD}$ & \multicolumn{4}{|c}{$0.100$} \\
        $p_\mathrm{abort,other}$ & \multicolumn{4}{|c}{$0.199$}  \\
    \end{tabular}
    \label{tab:example}
\end{table}

Standard \gls{mld} post-selection and split \gls{mld} post-selection are not equal.
\begin{remark}
    Let $s \in \mathcal{S}$. Then $f_\mathrm{MLD}(c) \not \subset f_\mathrm{NN}(c)$.\\
    Example: $c=0.64$, $\sqrt{c}=0.8$.\\
    \gls{mld} coset probabilities: $p(1) = 0.294$, $p(X)=0.686$, $p(Y)=0.014$, $p(Z)=0.006 \implies $ accepted \\
    Conditional probabilities: $p(\lambda_x|s) = 0.7$, $p(\lambda_z|\lambda_x,s) = 0.02 \implies$ discarded
\end{remark}

\begin{remark}
    As we have proven in Theorem~\ref{thm:main-app}, all \gls{mld} post-selection schemes are optimal. This might appear contradictory to our claim in Definition~\ref{def:split-post-selection} that the split \gls{mld} post-selection scheme outperforms the standard approach. The key lies in understanding the nature of the post-selection decision. For certain values of the physical error rate~$p$, all \gls{mld} schemes with $f(c) \subset f_\mathrm{MLD}(c)$ will discard the same initial set of syndromes. The only difference between them is the order in which additional syndromes are discarded to further reduce the logical error rate.\\
    As noted in Section~\ref{sec:soft-post-selection}, the maximum logical error rate (under \gls{mld}) is typically reached at high noise levels when only the trivial syndrome is accepted. Because the trivial syndrome provides no additional information—such as indicating an imbalance in the certainty between the two logical operators—any \gls{mld} scheme that accepts only this syndrome will ultimately yield the same performance bound.\\
    In summary, with respect to the trade-off between logical error rate and abort probability, both standard and split \gls{mld} post-selection schemes are equally optimal. However, the split \gls{mld} scheme begins discarding syndromes earlier that contribute disproportionately to the logical error rate—particularly those associated with a significant imbalance in the confidence between the two logical operators.
\end{remark}

\begin{remark}
    The upper bound on the logical error rate under \gls{mld} post-selection, $p_L^c < 1-c$, is strict for both post-selection schemes presented. From Fig.~\ref{fig:post-selection-comparison} it becomes apparent that the highest logical error rate is reached when all syndrome but the trivial syndrome are discarded. For discarding the trivial syndrome, both MLD and split MLD post-selection are equivalent. Hence, both reach this strict upper bound once $p(1|s) = 1-c$. As decoding the trivial syndrome will always have the highes confidence, for high noise rates, the trivial syndrome is the only syndrome that is not discarded. If one is given the coset probability $p(1|s_0)$, it might be possible to derive a sharper bound on the logical error rate excluding the effect of the trivial syndrome.
\end{remark}

\begin{figure}[h]
    \centering
    \includegraphics[width=0.8\linewidth]{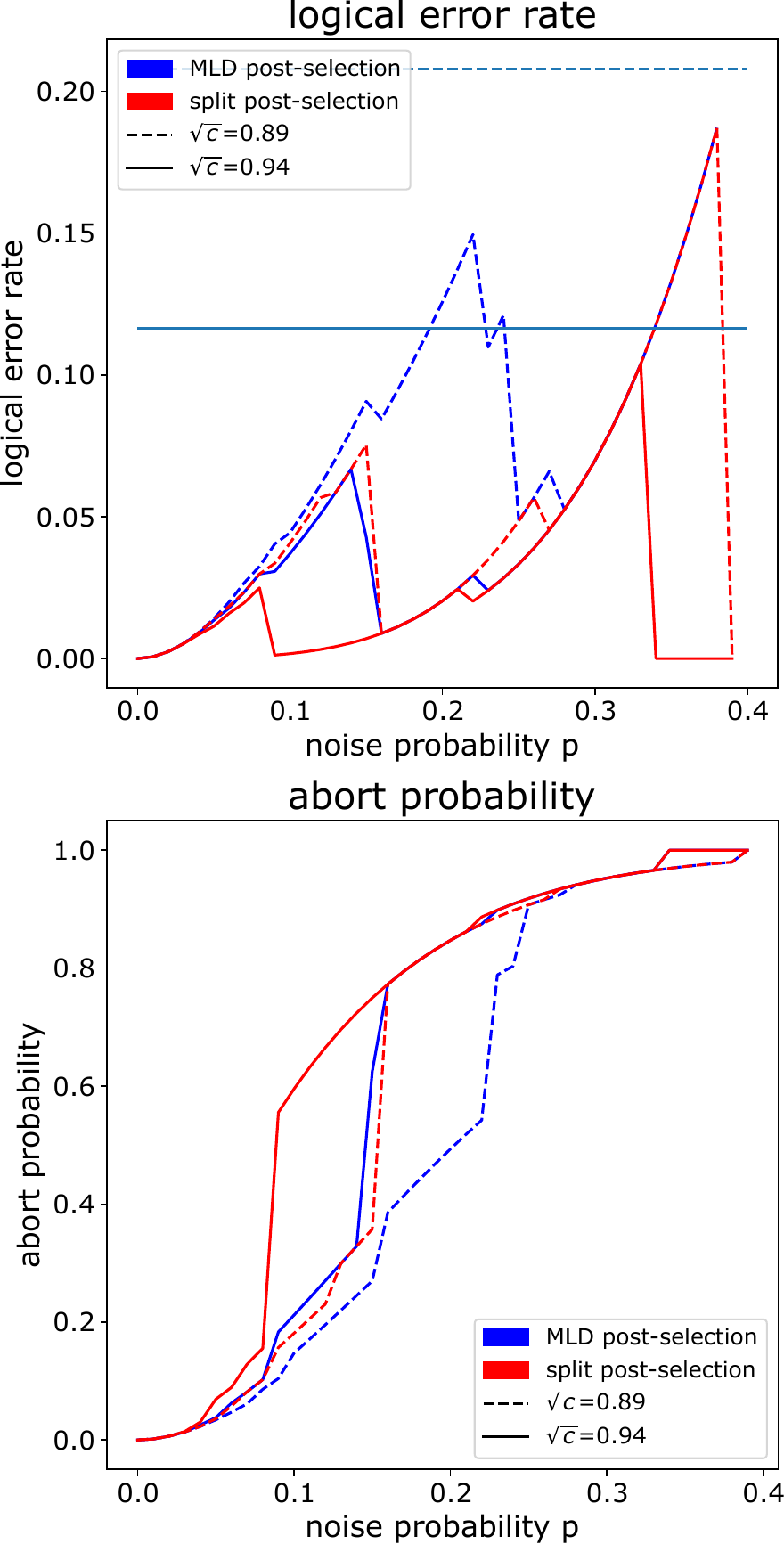}
    \caption{\justifying Comparison of the two post-selection schemes, standard \gls{mld} post-selection and split \gls{mld} post-selection, presented above. On the left, the upper bound on the logical error rate is shown.}
    \label{fig:post-selection-comparison}
\end{figure}

We find that the optimality established in Theorem~\ref{thm:main-app} reflects a balanced trade-off between discarding syndromes and accepting logical errors.
\begin{remark}
    We conjecture that each optimal post-selection scheme that fulfills Thm.~\ref{thm:main-app} has the property that $\underset{n \rightarrow \infty}{\lim} \frac{p_L}{p_{abort}} = 1-c$. We were not able to prove this conjecture but provide numerical evidence in Fig.~\ref{fig:pL-vs-pabort}. We leave the proof for future work.
\end{remark}

\begin{figure}
    \centering
    \includegraphics[width=\linewidth]{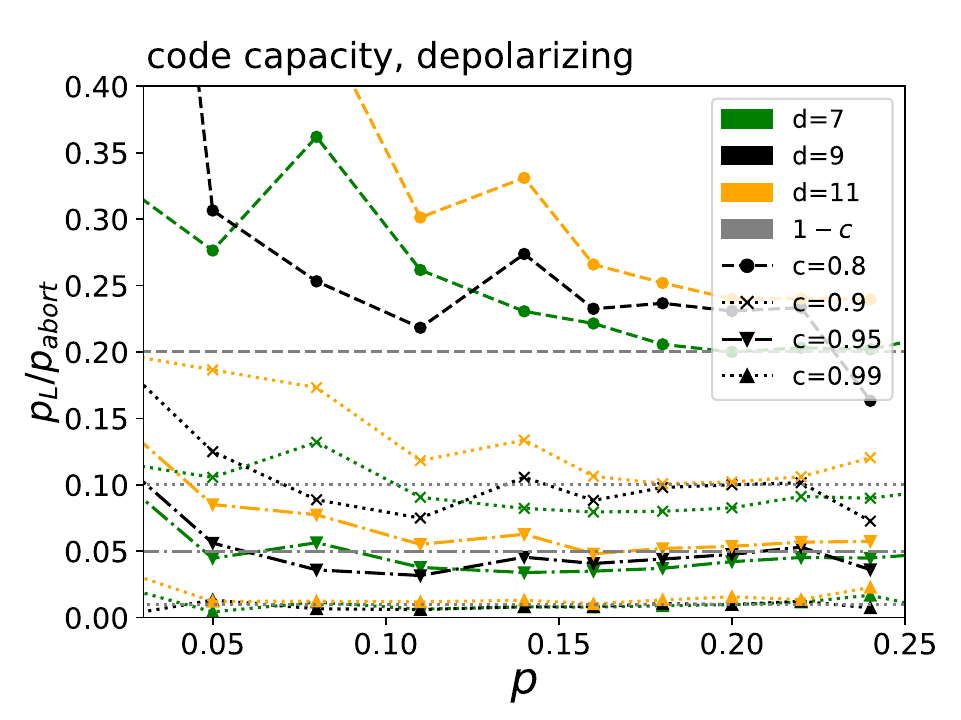}
    \caption{\justifying Logical error rate vs abort probability. Optimal post-selection schemes are characterized by a horizontal line, weighting equally between abort probability and logical error rate. Suboptimal schemes will not show a horizontal line and rather a $p$-dependence, as thresholds for logical error rate and abort probability do not coincide. We conjecture that $\underset{n \rightarrow \infty}{\lim} \frac{p_L}{p_{abort}} = 1-c$.}
    \label{fig:pL-vs-pabort}
\end{figure}

The abort threshold and the threshold of standard quantum computation coincide, as the maximum coset probability converges to unity in probability. This follows from the threshold theorem \cite{aharonov_fault-tolerant_2008,kitaev_fault-tolerant_2003,knill_resilient_1998,shor_fault-tolerant_1997}:
\begin{equation}
\lim_{d \to \infty} p_L(d, p) =
\begin{cases} 
0, & \text{if } p < p_{\text{th}}, \\
c > 0, & \text{if } p > p_{\text{th}}.
\end{cases}
\end{equation}

\begin{theorem}
    Let $g_n: \mathcal{S} \rightarrow [0,1], g_n(s)=\underset{\lambda \in \Lambda}{\mathrm{max}} \ p(\lambda | s)$ be the mapping of a syndrome to its maximum coset probability for an $n$-qubit \gls{qec} code. Then, for $p<p_{th}$ we have
    \begin{equation}
        g_n \overset{P}{\rightarrow} 1 \ \mathrm{as} \ n \rightarrow \infty.
    \end{equation}
\end{theorem}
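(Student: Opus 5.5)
We deduce the statement from the threshold theorem, read for the optimal decoder. By Eq.~\eqref{eq:p-succ} with no post-selection, i.e.\ $f(c)=\mathcal{S}$ for $c<0.25$ so that $\tilde p(s)=p(s)$, the \gls{mld} logical error rate of the $n$-qubit code is
\begin{equation}
p_L^{\mathrm{MLD}}(n,p) = \sum_{s} p(s)\,\bigl(1-g_n(s)\bigr) = \mathbb{E}\bigl[1-g_n\bigr].
\end{equation}
Since \gls{mld} is the optimal decoder, its logical error rate is at most that of any decoder used to establish the threshold. Hence for $p<p_{th}$ the threshold theorem gives $\mathbb{E}[1-g_n]\to 0$ as $n\to\infty$. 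We take $p_{th}$ to be the optimal (\gls{mld}) threshold. If it is defined through a suboptimal decoder, the inequality $p_L^{\mathrm{MLD}}\le p_L^{\mathrm{dec}}$ still transfers the vanishing of the error rate, so either reading works.

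The second step converts this $L^1$ convergence into convergence in probability via Markov's inequality. The random variable $1-g_n$ is nonnegative; indeed $1-g_n\in[0,3/4]$. So for every $\varepsilon>0$,
\begin{equation}
\mathbb{P}\bigl(|g_n-1|>\varepsilon\bigr) = \mathbb{P}\bigl(1-g_n>\varepsilon\bigr) \le \frac{\mathbb{E}[1-g_n]}{\varepsilon} = \frac{p_L^{\mathrm{MLD}}(n,p)}{\varepsilon} \longrightarrow 0,
\end{equation}
which is exactly $g_n\overset{P}{\rightarrow}1$. Here $\mathbb{P}$ is the syndrome measure induced by the noise at rate $p$ on the distance-$d$ (equivalently $n$-qubit) member of the code family. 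Note that this argument actually runs in the opposite direction to the remark after Thm.~\ref{thm:abort-threshold}: $\mathbb{E}g_n\to1$ is the primary fact and implies convergence in probability. Because $g_n$ is bounded, the two are in fact equivalent, so I would phrase the proof to make that clear.

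The main obstacle is not the inequality but the careful identification in the first step. We must check that the logical error rate appearing in the threshold theorem is the unconditional per-shot failure probability $\sum_s p(s)\,p(\mathrm{fail}\mid s)$. We must also check that under \gls{mld} this equals $1-\mathbb{E}g_n$, which follows from Cor.~\ref{cor:bound}'s computation $p(\mathrm{fail}\mid s)=1-\max_\lambda p(\lambda\mid s)$. Finally, the limit must be taken along a fixed code family with the probability space changing with $n$, so ``convergence in probability'' has to be understood as $\mathbb{P}_n(|g_n-1|>\varepsilon)\to0$. I would state this convention explicitly, since the $g_n$ live on different syndrome spaces. As an immediate consequence, for any fixed $c<1$ the abort probability $p_{\mathrm{abort}}(c,f_{\mathrm{MLD}})=\mathbb{P}_n(g_n\le c)$ vanishes below threshold.
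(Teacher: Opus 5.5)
Your proof is correct and is essentially the paper's argument. The paper likewise writes $1-p_\mathrm{succ}=\sum_s p(s)\,(1-g_n(s))$, uses the threshold theorem to make it small, and bounds $\sum_{s:\,1-g_n(s)>c}p(s)$ by $c^{-1}$ times it, which is Markov's inequality written out by hand. Your extra care is all sound: invoking MLD optimality to cover a threshold defined by a suboptimal decoder, treating the syndrome spaces that change with $n$, and noting that for bounded $g_n$ convergence in probability and $\mathbb{E}g_n\to 1$ are equivalent, which contradicts the paper's remark after the theorem.
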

\begin{proof}
    Let $p<p_{th}$, $c>0$, and $\epsilon > 0$. From the threshold theorem, we know $p_L \rightarrow 0$ as $n \rightarrow \infty$. Equivalently, $p_\mathrm{succ} \rightarrow 1$ as $n \rightarrow \infty$. Thus, choose $\epsilon_1 = \delta \epsilon$ and let $N \in \mathbb{N}$ such that $\forall n>N$
    \begin{equation}
    \begin{aligned}
        \epsilon_1 &> |p_\mathrm{succ} - 1| \\
        &= 1 - \sum_s \underset{\lambda \in \Lambda}{\mathrm{max}} \ p(\lambda|s) \\
        &= \sum_s p(s) \bigl(1 - g_n(s) \bigr) \\
        &> c \sum_{s: \ 1-g_n(s)>c} p(s) + \sum_{s: \ 1-g_n(s)\leq c} p(s) \bigl(1 - g_n(s) \bigr) \\
        &\geq c \sum_{s: \ 1-g_n(s)>c} p(s).
    \end{aligned}
    \end{equation}
    Thus, we have
    \begin{equation}
        P(|g_n - 1| > c) = \sum_{s:1-g_n(s)>c} p(s) < \frac{\epsilon_1}{c} = \epsilon.
    \end{equation}
\end{proof}
Thus, for any post-selection parameter $c$, the abort probability vanishes as the probability of syndromes with $g_n < c$ vanishes, as convergence in probability can be rewritten as
\begin{equation}
    \forall c>0: \ P(|g_n - 1| > c) \rightarrow 0 \ \mathrm{as} \ n \rightarrow \infty.
\end{equation}

With that, we identify a region where \gls{mld} post-selection is scalable whereas the modified \gls{mwpm} decoder from Ref.~\cite{smith_mitigating_2024} is not scalable, as also mentioned in Ref.~\cite{english_thresholds_2024}.
\begin{remark}
    Smith et al.~\cite{smith_mitigating_2024} introduce a post-selecting decoder by modifying the \gls{mwpm} algorithm to return both the most likely and the next-most likely, logically inequivalent correction, which have a weight difference $\Delta$. Then syndromes are discarded if $1 - \frac{\Delta}{d} > c$. The proposed post-selection scheme is further described in Sec.~\ref{sec:other-post-selection}. As this post-selection scheme is not based on the \gls{mld} cosets, thresholds of logical error rate and abort probability are different. Thus, the decoder is only scalable if $p<p_\mathrm{th, log}$ and $p<p_\mathrm{th, abort}$. The scalable region of optimal post-selection, i.e.~our scheme, compared to the modified \gls{mwpm} decoder is illustrated in Fig.~\ref{fig:mwpm-comparison}. 
\end{remark}

\begin{figure}
    \centering
    \includegraphics[width=\linewidth]{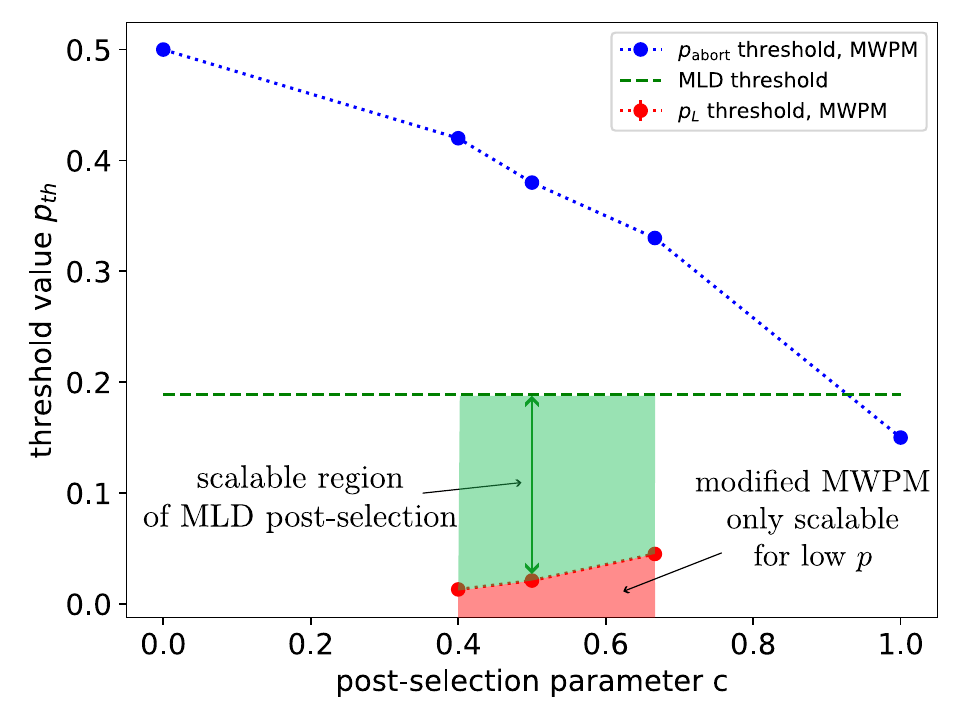}
    \caption{\justifying Illustration of the scalable region of the modified MWPM decoder compared to a post-selected ML decoder.}
    \label{fig:mwpm-comparison}
\end{figure}

Comparing to $\frac{p_L}{p_\mathrm{abort}}$ of our post-selected neural network decoder, we conjecture that suboptimal post-selection schemes do not show a horizontal line, i.e.~discarding and accepting logical errors is not balanced across different noise strengths. E.g.~for the modified \gls{mwpm} decoder \cite{smith_mitigating_2024}, we expect $\frac{p_L}{p_\mathrm{abort}}$ to converge to zero for $p_\mathrm{th,abort}<p<p_\mathrm{th,log}$. 

\section{Finite-size effects of post-selection} \label{app:finite-size-effects}
\begin{figure}
    \centering
    \includegraphics[width=0.8\linewidth]{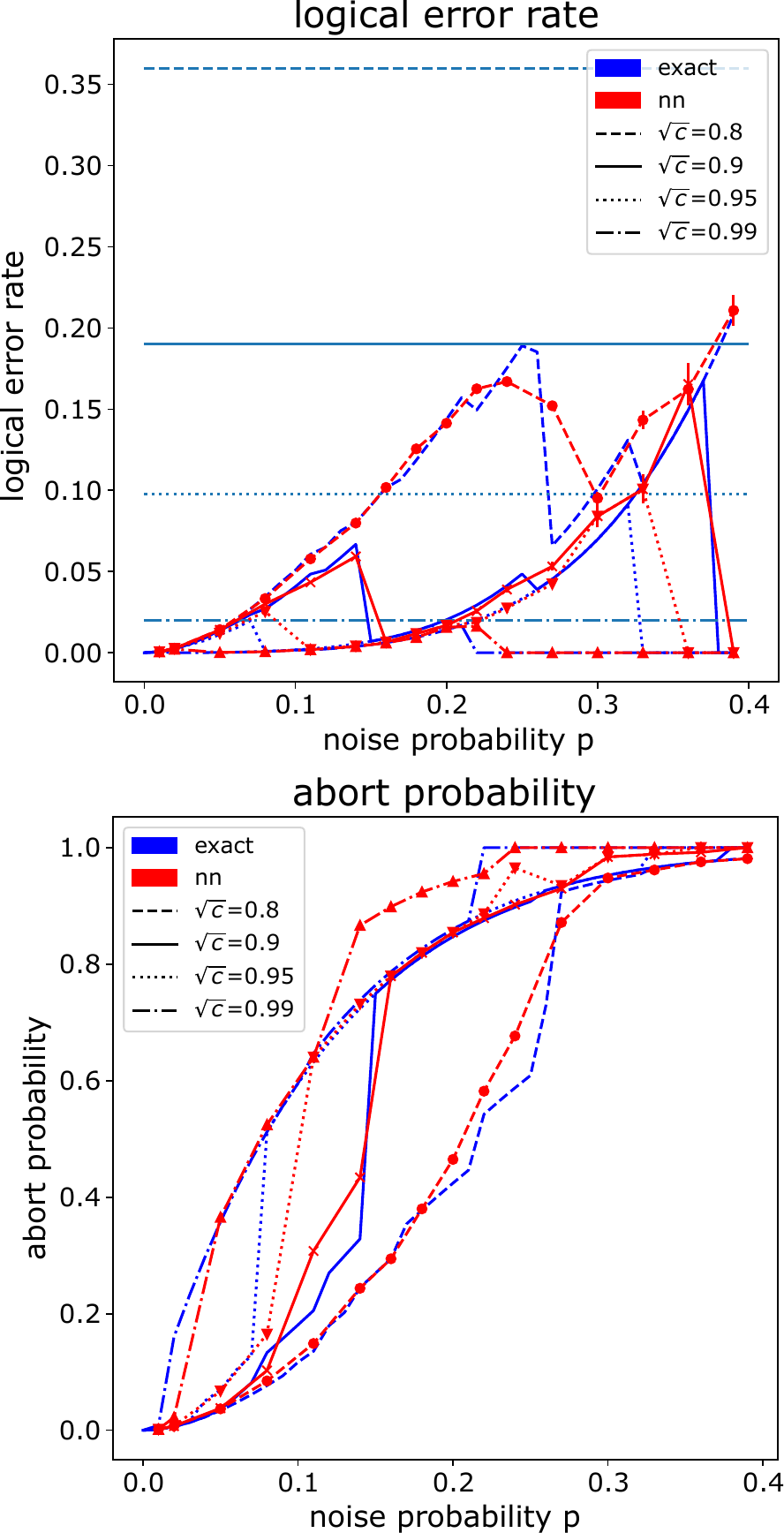}
    \caption{\justifying Finite size effects of post-selection. Comparison of the post-selected neural network decoder for distance $d=3$ in the code capacity setting, compared to the theoretical expectation, obtained from calculating $p(\lambda | s)$ for all possible syndromes $s$. Strong finite size effects are apparent in both the network output and the theoretical expectation.}
    \label{fig:compare-finite-size-effects}
\end{figure}
The abort probability and logical error rate under post-selection exhibit strong finite-size effects. As the physical error rate~$p$ increases, the coset probabilities shift, causing certain syndromes to be discarded. This leads to abrupt jumps in the abort probability and sudden drops in the logical error rate. While these effects are already visible in Fig.~\ref{fig:post-selection-comparison}, we further illustrate them using our $d=3$ neural network decoder. In particular, we compare the theoretically expected abort probability and logical error rate under split \gls{mld} post-selection with the results obtained from our neural network decoder in Fig.~\ref{fig:compare-finite-size-effects}.

\section{Other post-selecting decoders} \label{sec:other-post-selection}
Soft post-selection schemes have previously been investigated for the surface code under depolarizing noise by Smith et al.~\cite{smith_mitigating_2024} and for the toric code under bit-flip noise by English et al.~\cite{english_thresholds_2024}. Both works report a strong dependence of the threshold on the post-selection parameter. In the following, we briefly review their approaches and explain why our findings do not exhibit this behavior.

Smith et al.~\cite{smith_mitigating_2024} use a modified version of the \gls{mwpm} decoder, which returns not only the minimum weight correction $C$, but also all logically inequivalent minimum weight corrections $C_L$ satisfying $C C_L = L$ for logical operators $L$. Let $\Delta$ be the weight difference between the two most likely logically inequivalent corrections. Then the syndrome is discarded if
\begin{equation}
    1 - \frac{\Delta}{d} > c, \label{eq:mwpm-post-selection}
\end{equation}
for a code of distance $d$ and post-selection parameter $0 < c \leq 1$. We note that this post-selection rule is independent of the noise strength of the depolarizing channel, as only the weight difference between the two most likely inequivalent corrections are considered. 
For post-selection based on the probability of the \gls{mld} cosets, the coset probabilities depend on the noise strength $p$ of the error channel. This implies, formulating the former scheme in terms of \gls{mld} post-selection, the post-selection parameter $c$ would be noise-dependent. Since the post-selection condition Eq.~\eqref{eq:mwpm-post-selection} enforces a constant scaling, we conjecture that this comes at the cost of a threshold that depends on the post-selection parameter $c$. However, since this criterion does not incorporate the actual likelihoods of the error configurations under the physical noise model, it cannot achieve true maximum-likelihood performance and is therefore suboptimal.

In Smith et al.~\cite{english_thresholds_2024}, their post-selection rule is based on a heuristic scheme. A magnetization $M=\frac{1}{|\mathcal{S}|} \sum_i s_i$ is defined on the toric code syndromes by summing over all stabilizers of the syndrome. A syndrome $S$ is then discarded if
\begin{equation}
    M(S) < -1 + 2c,
\end{equation}
where excited stabilizers are assigned a value of $-1$ and non-excited a value of $+1$. Therefore, syndromes with a sufficiently high density of excited stabilizers are discarded, based on the heuristic assumption that decoding becomes too complicated if too many stabilizers are excited. For the toric code, the expectation of the syndrome mean $M$ is independent of the code distances and that there is a one-to-one correspondence between expected syndrome mean and noise strength of the bit-flip channel, given by
\begin{equation}
    M(S) = 1 - 2 \bigl( 4 p (1-p)^3 + 4 (1 - p) p^3 \bigr).
\end{equation}
Hence, choosing a post-selection parameter $c$ effectively determines the threshold $p_{th}$ of the abort rate via
\begin{equation}
    -1 + 2c = 1 - 2 \bigl( 4 p_{th} (1-p_{th})^3 + 4 (1 - p_{th}) p_{th}^3 \bigr).
\end{equation}
As the code distance determines the variance of the syndrome mean, the abort probability exhibits a threshold behavior that depends on the post-selection parameter $c$, as shown in Ref.~\cite{english_thresholds_2024}. This leads to a logical error rate threshold that also varies with $c$. This approach is likewise suboptimal, as the magnetization-based criterion is only a heuristic proxy for decoder confidence and does not explicitly account for the full statistical structure of the noise or syndrome correlations.

\section{Scaling of the post-selected maximum likelihood decoder}
\begin{figure*}
    \centering
    \includegraphics[width=\linewidth]{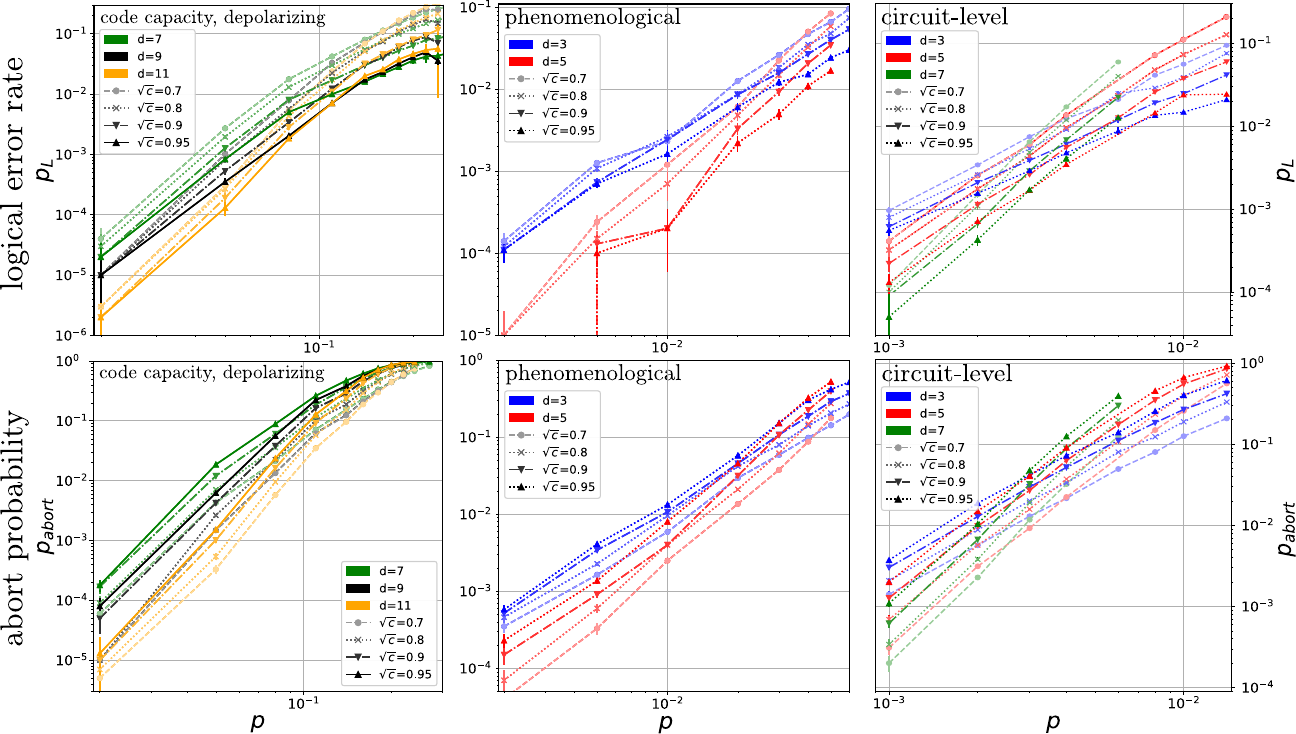}
    \caption{\justifying Decoding performance of our NN decoder with soft post-selection based on output probabilities. Complementary to Fig.~\ref{fig:abort-results} we show the scaling of logical error rate (upper panels) and abort probability (lower panels). Scaling of both logical error rate and abort probability depend on noise rate and post-selection strength.}
    \label{fig:abort-scaling}
\end{figure*}
Fig.~\ref{fig:abort-scaling} shows the scaling of both logical error rate and abort probability on log-log axes, complementing the results in Fig.~\ref{fig:abort-results}. Unlike standard quantum computation, the scaling is no longer constant due to the applied \gls{mld} post-selection scheme. In contrast to Ref.~\cite{smith_mitigating_2024}, we do not impose fixed scaling behavior. Instead, both the logical error rate and abort probability exhibit threshold behavior that aligns with the threshold observed in standard quantum computation.

\section{QEC-circuits} \label{sec:circuits}
In Figs.~\ref{fig:depolarizing-circuit}, \ref{fig:phenomenological-circuit}, and \ref{fig:circuit-level-circuit} we show the exact circuit for the readout of stabilizers and logical operators according to Fig.~\ref{fig:CI-setting-msmts} in the code capacity setting, for phenomenological noise, and for circuit-level noise.

\begin{figure*}[ht]
    \centering
    \rotatebox{90}{\includegraphics[width=0.85\textheight]{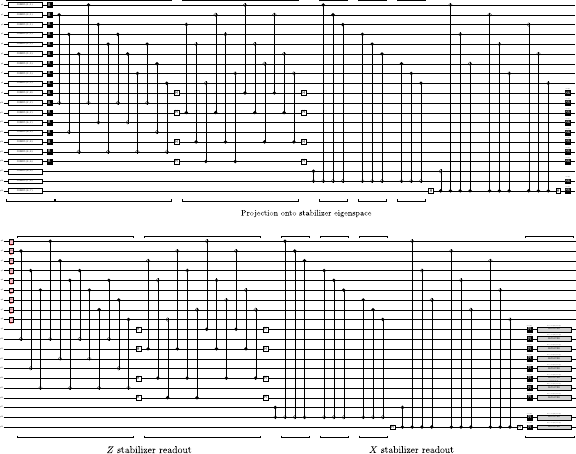}}
    \caption[Circuit of a QEC cycle under depolarizing noise.]{Circuit of a \gls{qec} cycle under depolarizing noise.}
    \label{fig:depolarizing-circuit}
\end{figure*}

\begin{figure*}[ht]
    \centering
    \rotatebox{90}{\includegraphics[width=0.95\textheight]{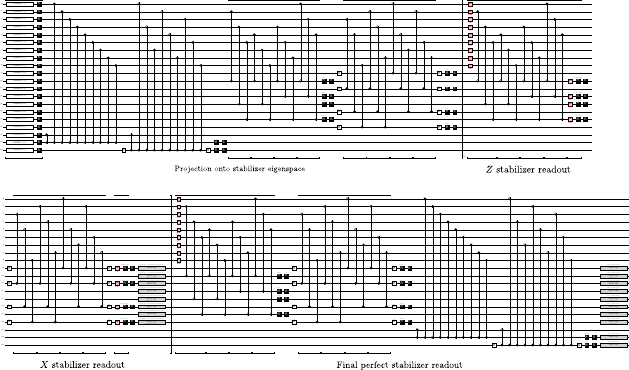}}
    \caption[Circuit of a QEC cycle under phenomenological noise.]{Circuit of a \gls{qec} cycle under phenomenological noise.}
    \label{fig:phenomenological-circuit}
\end{figure*}

\begin{figure*}[ht]
    \centering
    \rotatebox{90}{\includegraphics[width=0.95\textheight]{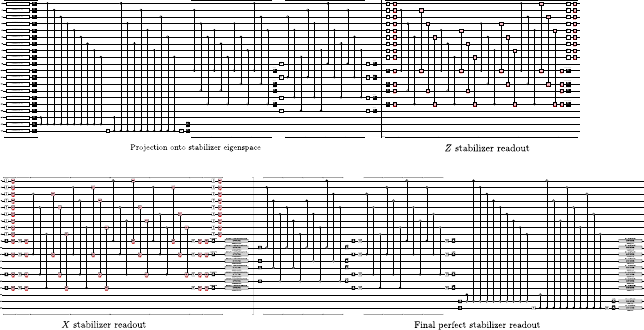}}
    \caption[Circuit of a QEC cycle under circuit-level noise.]{Circuit of a \gls{qec} cycle under circuit-level noise.}
    \label{fig:circuit-level-circuit}
\end{figure*}

\section{Hyperparameters}\label{app:hyperparameters}

The hyperparameters used in this work are shown in Table~\ref{tab:hyperparameters}.

\clearpage
\begin{table*}
    \centering
    \caption[Noise strengths for training the network.]{\justifying Noise strengths for training the network. Shown are the noise rates for depolarizing noise, phenomenological noise and single-parameter circuit-level noise. For larger distances, the model is only trained for noise rates slightly exceeding the observed threshold.}
    \begin{tabular}{l|l}
        \hline
        noise model & noise strength $p$ \\\hline \hline
        depolarizing & $\{0.01, 0.02, 0.05, 0.08, 0.11, 0.14, 0.16, 0.18, 0.2, 0.22, 0.24, 0.27, 0.3, 0.33, 0.36, 0.39 \}$ \\
        phenomenological &$ \{ 0.002, 0.005, 0.01, 0.02, 0.03, 0.04, 0.05, 0.06, 0.08, 0.1, 0.12 \}$ \\
        circuit-level & $\{ 0.001, 0.002, 0.003, 0.004, 0.006, 0.008, 0.01, 0.014 \}$ \\
        \hline
    \end{tabular}
    \label{tab:noise-strengths}
\end{table*}
\begin{table*}
    \centering
    \caption[Network and training hyperparameters.]{\justifying Network and training hyperparameters. The training follows a cosine decay schedule, after which a warm restart is applied depending on the convergence achieved during the previous cycle.}
    \begin{tabular}{l|l|c}
        \hline
        Module & Hyperparameter & Value \\\hline\hline
        Optimizer & Method & AdamW \cite{loshchilov_decoupled_2019} \\
        & Initial learning rate & $10^{-4}$ - $10^{-5}$ \\ 
        & Final learning rate & $10^{-8}$ \\
        & Scheduler & Cosine decay (opt.~warm restarts) \\
        & Weight decay & $10^{-5}$ \\
        & Batch size & $1000$ \\
        & Epochs & $200$ - $750$ before opt.~restart \\
        & Samples per epoch & $1\,000\,000$ \\\hline
        QEC-Transformer & Layers & $3$ \\ 
        & Embedding dimension $D$ & $128$ ($256$) \\
        & Feedforward dimension & $128$ ($256$) \\
        & Heads & $8$ \\
        & Stabilizer positional encoding & Fixed, 2D/3D \\\hline
        Patched Encoding & Patch distance & $3$ \\ 
        & Embedding convolutions & $3$ \\
        & Kernel size & $2$ \\
        & Stride & $2$ \\ \hline 
    \end{tabular}
    \label{tab:hyperparameters}
\end{table*}

\end{document}